\providecommand{\U}[1]{\protect\rule{.1in}{.1in}}
\newtheorem{thm}{Theorem}\crefname{thm}{Theorem}{Theorems}
\newtheorem{lem}[thm]{Lemma}\crefname{lem}{Lemma}{Lemmas}
\crefname{prp}{Proposition}{Propositions}
\newtheorem{cor}[thm]{Corollary}\crefname{cor}{Corollary}{Corollaries}
\crefname{prb}{Problem}{Problems}
\newtheorem{dfn}[thm]{Definition}\crefname{dfn}{Definition}{Definitions}
\crefname{section}{Section}{Sections}
\crefname{appendix}{Appendix}{Appendices}
\let\oldref\ref
\renewcommand{\ref}[1]{(\oldref{#1})}
\DeclareMathOperator{\tr}{tr}
\DeclareMathOperator{\id}{id}
\DeclareMathOperator{\supp}{supp}
\newcommand{\R}{\mathbb{R}}
\newcommand{\N}{\mathbb{N}}
\newcommand{\state}[1]{{\vert #1 \rangle \langle #1\vert}}
\newcommand{\abs}[1]{{\left\vert{#1}\right\vert}}
\newcommand{\norm}[1]{{\left\Vert{#1}\right \Vert}}
\newcommand{\ot}{\otimes}
\newcommand{\red}[1]{{\textcolor{red}{[#1]}}}
\def\endthebibliography{
  \def\@noitemerr{\@latex@warning{Empty `thebibliography' environment}}%
  \endlist
}
\begin{document}
\title{Quantum Channel Capacities Per Unit Cost}
\author{Dawei Ding,  Dmitri S. Pavlichin, and  Mark M. Wilde\thanks{Dawei Ding is with the Stanford Institute for Theoretical Physics, Stanford University, Stanford, California 94305, USA.  Dmitri S. Pavlichin is with the 
Department of Applied Physics, Stanford University, Stanford, California 94305, USA. Mark M.~Wilde is with the 
Hearne Institute for Theoretical Physics, the Department of
Physics and Astronomy, and the Center for Computation and Technology, Louisiana State
University, Baton Rouge, Louisiana 70803, USA. Copyright (c) 2017 IEEE. Personal use of this material is permitted.  However, permission to use this material for any other purposes must be obtained from the IEEE by sending a request to pubs-permissions@ieee.org.}}

\date{\today}
\maketitle

\begin{abstract}
  Communication over a noisy channel is often conducted in a setting in which different input symbols to the channel incur a certain cost. For example, for bosonic quantum  channels, the cost associated with an input state is the number of photons, which is proportional to the energy consumed. In such a setting, it is often useful to know the maximum amount of information that can be reliably transmitted per cost incurred. This is known as the capacity per unit cost. In this paper, we generalize the capacity per unit cost to various communication tasks involving a quantum channel such as classical communication, entanglement-assisted classical communication, private communication, and quantum communication. For each task, we define the corresponding capacity per unit cost and derive a formula for it analogous to that of the usual capacity. Furthermore, for the special and natural case in which there is a zero-cost state, we obtain expressions in terms of an optimized relative entropy involving the zero-cost state. For each communication task, we construct an explicit pulse-position-modulation coding scheme that achieves the capacity per unit cost. Finally, we compute capacities per unit cost for various bosonic Gaussian channels and introduce the notion of a blocklength constraint as a proposed solution to the long-standing issue of infinite capacities per unit cost. This motivates the idea of a blocklength-cost duality,  on which we elaborate in depth.
\end{abstract}

\begin{IEEEkeywords}
capacity per unit cost, bosonic Gaussian channels, quantum communication, blocklength-cost duality
\end{IEEEkeywords}

\section{Introduction}
The main concerns of information theory are determining limitations on information processing and how to attain them \cite{book1991cover}. In the task of communication over a noisy channel, for example, the usual goal is to compute the capacity of the channel, which is informally defined as the maximum number of bits that can be reliably transmitted over the channel divided by the total number of channel uses. As the number of channel uses is often directly proportional to the overall transmission time, the capacity measures the maximum rate of information transmission per unit time. Hence, the capacity is a limit to communication when given a certain \textit{time constraint}.

However, we need not restrict ourselves to time constraints. We can seek limits to communication with respect to other types of constraints, and in some practical settings these other constraints are more relevant. For example, we can imagine a setting in which a satellite deep in space is transmitting information back to Earth~\cite{pierce1981capacity}. In this case, the amount of time taken to transmit the information is not as much of a concern as is the finite amount of battery energy. Given the time and money taken to get the satellite deep into space, the receivers on Earth can afford to wait; but once the batteries are consumed, the satellite is no longer useful in the absence of an external energy source, as is often the case when deep in space. Hence, in this case, what is relevant here is an \textit{energy constraint}. This would also be relevant for deep sea communication, or in general, communication with a remote probe in a location difficult to reach. These were classic motivating examples for studying cost constraints in classical information theory, but they also naturally motivate studying cost constraints in quantum information theory. For instance, if the satellite is transmitting quantum information, a recently realized technology~\cite{yin2017satellite,liao2017satellite}, we would like to develop theory for optimizing information transmission with respect to that cost.

In general, we would like to consider constraints with respect to a certain \textit{cost} associated with transmission. In classical information theory, communication limits with respect to costs other than time were first considered in~\cite{reza1961introduction,shannon1948mathematical,pierce1978optical,golay1949note,gallager87energylimited}. Such ideas have appeared in quantum settings as well, most notably for quantum bosonic channels~\cite{GHT11,guha2011structured,guha2011quantum,DBEM11} where the relevant cost is the photon number. Now, just as the communication limit to a time constraint is the capacity, the corresponding communication limit to a general cost constraint should, informally, be the maximum amount of information that can be reliably transmitted divided by the total cost incurred in transmission. This is the \textit{capacity per unit cost}, which was introduced and extensively studied for classical channels in~\cite{verdu1990channel}. After the development of quantum information theory, capacity per unit cost was extended to channels with classical binary inputs and quantum outputs \cite{csiszar2007limit}.

Capacity per unit cost is relevant in many different settings, primarily when one is concerned with constraints other than time, as in the satellite setting mentioned above. However, it is also relevant even when one is still concerned with a time constraint. This is when \textit{input states have different time durations}. This was pointed out in~\cite{reza1961introduction,shannon1948mathematical}. Indeed, the notion of cost is very general and can appear in many different settings --- it can even be relevant for questions in quantum gravity. For instance, we could give a limit to the amount of quantum information transmitted via Hawking radiation emitted from a black hole using bounds on the mass-energy of the black hole. This would involve quantum channel capacity per unit cost, with mass-energy as the relevant cost.

In this paper, we generalize the capacity per unit cost to various communication tasks involving quantum channels, including classical communication, entanglement-assisted classical communication, private communication, and quantum communication. To do so, we first recall from classical information theory where cost was quantified with a \textit{cost function}, used in~\cite{verdu1990channel}, which associates to each input symbol a non-negative real number. In the quantum case, we employ a \textit{cost observable} $G \ge 0$, considered for cost-constrained capacities in \cite{holevo2003entanglement,H04}, in order to quantify the cost of transmitting a quantum state. Hence, any cost that can be described by a positive semidefinite observable can be considered in our framework. As mentioned above, we could consider channel uses, energy, photon number, or even a linear combination of these if that is what one is interested in, as is often the case with many practical optimization problems. The cost observable is a very natural generalization of the cost function, as it preserves two key properties. One is positive semidefiniteness, which we enforce by requiring $G \ge 0$. Another is additivity across channel uses. This was implicit in the definition of the classical cost function, that the total cost incurred across multiple channel uses is the sum of the costs incurred in individual channel uses. We can enforce this by defining the cost observable for an input state to $n$ channel uses to be
\begin{equation}
  G_n \equiv \sum_{j=1}^{n} I^{\ot j-1} \ot G \ot I^{\ot n-j}. \label{eq:cost-observable}
\end{equation}
By inspection, the total cost will be additive across the channel uses. Note that the classical cost function can be embedded into a cost observable by letting the spectrum of the latter be the image of the former.

Using this prescription, we can then generalize the results in~\cite{verdu1990channel} to classical communication over quantum channels by giving a formula for the capacity per unit cost of the form
\begin{equation}
  \sup_{\{p_X, \rho_A(x)\}} \frac{I(X;B)_{\rho}}{\tr[G \bar \rho_A]},\label{eq:cap-per-unit-cost-1st-time}
\end{equation}
where $A,B$ denote quantum systems, $\bar \rho_A \equiv \sum_{x}^{} p_X(x) \rho_A(x)$, and the mutual information
$I(X;B)$ is evaluated with respect to a classical--quantum state $\rho_{XB}$ associated with the channel. For simplicity, we focus on  channels with  a single-letter cost-constrained classical capacity, including entanglement-breaking channels for example \cite{S02,Shirokov2006,holevo2008entanglement}, but the developments easily generalize beyond this case and we discuss this point later on. The formula in \cref{eq:cap-per-unit-cost-1st-time} is derived via the known formula \cite{H04} for the cost-constrained classical capacity, also known as the classical capacity cost function, of a quantum channel. This is the capacity with an average cost constraint over all the channel uses. Hence, the ratio of the cost-constrained capacity to the average cost constraint is achievable as a capacity per unit cost. Conversely, this is the highest possible rate per unit cost since any higher rate would imply that we could achieve a higher rate per channel use than the cost-constrained capacity. We can then write the cost-constrained capacity as an optimized mutual information, and thus~\cref{eq:cap-per-unit-cost-1st-time} follows. Note that the formula reduces to the regular capacity when $G$ is the identity operator. This is intuitive since in this case the cost is the number of channel uses, and so every quantum state incurs unit cost.

Now, more interesting results come about in the special case in which there is a zero-cost quantum state, that is, some state $\psi^0$ such that $\tr[G \psi^0] = 0$. This is a natural setting to consider, given that transmitting a zero-cost state often physically corresponds to not actively sending anything through the channel. For example, for a bosonic channel, the zero-cost state is the vacuum. Now, by the positive semi-definiteness of $G$, without loss of generality we can take $\psi^0$ to be pure. In this case, we find that the capacity per unit cost reduces to the following expression:
\begin{equation}
  \sup_{\psi \neq \psi^0} \frac{D(\mathcal{N}(\psi) \Vert \mathcal{N}(\psi^0))}{\bra{\psi} G \ket{\psi}},
\end{equation}
where $D$ denotes the quantum relative entropy \cite{U62} and the supremum is with respect to pure states $\psi$. The above expression is intuitive given the fact that pulse-position-modulation (PPM) protocols achieve the capacity per unit cost when there is a zero-cost state~\cite{verdu1990channel}. Such PPM protocols encode information into the position of a $\psi$-pulse amidst a baseline of zero-cost $\psi^0$ states. Hence, for these protocols, we expect the relevant variables for computing the capacity per unit cost to be the distinguishability of the states $\mathcal{N}(\psi)$ and $ \mathcal{N}(\psi^0)$ in addition to the cost of $\psi$.
We then extend these results to various other communication tasks over a quantum channel, such as entanglement-assisted, private, or quantum communication. See Sections~\ref{sec:EA-comm}, \ref{sec:priv-comm}, and \ref{sec:quant-comm} for details. We apply these formulas to various quantum Gaussian channels in \cref{sec:gaussian}. In \cref{sec:blockConstraint} we introduce the notion of a blocklength-constrained capacity per unit cost, analogous to that of a cost-constrained capacity, and we derive a formula for it. We find that a blocklength constraint can ensure that the capacity per unit cost is finite and thus can play a similar role to a cost constraint for the usual capacity. This motivates the notion of a \textit{blocklength-cost duality}, which we develop with various examples and concepts.

\textit{Related Work}: After deriving many of the results in this paper and while drafting this manuscript, a related work appeared on the quant-ph arXiv \cite{J17}. In \cite{J17}, the author considers classical capacity per unit cost for particular channels that accept a general classical input symbol and  output a quantum state (known as classical--quantum channels in the literature), thus generalizing the approach in~\cite{csiszar2007limit}. In particular, a cost function is considered to quantify the cost of classical input symbols. We note here that our paper generalizes this setup to the fully quantum case in which there is a cost observable and the channels considered have quantum inputs and quantum outputs. 

\section{Preliminaries}

For simplicity, we restrict our developments to finite-dimensional Hilbert spaces, with the exception of Section~\ref{sec:gaussian}, which applies to quantum Gaussian channels. Let $\mathcal{H}_A$ and $\mathcal{H}_B$ denote finite-dimensional Hilbert spaces, and let $\mathcal{L}(\mathcal{H}_A)$ and $ \mathcal{L}(\mathcal{H}_B)$ denote spaces of linear operators acting on those respective Hilbert spaces. We denote by $\mathcal{N}_{A \to B}: \mathcal{L}(\mathcal{H}_A) \to \mathcal{L}(\mathcal{H}_B)$ a quantum channel, defined to be a completely positive and trace-preserving map. By Stinespring's dilation theorem~\cite{stinespring1955positive,paulsen2002completely}, $\mathcal{N}_{A \to B}$ can be extended to an isometric channel $\mathcal{U}_{A \to BE}: \mathcal{L}(\mathcal{H}_A) \to \mathcal{L}(\mathcal{H}_B \ot \mathcal{H}_E)$, where $\mathcal{H}_E$ is some other finite-dimensional Hilbert space and
$\mathcal{N}_{A \to B} = \tr_E \circ \ 
\mathcal{U}_{A \to BE}$.

Now, let $G \in \mathcal{L}(\mathcal{H}_A)$ be a positive semi-definite operator acting on $\mathcal{H}_A$. Throughout, we refer to $G$ as the {\it cost observable}. This is the standard cost constraint used in applications of quantum Shannon theory (see, for instance,~\cite{holevo2003entanglement,H04,holevo2008entanglement,holevo2013quantum}). As mentioned above, this is also a quantum generalization of a classical cost function~\cite{verdu1990channel}, which is a map from the input alphabet to the non-negative reals. However, note that unlike in the classical case, we can use quantum codewords that are not eigenstates of the cost observable. This might even be necessary to achieve the capacity of a quantum channel. For example, for single-mode phase-insensitive bosonic Gaussian channels, the relevant cost observable is the photon number operator, but it is known that coherent states, not number states, achieve the classical capacity~\cite{giovannetti2004classical,GHG15,giovannetti2014ultimate}.

Lastly, given two quantum states $\rho, \sigma \in \mathcal{S}(\mathcal{H}_A)$, where $\mathcal{S}(\mathcal{H}_A) \subsetneq \mathcal{L}(\mathcal{H}_A)$ denotes the set of positive semi-definite operators with unit trace, a quantum hypothesis test with $N$ copies is a binary positive-operator valued measure (POVM) $\left\{ \Lambda_N, I- \Lambda_N \right\}$ that distinguishes between $N$ copies of the two states. The two states to be distinguished are called the null and alternative hypotheses, respectively. Now, there are two possible errors that can occur. Taking $\Lambda_N$ to be the measurement result that declares the state to be $\rho^{\otimes N}$, the error probabilities are given by
\begin{align}
  & \alpha_N(\Lambda_N) \equiv \tr[(I - \Lambda_N) \rho^{\ot N} ],\\
  & \beta_N(\Lambda_N) \equiv \tr[\Lambda_N\sigma^{\ot N} ].
\end{align}
The errors are called the Type~I and Type~II errors, respectively. Then, for some $\varepsilon \in (0,1)$, we can define the following quantity:
\begin{equation}
  \label{eq:minType2}
  \beta^*_N(\varepsilon) \equiv \inf_{\Lambda_N} \left\{ \beta_N(\Lambda_N) \vert \alpha_N(\Lambda_N) \le \varepsilon \right\}.
\end{equation}
That is, it is the lowest Type~II error possible, given that the Type~I error does not exceed $\varepsilon$. By Quantum Stein's Lemma~\cite{hiai1991proper,ogawa2000strong}, for all $\varepsilon\in(0,1)$, 
\begin{align}
  \label{eq:quantumStein}
  \lim_{N \to \infty} -\frac{1}{N} \log_2 \beta_{N}^*(\varepsilon) & =  D(\rho \Vert \sigma),
\end{align}
where the quantum relative entropy 
$D(\rho \Vert \sigma)$ is defined as \cite{U62}
\begin{equation}
  D(\rho \Vert \sigma)  \equiv \tr[\rho(\log_2 \rho - \log_2 \sigma)]
\end{equation}
whenever $\operatorname{supp}(\rho)\subseteq 
\operatorname{supp}(\sigma)$ and it is equal to $+\infty$ otherwise. 

\section{Classical Communication}

We first consider the case of unassisted classical communication over a quantum channel $\mathcal{N}$. Let $n, M \in \N$, $\nu \in \R_{> 0}$, and $\varepsilon \in [0,1]$. We denote an $(n, M, \nu, \varepsilon)$ code as one with blocklength $n$ and number of messages $M$. Furthermore, denoting the quantum codewords as the density operators $\rho_{A^n}(1),\dots, \rho_{A^n}(M) \in \mathcal{S}(\mathcal{H}_A^{\ot n})$, each quantum codeword satisfies
\begin{equation}
  \tr[ G_n \rho_{A^n}(x)] \le \nu.
\end{equation}
Finally, given that the decoder uses a POVM $\{\Pi_1, \dots, \Pi_M\}$ to guess the message, the average error probability over the possible messages cannot exceed $\varepsilon$, i.e., 
\begin{equation}
  \frac{1}{M} \sum_{m=1}^{M} \tr[(I^{\otimes n} - \Pi_m)\mathcal{N}^{\ot n}(\rho_{A^n}(m)) ]
\leq \varepsilon.
\end{equation}

We recall the definition for the cost-constrained classical capacity of a quantum channel \cite{holevo2003entanglement,H04}:
\begin{dfn}
  \label{dfn:constrainedCapacity}
  Given $\varepsilon \in [0,1)$, and $\beta >0$, a non-negative number $R$ is an $\varepsilon$-achievable rate with average cost not exceeding $\beta$ if for all $\delta >0$, $\exists \, n_0 \in \N$ such that if $n \ge n_0$, then $\exists$ an $(n, M, n \beta, \varepsilon)$ code such that $\frac{\log_2 M}{n} > R-\delta$. Then, $R$ is called achievable if it is $\varepsilon$-achievable for all $\varepsilon \in (0,1)$. The supremum of all achievable rates with average cost not exceeding $\beta$ is denoted $C(\mathcal{N},\beta)$, the classical capacity cost function.
\end{dfn}

For simplicity, for the rest of the section we will consider channels with additive Holevo information at all cost constraints, i.e., 
\begin{equation}
  \forall \beta \ge 0, n \in \mathbb{N}, \quad \chi(\mathcal{N}^{\ot n}, n\beta) = n \chi(\mathcal{N}, \beta),
\end{equation}
where
\begin{equation}
  \chi(\mathcal{N}^{\ot n}, n \beta) \equiv \sup_{\substack{ \left\{ p_X, \rho_{A^n}(x) \right\}\\ \tr[G_n \bar \rho_{A^n}] \le n\beta}} I(X;B^n)_{\rho}
\end{equation}
and 
\begin{align}
  \rho_{XB^n} &= \sum_x p_X(x) \vert x \rangle \langle x \vert_X \otimes \mathcal{N}^{\otimes n}(\rho_{A^n}(x)),\\
  \bar \rho_{A^n} & = \sum_{x}^{} p_X(x) \rho_{A^n}(x).
\end{align}
Similar to the classical case \cite{verdu1990channel}, the classical capacity cost function can be computed as an optimization of a mutual information with respect to input ensembles that satisfy the cost constraint:
\begin{thm}
[\cite{schumacher1997sending,holevo1998capacity,holevo2003entanglement,H04}]
  For a channel with additive Holevo information at all cost constraints, the classical capacity cost function is given by
  \begin{align}
    \label{eq:capacityCost}
    C(\mathcal{N},\beta)  = \chi(\mathcal{N}, \beta) \equiv \sup_{\substack{ \left\{ p_X, \rho_{A}(x) \right\}\\ \tr[G \bar \rho_{A}] \le \beta}} I(X;B)_{\rho}.
  \end{align}
\end{thm}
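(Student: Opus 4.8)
The plan is to establish the two matching bounds $C(\mathcal{N},\beta)\ge\chi(\mathcal{N},\beta)$ (achievability) and $C(\mathcal{N},\beta)\le\chi(\mathcal{N},\beta)$ (converse) separately, with the additivity hypothesis entering decisively in the converse to collapse the multi-letter quantity to its single-letter form. For the converse, I would start from an arbitrary $(n,M,n\beta,\varepsilon)$ code and form the classical--quantum state $\rho_{XB^n}=\frac1M\sum_m \state{m}_X\ot\mathcal{N}^{\ot n}(\rho_{A^n}(m))$ with $X$ uniform on the messages. Applying the Holevo bound together with Fano's inequality to the decoding measurement yields $(1-\varepsilon)\log_2 M - h_2(\varepsilon)\le I(X;B^n)_\rho$, where $h_2$ denotes the binary entropy. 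The key observation is that the average codeword state $\bar\rho_{A^n}=\frac1M\sum_m \rho_{A^n}(m)$ inherits the per-codeword cost constraint, $\tr[G_n\bar\rho_{A^n}]\le n\beta$, so the ensemble $\{1/M,\rho_{A^n}(m)\}$ is feasible for $\chi(\mathcal{N}^{\ot n},n\beta)$. Hence $I(X;B^n)_\rho\le\chi(\mathcal{N}^{\ot n},n\beta)=n\chi(\mathcal{N},\beta)$, where the last equality is exactly the additivity assumption. Dividing by $n$, sending $n\to\infty$ and $\delta\to0$, and finally $\varepsilon\to0$ (permissible since achievability is required for every $\varepsilon\in(0,1)$) gives $R\le\chi(\mathcal{N},\beta)$.

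For achievability I would use a random-coding argument with i.i.d. codewords drawn from $p_X^{\ot n}$ for an ensemble $\{p_X,\rho_A(x)\}$ that nearly attains $\chi(\mathcal{N},\beta-\delta')$ and has average cost $\tr[G\bar\rho_A]\le\beta-\delta'$ for a small slack $\delta'>0$. The standard HSW analysis (via, e.g., a square-root/pretty-good measurement or a sequential decoder) shows that the expected average error probability vanishes provided $\frac{\log_2 M}{n}<I(X;B)_\rho-\delta$. To enforce the per-codeword constraint I would note that the cost of a product codeword splits as $\tr[G_n\rho_{A^n}(x^n)]=\sum_{j}\tr[G\rho_A(x_j)]$, a sum of i.i.d. bounded variables with mean $\le\beta-\delta'$; a concentration (Chernoff/Hoeffding) bound then makes the probability of exceeding $n\beta$ exponentially small. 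Expurgating the few codewords that either violate the cost constraint or carry large conditional error removes only a vanishing fraction, leaving a code of essentially the same rate with maximal error $\le\varepsilon$ and every codeword satisfying $\tr[G_n\rho_{A^n}(x^n)]\le n\beta$. This shows $\chi(\mathcal{N},\beta-\delta')$ is achievable with cost $\le\beta$; letting $\delta'\to0$ and invoking left-continuity of $\beta\mapsto\chi(\mathcal{N},\beta)$ (which follows from its concavity in $\beta$, provable by mixing two feasible ensembles under a classical flag and using concavity of the output entropy) recovers $\chi(\mathcal{N},\beta)$.

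The main obstacle I anticipate is the faithful treatment of the \emph{per-codeword} cost constraint rather than a merely average one: this is what forces the slight backing-off of the cost budget to $\beta-\delta'$, the concentration estimate, and the expurgation step, together with the continuity argument needed to push the budget back up to $\beta$. A secondary but essential point is that the reduction $\chi(\mathcal{N}^{\ot n},n\beta)=n\chi(\mathcal{N},\beta)$ in the converse is precisely where the additivity hypothesis is indispensable; without it one obtains only the regularized formula $C(\mathcal{N},\beta)=\lim_n\frac1n\chi(\mathcal{N}^{\ot n},n\beta)$.
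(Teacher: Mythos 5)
Your proposal is correct: the converse via the Holevo bound and Fano's inequality combined with the additivity hypothesis $\chi(\mathcal{N}^{\ot n}, n\beta) = n\,\chi(\mathcal{N},\beta)$, and the achievability via cost-constrained HSW random coding with a Chernoff/expurgation step to upgrade the average-cost ensemble to per-codeword cost compliance (valid here since $G$ is bounded in the paper's finite-dimensional setting), followed by concavity-based continuity of $\chi$ in $\beta$ to remove the slack $\delta'$, is precisely the standard argument. Note that the paper does not prove this theorem itself but imports it from the cited references \cite{schumacher1997sending,holevo1998capacity,holevo2003entanglement,H04}, and your reconstruction matches the approach taken in those works.
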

We now give the definition of the classical capacity per unit cost. 
\begin{dfn}
  \label{dfn:capacityPerCost}
  Given $\varepsilon \in [0,1)$, a non-negative number $\textbf{\textit{R}}$ is an $\varepsilon$-achievable rate per unit cost if for every $\delta > 0$, $\exists \, \nu_0 > 0$ such that if  $\nu \ge \nu_0$, $\exists$ an $(n, M, \nu, \varepsilon)$ code such that $\log_2 M > \nu (\textbf{\textit{R}}- \delta)$. $\textbf{\textit{R}}$ is achievable if it is achievable for all $\varepsilon \in (0,1)$ and the capacity per unit cost is the supremum of all achievable rates per unit cost, denoted as $\textbf{\textit{C}}(\mathcal{N})$.
\end{dfn}
Observe that the above definition is similar to that of the usual capacity, except that we replace the blocklength $n$ by the cost $\nu$. 
In fact, we can also give an expression for the classical capacity per unit cost in terms of an optimized mutual information:
\begin{thm}
  \label{thm:generalCCost}
  The capacity per unit cost for a channel with  additive Holevo information at all cost constraints is given by
  \begin{equation}
    \textbf{\textit{C}}(\mathcal{N}) = \sup_{\beta > 0} \frac{C(\mathcal{N}, \beta)}{\beta} = \sup_{ \left\{ p_X, \rho_A(x) \right\}} \frac{I(X;B)_\rho}{\tr[G \bar \rho_A]}.
  \end{equation}
\end{thm}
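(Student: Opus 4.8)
The plan is to prove the two equalities in sequence. The second equality, relating $\sup_{\beta>0}\frac{C(\mathcal{N},\beta)}{\beta}$ to $\sup_{\{p_X,\rho_A(x)\}}\frac{I(X;B)_\rho}{\tr[G\bar\rho_A]}$, I expect to be pure bookkeeping once I substitute the formula $C(\mathcal{N},\beta)=\chi(\mathcal{N},\beta)=\sup_{\tr[G\bar\rho_A]\le\beta} I(X;B)_\rho$ from the preceding theorem. The first equality, $\textbf{\textit{C}}(\mathcal{N})=\sup_{\beta>0}\frac{C(\mathcal{N},\beta)}{\beta}$, is the operational heart, and I would establish it by proving matching achievability and converse bounds directly from Definitions~\ref{dfn:constrainedCapacity} and~\ref{dfn:capacityPerCost}.

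For the formula equality I would show both inequalities. For ``$\ge$'', any ensemble with $\tr[G\bar\rho_A]=\beta_0>0$ is feasible for the cost constraint at level $\beta_0$, so $\frac{I(X;B)_\rho}{\tr[G\bar\rho_A]}\le\frac{\chi(\mathcal{N},\beta_0)}{\beta_0}\le\sup_{\beta>0}\frac{\chi(\mathcal{N},\beta)}{\beta}$; taking a supremum over ensembles gives the bound. For ``$\le$'', at a fixed $\beta>0$ a near-optimal ensemble for $\chi(\mathcal{N},\beta)$ has cost $\beta'\le\beta$, and since dividing by the larger number $\beta$ only lowers the ratio, $\frac{\chi(\mathcal{N},\beta)}{\beta}\le\frac{I(X;B)_\rho}{\beta'}=\frac{I(X;B)_\rho}{\tr[G\bar\rho_A]}\le\sup_{\{p_X,\rho_A(x)\}}\frac{I(X;B)_\rho}{\tr[G\bar\rho_A]}$, after which I take a supremum over $\beta$. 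The only caveat is the degenerate case of a zero-cost ensemble carrying positive information, where both sides equal $+\infty$; I would dispatch this separately by noting that then $\chi(\mathcal{N},\beta)$ is bounded below by a positive constant for every $\beta>0$, forcing $\sup_\beta\frac{\chi(\mathcal{N},\beta)}{\beta}=+\infty$ as well.

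For achievability I would fix $\beta>0$ and a rate $R$ achievable with average cost $\le\beta$, and show that $R/\beta$ is achievable per unit cost. Given a target cost $\nu$, I set the blocklength $n=\lfloor\nu/\beta\rfloor$ and simply reuse the cost-constrained $(n,M,n\beta,\varepsilon)$ code: its codewords have cost $\le n\beta\le\nu$, so it is also an $(n,M,\nu,\varepsilon)$ code, and its error probability $\le\varepsilon$ carries over for each $\varepsilon\in(0,1)$. A short padding estimate, using $\log_2 M>n(R-\delta')\ge(\nu/\beta-1)(R-\delta')$, shows that the right-hand side exceeds $\nu(R/\beta-\delta)$ once $\nu$ is large enough (choosing $\delta'<\delta\beta$), which is exactly Definition~\ref{dfn:capacityPerCost}. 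Supremizing over $R<C(\mathcal{N},\beta)$ and then over $\beta$ yields $\textbf{\textit{C}}(\mathcal{N})\ge\sup_{\beta>0}\frac{C(\mathcal{N},\beta)}{\beta}$.

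For the converse I would take any $(n,M,\nu,\varepsilon)$ code and apply the standard Holevo-bound-plus-Fano chain: with the message uniform, $(1-\varepsilon)\log_2 M-h_2(\varepsilon)\le I(X;B^n)\le\chi(\mathcal{N}^{\otimes n},\nu)$, where the last step uses that the codeword ensemble has average cost $\le\nu$. The crucial move, and the step I expect to be the main obstacle, is controlling the ratio $\beta=\nu/n$, which the code is free to pick and which a priori need not converge as $\nu\to\infty$. Additivity of the Holevo information resolves this cleanly: $\chi(\mathcal{N}^{\otimes n},\nu)=n\,\chi(\mathcal{N},\nu/n)=\nu\cdot\frac{\chi(\mathcal{N},\nu/n)}{\nu/n}\le\nu\sup_{\beta>0}\frac{\chi(\mathcal{N},\beta)}{\beta}$, so the per-cost rate is bounded by the supremum no matter which $\beta$ the code realizes. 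Combining with the defining lower bound $\log_2 M>\nu(R-\delta)$ and dividing by $\nu$ gives $R-\delta\le\frac{1}{1-\varepsilon}\sup_{\beta>0}\frac{C(\mathcal{N},\beta)}{\beta}$; since achievability is required for all $\varepsilon\in(0,1)$, sending $\varepsilon\to0$ and $\delta\to0$ removes the $\frac{1}{1-\varepsilon}$ factor and delivers the matching converse, completing the proof.
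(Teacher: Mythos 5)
Your proposal is correct and follows essentially the same route as the paper: achievability by re-deploying a cost-constrained $(n,M,n\beta,\varepsilon)$ code at blocklength $n=\lfloor \nu/\beta\rfloor$ (the paper's case analysis on $n\beta \le \nu < (n+1)\beta$ is your floor-and-pad estimate), and a converse via the Holevo--Fano chain plus additivity to handle the fact that the code's ratio $\nu/n$ is arbitrary. The only minor organizational differences are that you prove both directions of the second equality algebraically from $C(\mathcal{N},\beta)=\chi(\mathcal{N},\beta)$, whereas the paper closes one direction operationally (each ensemble ratio is itself achievable by cost-constrained HSW coding), and that your converse makes fully explicit the step $\chi(\mathcal{N}^{\otimes n},\nu)=n\,\chi(\mathcal{N},\nu/n)\le \nu\,\sup_{\beta>0}\chi(\mathcal{N},\beta)/\beta$, which the paper leaves implicit in its notation.
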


The proof of Theorem~\ref{thm:generalCCost} is based on~\cite{verdu1990channel} and follows from the achievability and converse for the cost-constrained classical capacity. In general it is sufficient to prove the coding theorem for the cost-constrained capacity in order to establish a coding theorem for the capacity per unit cost.
\begin{proof}[Proof of Theorem~\ref{thm:generalCCost}]
  We first show the achievability statement
  $\textbf{\textit{C}}(\mathcal{N}) \geq \sup_{\beta > 0} \frac{C(\mathcal{N}, \beta)}{\beta}$.
   Let $\beta >0$. Let $R$ be an achievable rate per channel use with average cost not exceeding $\beta$. We claim that $R/\beta$ is an achievable rate per unit cost. This is clear for $R = 0$, and so we can assume $R>0$. To see the claim, let $\varepsilon \in (0,1)$ and fix some $\delta > 0$. Then, by definition $\exists\, n_0$ such that for all $n \ge n_0$ there is an $(n, M, n \beta, \varepsilon)$ code such that
  \begin{equation}
    \frac{\log_2 M}{n} > R - \frac{\beta \delta}{2}.
  \end{equation}
  This same code is an $(n, M, n\beta, \varepsilon)$ code such that
  \begin{equation}
    \frac{\log_2 M}{n \beta} > \frac{R}{\beta} - \frac{\delta}{2}.
  \end{equation}
  Now, let $\nu_0 = \max\left\{ (n_0+1) \beta, \frac{2 R}{\delta} \right\}$ and $\nu \ge \nu_0$. If $\nu = n\beta$ for some $n \in \N$, then $n \ge n_0$, and so the above $(n, M, n\beta, \varepsilon)$ code satisfies the necessary requirements. If instead $n \beta < \nu < (n+1)\beta$ for some $n \in \N$, then we note
  \begin{equation}
    n+1 > \frac{\nu}{\beta} \ge \frac{\nu_0}{\beta} \ge \frac{2 R}{\delta \beta},
  \end{equation}
  and so 
  \begin{align}
    \left( \frac{R}{\beta} - \frac{\delta}{2} \right) \frac{n \beta}{\nu} & > \left( \frac{R}{\beta}-\frac{\delta}{2} \right) \frac{n}{n+1}\\
    & = \left( \frac{R}{\beta}-\frac{\delta}{2} \right) \left( 1-\frac{1}{n+1} \right)\\
    & > \left( \frac{R}{\beta}-\frac{\delta}{2} \right) \left( 1- \frac{\delta\beta}{2R } \right)\\
    & > \frac{R}{\beta} - \delta.
  \end{align}
  Now, $\left( n+1 \right)\beta> \nu \ge \nu_0$, so $n \ge n_0$. Hence, the above $(n, M, n\beta, \varepsilon)$ code is a $(n, M, \nu, \varepsilon)$ code such that
  \begin{align}
   \frac{\log_2 M}{\nu} > \left( \frac{R}{\beta} - \frac{\delta}{2}  \right)\frac{n \beta}{\nu} > \frac{R}{\beta} -\delta.
  \end{align}
  Hence we have shown achievability.

  We next show the converse statement 
  $\textbf{\textit{C}}(\mathcal{N}) \leq \sup_{\beta > 0} \frac{C(\mathcal{N}, \beta)}{\beta}$.
  Suppose $\mathcal{N}$ has an $(n, M, \nu,\varepsilon)$ code. By a standard data-processing argument and entropy continuity bound~\cite{wilde2013quantum,holevo2013quantum,winter2016tight}, we have
  \begin{equation}
    \log_2 M \le \chi(\mathcal{N},\nu) + f(M, \varepsilon).
  \end{equation}
  where
  $f(M, \varepsilon) = \varepsilon \log_2 M
  + (\varepsilon+1)\log_2(\varepsilon+1)-
  \varepsilon \log_2 \varepsilon$, so that
  $\lim_{\varepsilon \to 0} f(M, \varepsilon) = 0$. Thus,
  \begin{align}
    \frac{\log_2 M}{\nu} & \le \frac{\chi(\mathcal{N},\nu)}{\nu} + \frac{f(M,\varepsilon)}{\nu}\\
    & = \frac{1}{\nu}\sup_{\substack{ \left\{ p_X, \rho_{A}(x) \right\}\\ \tr[G \bar \rho_{A}] \le \nu}} I(X;B)_{\rho} + \frac{f(M,\varepsilon)}{\nu}\\
    & \le \sup_{\beta >0} \frac{1}{\beta}\sup_{\substack{ \left\{ p_X, \rho_{A}(x) \right\}\\ \tr[G \bar \rho_{A}] \le \beta}} I(X;B)_{\rho} + \frac{f(M,\varepsilon)}{\nu}\\
    & =  \sup_{\beta >0} \frac{C(\mathcal{N},\beta)}{\beta} + \frac{f(M,\varepsilon)}{\nu}.
  \end{align}
  Thus, for any $\varepsilon$-achievable rate per unit cost $\textbf{\textit{R}}$, for any $\delta > 0$ there exists $\nu_0$ such that for $\nu \ge \nu_0$,
  \begin{equation}
    \textbf{\textit{R}} - \delta < \sup_{\beta > 0} \frac{C(\mathcal{N},\beta)}{\beta} + \frac{f(M,\varepsilon)}{\nu}.
  \end{equation}
  Hence, for all $\delta >0$,
  \begin{equation}
    \textbf{\textit{R}} - \delta < \liminf_{\nu \to \infty} \left( \sup_{\beta > 0} \frac{C(\mathcal{N},\beta)}{\beta} + \frac{f(M,\varepsilon)}{\nu}\right).
  \end{equation}
  Hence, if \textbf{\textit{R}} is an achievable rate per unit cost, then
  \begin{equation}
    \textbf{\textit{R}} \le \sup_{\beta > 0} \frac{C(\mathcal{N},\beta)}{\beta}.
  \end{equation}
  This establishes the first equality in \cref{thm:generalCCost}.

  To show the second equality, we first argue
  \begin{align}
    \sup_{\beta > 0} \frac{C(\mathcal{N},\beta)}{\beta} & = \sup_{\beta > 0} \frac{1}{\beta} \sup_{\substack{ \left\{ p_X, \rho_{A}(x) \right\}\\ \tr[G \bar \rho_{A}] \le \beta}} I(X;B)_{\rho} \\
    & \le  \sup_{\beta > 0} \sup_{\substack{ \left\{ p_X, \rho_{A}(x) \right\}\\ \tr[G \bar \rho_{A}] \le \beta}} \frac{1}{\tr[ G\bar \rho_A]} I(X;B)_{\rho}\\
    & =  \sup_{ \left\{ p_X, \rho_{A}(x) \right\}} \frac{I(X;B)_{\rho}}{\tr[ G\bar \rho_A]} .
  \end{align}
  Note that the inequality is trivial if for some ensemble $\tr[G \bar \rho_A]=0$ and $I(X;B)_\rho >0$. Now, this is also an achievable rate per unit cost since for any $\left\{ p_X, \rho_A(x) \right\}$, we can achieve a rate per channel use $I(X;B)_\rho$ using cost-constrained Holevo-Schumacher-Westmoreland (HSW) coding~\cite{holevo1998capacity,schumacher1997sending,holevo2003entanglement,H04,holevo2013quantum}. The average cost per channel use is then exactly $\tr[G\bar \rho_A]$, and so we achieve a capacity per unit cost equal to
  \begin{equation}
    \frac{I(X;B)_\rho}{\tr[G\bar \rho_A]}.
  \end{equation}
  This concludes the proof.
\end{proof}

Now, suppose that we have a state $\psi^0$ with zero cost, i.e., $\tr[G \psi^0] =0$. As mentioned above, without loss of generality, $\psi^0$ can be taken pure since otherwise we can spectrally decompose it and conclude that all of its eigenstates must have zero cost since $G \ge 0$. In this special case, the capacity per unit cost is given by the following simple expression: 
\begin{thm}
\label{thm:zero-cost-rel-ent-CC}
  If there is a state $\psi^0$ with zero cost, then the capacity per unit cost of a channel with additive Holevo information  at all cost constraints is
  \begin{equation}
    \label{eq:zeroCCost}
    \textbf{\textit{C}}(\mathcal{N}) = \sup_{\psi \neq \psi^0} \frac{D(\mathcal{N}(\psi) \Vert \mathcal{N}(\psi^0))}{\bra{\psi} G \ket{\psi}},
  \end{equation}
  where $\psi$ is pure.
\end{thm}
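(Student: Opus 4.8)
The plan is to derive both inequalities directly from the single-letter formula $\textbf{\textit{C}}(\mathcal{N}) = \sup_{\{p_X,\rho_A(x)\}} I(X;B)_\rho/\tr[G\bar{\rho}_A]$ already established in \cref{thm:generalCCost}, with no fresh coding argument. The whole point is to replace the average output state $\mathcal{N}(\bar{\rho}_A)$ that appears inside the Holevo quantity by the fixed output $\mathcal{N}(\psi^0)$ of the zero-cost state, and then to collapse the optimization over ensembles into an optimization over a single pure input.

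For the upper bound I would first rewrite the Holevo information as $I(X;B)_\rho = \sum_x p_X(x)\, D(\mathcal{N}(\rho_A(x)) \Vert \mathcal{N}(\bar{\rho}_A))$. The key move is the elementary identity (Donald's identity) $\sum_x p_X(x) D(\sigma_x\Vert\omega) = \sum_x p_X(x) D(\sigma_x\Vert\bar{\sigma}) + D(\bar{\sigma}\Vert\omega)$, valid for any reference state $\omega$; choosing $\omega = \mathcal{N}(\psi^0)$ and discarding the nonnegative term $D(\mathcal{N}(\bar{\rho}_A)\Vert\mathcal{N}(\psi^0))$ yields $I(X;B)_\rho \le \sum_x p_X(x)\, D(\mathcal{N}(\rho_A(x))\Vert\mathcal{N}(\psi^0))$. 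Spectrally decomposing each $\rho_A(x)$ and using convexity of relative entropy in its first argument, together with $\tr[G\rho_A(x)] = \sum_y q(y|x)\bra{\psi_{x,y}}G\ket{\psi_{x,y}}$, reduces the problem, without loss of generality, to a pure-state ensemble. Then $I(X;B)_\rho/\tr[G\bar{\rho}_A]$ is at most a ratio of two averages $\sum_x p_X(x)a_x / \sum_x p_X(x)b_x$ with $a_x = D(\mathcal{N}(\psi(x))\Vert\mathcal{N}(\psi^0))\ge 0$ and $b_x = \bra{\psi(x)}G\ket{\psi(x)}\ge 0$, and the mediant inequality $\sum a_x/\sum b_x \le \max_x a_x/b_x$ bounds this by $\sup_{\psi\neq\psi^0} a_\psi/b_\psi$, which is the claimed expression.

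For the lower bound I would exhibit, for each fixed pure $\psi\neq\psi^0$, the binary ensemble that sends $\psi^0$ with probability $1-p$ and $\psi$ with probability $p$; this is the PPM scheme of the introduction seen through the formula. Here $\tr[G\bar{\rho}_A] = p\,\bra{\psi}G\ket{\psi}$, and writing the Holevo quantity again as an average relative entropy to the mean shows the ratio equals $D(\mathcal{N}(\psi)\Vert\mathcal{N}(\bar{\rho}_A))/\bra{\psi}G\ket{\psi}$ plus a nonnegative term. Since $\mathcal{N}(\bar{\rho}_A)\to\mathcal{N}(\psi^0)$ as $p\to 0^+$, lower semicontinuity of the relative entropy gives $\liminf_{p\to 0^+}$ of the ratio at least $D(\mathcal{N}(\psi)\Vert\mathcal{N}(\psi^0))/\bra{\psi}G\ket{\psi}$; because $\textbf{\textit{C}}(\mathcal{N})$ is a supremum over all ensembles, it dominates this $\liminf$, and taking the supremum over $\psi$ closes the bound. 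Pinching between the two bounds gives equality without my having to evaluate any limit exactly.

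The hard part will be the bookkeeping around zero-cost states and support conditions rather than any deep inequality. In the mediant step the terms with $b_x = 0$ must be isolated: if some pure state in the ensemble has zero cost but an output distinct from $\mathcal{N}(\psi^0)$, then one can signal at zero cost and both sides are $+\infty$, a degenerate case to acknowledge separately; otherwise such terms force $a_x = 0$ and drop out. I would also check finiteness of the relative entropies, i.e.\ nesting of supports, which holds because $\bar{\rho}_A \ge p\,\psi$ and $\bar{\rho}_A \ge (1-p)\psi^0$ keep $\supp(\mathcal{N}(\psi))$ and $\supp(\mathcal{N}(\psi^0))$ inside $\supp(\mathcal{N}(\bar{\rho}_A))$ along the whole family. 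Using the $\liminf$ and the pinching lets me sidestep evaluating the discarded term, which is in fact $O(p^2)$ and hence negligible after division by $p$.
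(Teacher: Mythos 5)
Your proposal is correct and follows essentially the same route as the paper's proof: for the converse, bounding $I(X;B)_\rho$ by $\sum_x p_X(x) D(\mathcal{N}(\rho_A(x))\Vert\mathcal{N}(\psi^0))$ (your Donald's identity step is exactly the paper's variational characterization $I(X;B)_\rho = \inf_{\sigma_B} D(\rho_{XB}\Vert\rho_X\ot\sigma_B)$ instantiated at $\sigma_B = \mathcal{N}(\psi^0)$), reducing to pure-state ensembles, and applying the weighted mediant bound; and for achievability, the same binary $\{(1-p,\psi^0),(p,\psi)\}$ ensemble, the relative-entropy-to-the-mean identity, and lower semicontinuity as $p \searrow 0$. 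The only cosmetic difference is that you invoke the single-letter ensemble formula of \cref{thm:generalCCost} for both directions, which lets you skip the paper's concavity/monotonicity observation that $\textbf{\textit{C}}(\mathcal{N}) = \lim_{\beta\searrow 0} C(\mathcal{N},\beta)/\beta$; your handling of zero-cost degeneracies and support conditions also matches the paper's.
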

Just as was found in~\cite{verdu1990channel}, the expression for the capacity per unit cost is arguably simpler  than that for the capacity cost function given in \cref{eq:capacityCost}. The latter requires an optimization over ensembles on the input space while the former only requires an optimization over the input space itself. 

We now give a proof of Theorem~\ref{thm:zero-cost-rel-ent-CC}.
\begin{proof}
  Without loss of generality, $\psi^0$ is the unique zero-cost state. Otherwise, let $\phi^0 \neq \psi^0$ be a zero-cost state. If $\mathcal{N}(\phi^0) \neq \mathcal{N}(\psi^0)$, the capacity per unit cost is infinite since we can send a binary message with zero cost. If on the other hand $\mathcal{N}(\phi^0) = \mathcal{N}(\psi^0)$, then $\phi^0$ is the same as $\psi^0$ for the purposes of communicating over $\mathcal{N}$. 

  We first prove the direct part. To begin with, we note that the possibility of time-sharing (interpolation between two different protocols) implies that $C(\mathcal{N},\beta)$ is concave in $\beta$. Furthermore, we have a zero-cost state and so $C(\mathcal{N},\beta)/\beta$ is monotone non-increasing on $(0, +\infty)$. We conclude that
  \begin{equation}
    \label{eq:monoZero}
    \textbf{\textit{C}}(\mathcal{N}) = \lim_{\beta \searrow 0} \frac{C(\mathcal{N},\beta)}{\beta}.
  \end{equation}
  Now, let $\beta \in (0,\bra{\psi} G \ket{\psi})$ and consider the following classical-quantum state: 
  \begin{equation}
    \rho_{XA}^\beta = \left( 1- \frac{\beta}{\bra{\psi} G \ket{\psi}} \right) \state{0} \ot \psi^0  + \frac{\beta}{\bra{\psi} G \ket{\psi}} \state{1} \ot \psi ,
  \end{equation}
  where $\psi \neq \psi^0$. By \cref{thm:generalCCost}, we can achieve the following rate per unit cost:
  \begin{equation}
    \frac{I(X;B)_{\rho^\beta}}{\beta}.
  \end{equation}
  Now, recall the following identity, which holds for a classical-quantum state $\rho_{XB} = \sum_{x}^{} p_X(x) \state{x}_X \ot \rho_B^x$~\cite{yuen1993ultimate,holevo2013quantum} where for all $x$, $p_X(x) > 0$:
  \begin{align}
    I(X;B)_\rho & = D(\rho_{XB} \Vert \rho_X \ot \rho_B) \nonumber\\
    & = \sum_{x}^{} p_X(x) D(\rho^x_B \Vert \rho_B) .\label{eq:CQHolevo}
  \end{align}
  This expression is well defined because $\supp(\rho^x_B) \subseteq \supp(\rho_B)$ for all $x$.
  Hence by \cref{eq:CQHolevo} and non-negativity of quantum relative entropy when evaluated on quantum states, we obtain 
  \begin{equation}
    \frac{I(X;B)_{\rho^\beta}}{\beta} \ge \frac{D(\mathcal{N}(\psi )\Vert \rho_B^\beta)}{\bra{\psi} G \ket{\psi}}.
  \end{equation}
  So by the lower semicontinuity of the relative entropy~\cite{holevo2013quantum}, 
  \begin{align}
    \textbf{\textit{C}}(\mathcal{N}) & = \lim_{\beta \searrow 0} \frac{C(\mathcal{N},\beta)}{\beta} \\
    & \ge \lim_{\beta \searrow 0} \frac{I(X;B)_{\rho^\beta}}{\beta} \\
    & \ge \liminf_{\beta \searrow 0} \frac{D(\mathcal{N}(\psi )\Vert \rho_B^\beta)}{\bra{\psi} G \ket{\psi}} \\
    & \ge \frac{D(\mathcal{N}(\psi) \Vert \mathcal{N}(\psi^0))}{\bra{\psi} G \ket{\psi}}.
  \end{align}
  This holds for all $\psi \neq \psi^0$, and so we obtain the direct part
  \begin{equation}
    \textbf{\textit{C}}(\mathcal{N}) \ge \sup_{\psi \neq \psi^0} \frac{D(\mathcal{N}(\psi) \Vert \mathcal{N}(\psi^0))}{\bra{\psi} G \ket{\psi}}.
  \end{equation}
  
  For the converse, we start with 
  \begin{align}
    I(X;B)_\rho & = \inf_{\sigma_B \in \mathcal{S}(\mathcal{H}_B)} D(\rho_{XB} \Vert \rho_X \ot \sigma_B)  \\
    & \le D(\rho_{XB} \Vert \rho_X \ot \mathcal{N}(\psi^0))\\
    & = \sum_{x}^{} p_X(x) D(\mathcal{N}(\rho^x) \Vert \mathcal{N}(\psi^0)).
  \end{align}
  The first equality is a well known identity \cite[Exercise~11.8.2]{wilde2013quantum}.
  Note that if any of the relative entropies are infinite, then the bound is trivial. Therefore, $\forall \beta >0$, \eqref{eq:bunch-1}--\eqref{eq:bunch-last} hold,
  \begin{figure*}[!t]
    \normalsize
  \begin{align}
    \frac{C(\mathcal{N},\beta)}{\beta} &= \frac{1}{\beta} \sup_{\substack{ \left\{ p_X, \rho^x \right\}\\ \tr[G \bar \rho] \le \beta}} I(X;B) \label{eq:bunch-1}\\
    & = \frac{1}{\beta} \sup_{\substack{ \left\{ p_X, \psi^x \right\}\\ \tr[G \bar \psi] \le \beta}} I(X;B) \label{eq:pureHolevo} \\
    & \le \frac{1}{\beta}\sup_{\substack{ \left\{ p_X, \psi^x \right\}\\ \frac{1}{\beta} \tr[G \bar \psi]  \le 1}} \sum_{x}^{} p_X(x) D(\mathcal{N}(\psi^x) \Vert \mathcal{N}(\psi^0)) \\
    & = \frac{1}{\beta}\sup_{\substack{ \left\{ p_X, \psi^x \right\}, \psi^x \neq \psi^0\\ \frac{1}{\beta} \tr[G \bar \psi] \le 1}} \sum_{x}^{} p_X(x) D(\mathcal{N}(\psi^x) \Vert \mathcal{N}(\psi^0))  \\
    & = \sup_{\substack{ \left\{ p_X, \psi^x \right\},\psi^x \neq \psi^0\\ \frac{1}{\beta} \tr[G \bar \psi] \le 1}} \sum_{x}^{} p_X(x) \frac{D(\mathcal{N}(\psi^x) \Vert \mathcal{N}(\psi^0))}{\bra{\psi^x} G \ket{\psi^x}} \frac{\bra{\psi^x} G \ket{\psi^x}}{\beta} \label{eq:divideCost} \\ 
    & \le \sup_{\psi \neq \psi^0} \frac{D(\mathcal{N}(\psi) \Vert \mathcal{N}(\psi^0))}{\bra{\psi} G \ket{\psi}} \sup_{\substack{ \left\{ p_X, \psi^x \right\},\psi^x \neq \psi^0\\ \frac{1}{\beta} \tr[G \bar \psi] \le 1}} \sum_{x}^{} p_X(x)  \frac{\bra{\psi^x} G \ket{\psi^x}}{\beta} \\ 
    & \le \sup_{\psi \neq \psi^0} \frac{D(\mathcal{N}(\psi) \Vert \mathcal{N}(\psi^0))}{\bra{\psi} G \ket{\psi}} ,
    \label{eq:bunch-last}
  \end{align}
    \hrulefill
    \vspace*{4pt}
  \end{figure*}
  \newpage
  \noindent where \cref{eq:pureHolevo} follows since pure state ensembles maximize the Holevo information (even with a cost constraint), and we can divide by $\bra{\psi^x} G \ket{\psi^x}$ in \cref{eq:divideCost} since we assumed $\psi^0$ is the unique zero-cost state.
\end{proof}

\subsection{Pulse-Position-Modulation Scheme for Classical Communication}

We can also directly prove the achievability part of Theorem~\ref{thm:zero-cost-rel-ent-CC} without going through the cost-constrained capacity, as was done in \cite{verdu1990channel} for the classical case. This follows by using a PPM scheme along the following lines. 
\\~\\
\textit{Encoding}: Let $\psi \neq \psi^0$ be a pure state and fix $M , N \in \mathbb{N}$. For a message $m \in [1:M]$, the sender transmits the following length-$MN$ sequence of states: 
\begin{equation}
  \left[(\psi^0)^{\ot N}\right]^{\ot m-1} \ot \psi ^{\ot N} \ot \left[(\psi^0)^{\ot N}\right]^{\ot M-m}.
\end{equation}
That is, the message is encoded in the position of a $\psi$-``pulse'' amidst a baseline of zero-cost states. Note that the cost of each codeword is $N \bra{\psi} G\ket{\psi}$.
\\~\\
\textit{Decoding}: Let $\varepsilon \in (0,1)$. The receiver obtains the state
\begin{equation}
  \label{eq:receivePPM}
  \left[\mathcal{N}(\psi^0)^{\ot N}\right]^{\ot m-1} \ot \mathcal{N}(\psi) ^{\ot N} \ot \left[\mathcal{N}(\psi^0)^{\ot N}\right]^{\ot M-m}.
\end{equation}
Then, the receiver uses a quantum hypothesis test to deduce the position of the pulse. Specifically, he performs $M$ independent binary hypothesis tests with $N$ copies where the null hypothesis is $\mathcal{N}(\psi)$ and the alternative hypothesis is $\mathcal{N}(\psi^0)$. If the receiver obtains a test result of the form \cref{eq:receivePPM} for some $\hat m$, then $\hat m$ is declared. Otherwise an error is declared.
\\~\\
\textit{Error Analysis}: Let $A_{iN}$, for $i \in [1:M]$, denote the POVM of the $i$th hypothesis test,  and let $\alpha_{i N}(A_{iN})$ and $\beta_{iN}(A_{iN})$ denote the Type~I and Type~II errors, respectively. Now, the error probability $p_e$ is independent of the message by symmetry, so we can fix some message index $i$. Furthermore, since each POVM acts on independent size-$N$ blocks, we can apply the classical union bound as follows:
\begin{equation}
  p_e \le \alpha_{iN} + (M-1)\beta_{iN}.
\end{equation}
By \cref{eq:quantumStein}, for $\varepsilon \in (0,1)$,
\begin{equation}
  \lim_{N \to \infty} -\frac{1}{N} \log_2 \beta_{iN}^*(\varepsilon/2) =  D(\mathcal{N}(\psi) \Vert \mathcal{N}(\psi^0)).
\end{equation}
Using the test $A_{iN}$ that achieves $\beta_{iN}^*(\varepsilon/2)$ and given $\delta >0$, for sufficiently large $N$, the probability of error is bounded by
\begin{equation}
  p_e \le \frac{\varepsilon}{2} + (M-1) 2^{-N D(\mathcal{N}(\psi) \Vert \mathcal{N}(\psi^0)) +N \delta}.
\end{equation}
Hence, if
\begin{equation}
  \frac{\log_2 M}{N \bra{\psi}G\ket{\psi}} < \frac{D(\mathcal{N}(\psi) \Vert \mathcal{N}(\psi^0))}{\bra{\psi} G\ket{\psi}} - \frac{2\delta}{\bra{\psi} G\ket{\psi}},
\end{equation}
then $p_e <\varepsilon$ for sufficiently large $N$. Therefore, $\frac{D(\mathcal{N}(\psi) \Vert \mathcal{N}(\psi^0))}{ \bra{\psi} G\ket{\psi}}$ is an achievable rate per unit cost.\\

\section{Entanglement-Assisted Communication}

\label{sec:EA-comm}

We now consider the case of communication with unlimited entanglement assistance. We define an $(n,M,\nu,\varepsilon)$ code in the same way as in the unassisted case, with the exception that the sender and receiver are allowed to share an arbitrary quantum state of arbitrary dimension before communication begins and they can use this resource in the encoding and decoding. The entanglement-assisted capacity cost function $C_\text{EA}(\mathcal{N},\beta)$ is defined similarly but again takes into account the entanglement assistance.

Let $A$ and $A'$ denote quantum systems with isomorphic Hilbert spaces. Define for a bipartite state $\varphi_{AA'}$
\begin{equation}
  \varphi_{AB} \equiv (\id_A \ot \mathcal{N}_{A' \to B})(\varphi_{AA'}).
\end{equation}
We recall the following theorem:
\begin{thm}
  [\cite{holevo2003entanglement}]
  The entanglement-assisted capacity cost function for a quantum channel $\mathcal{N}_{A' \to B}$ is given by
  \begin{equation}
    C_\mathrm{EA}(\mathcal{N},\beta) = \max_{\substack{\varphi_{AA'} \\ \tr[G \varphi_{A'}] \le \beta}} I(A;B)_\varphi,
  \end{equation}
  where $\varphi_{AA'}$ is a pure bipartite state.
\end{thm}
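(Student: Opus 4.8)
The plan is to establish the formula by separately proving the achievability (direct) and converse bounds, in close analogy with the unconstrained entanglement-assisted coding theorem of Bennett--Shor--Smolin--Thapliyal, while tracking the cost incurred at the channel input at every step. The structural fact that makes the expression single-letter, with no regularization, is the additivity of the entanglement-assisted mutual information $I(A;B)_\varphi$ under tensor products of channels; the point requiring attention here is to show that this additivity survives once the optimization is restricted to the cost-constrained set $\{\varphi_{AA'} : \tr[G \varphi_{A'}] \le \beta\}$.

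For achievability I would fix a pure state $\varphi_{AA'}$ with $\tr[G\varphi_{A'}] \le \beta$ that nearly attains the maximum and use a position-based coding scheme. Alice and Bob pre-share $M$ blocks, each consisting of $n$ copies of $\varphi_{AA'}$, with Alice holding the channel-input systems and Bob the reference systems. To transmit message $m$, Alice sends the $m$-th input block, \emph{unmodified}, through $\mathcal{N}^{\ot n}$; Bob, holding all reference systems together with the received $B^n$, performs a sequential hypothesis test (via the quantum union bound) to identify the position $m$. The essential advantage of this scheme for the cost-constrained problem is that the transmitted systems are untouched halves of the shared state, so the channel input for \emph{every} codeword is exactly $\varphi_{A'}^{\ot n}$ and its cost is exactly $n\,\tr[G\varphi_{A'}] \le n\beta$; the constraint is therefore met without any expurgation. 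A standard analysis of the sequential decoder then shows that $\log_2 M \approx n\, I(A;B)_\varphi$ messages are reliably distinguishable, yielding the direct part.

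For the converse I would begin with an arbitrary $(n,M,n\beta,\varepsilon)$ code and let $\omega$ denote the state on the message register, Bob's pre-shared system, and the channel outputs $B^n$. A Fano-type inequality together with the Alicki--Fannes--Winter continuity bound yields $(1-\varepsilon)\log_2 M \le I(A;B^n)_{\varphi^{(n)}} + c(\varepsilon,n)$, where $\varphi^{(n)}$ is a purification of the effective channel input and, after division by $n$ followed by the limit $\varepsilon \to 0$, the correction term $c(\varepsilon,n)$ vanishes. I would then invoke the additivity of the entanglement-assisted mutual information together with its concavity in the input state $\varphi_{A'}$ and the convexity of the cost-constrained set: writing $\bar\varphi_{A'}$ for the average of the per-use input marginals, one has $\tr[G\bar\varphi_{A'}] = \tfrac{1}{n}\tr[G_n \varphi^{(n)}_{A'^n}] \le \beta$, and these properties collapse $\tfrac{1}{n} I(A;B^n)$ down to $\max_{\varphi : \tr[G\varphi_{A'}]\le\beta} I(A;B)_\varphi$. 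Taking $n\to\infty$ and then $\varepsilon\to 0$ completes the converse and matches the direct part.

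The main obstacle, I expect, is not the coding arguments themselves but the single-letterization in the converse under the cost constraint. Achievability is clean precisely because position-based coding transmits unmodified shares and so incurs exactly the intended cost; the delicate step is verifying that the additivity of $I(A;B)$ continues to give an \emph{exact} reduction to a single channel use when the feasible inputs are confined to the convex set $\{\varphi_{A'} : \tr[G\varphi_{A'}]\le\beta\}$. Establishing the concavity of $\varphi_{A'} \mapsto I(A;B)_\varphi$, so that the constrained multi-letter maximum equals $n$ times the single-letter one, is the technical heart of the argument.
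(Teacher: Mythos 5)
The paper does not actually prove this statement: it is quoted as a known result from Holevo's work on constrained channels \cite{holevo2003entanglement}, so there is no in-paper proof to compare against. Judged on its own terms, your outline is sound and would constitute a legitimate self-contained proof, and it differs from the original in an interesting way. Your converse is essentially Holevo's: subadditivity of $I(A;B^n)$ across channel uses (chain rule plus data processing, reducing the $i$-th term to the single-letter quantity evaluated on a purification of the $i$-th input marginal), followed by concavity of $\rho_{A'} \mapsto I(A;B)_\varphi$ and linearity of the cost, so that the average of the marginals $\bar\rho_{A'}$ is a single feasible input with $\tr[G \bar\rho_{A'}] \le \beta$; you correctly identify this concavity (a known lemma, provable via strong subadditivity) as the crux of constrained single-letterization. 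Where you genuinely depart from \cite{holevo2003entanglement} is achievability: Holevo adapts the BSST random-coding argument to constrained inputs, whereas you use position-based coding with hypothesis-testing decoders in the spirit of \cite{anshu2017one}, and your observation that every codeword's channel input is exactly $\varphi_{A'}^{\ot n}$ --- so the cost constraint is satisfied identically, with no expurgation or averaging over codewords --- is a real simplification that the BSST route does not provide so cleanly; it is also the same structural feature this paper exploits in its PPM schemes. Two points to tighten: first, the $M$ position tests overlap on the common output system $B^n$, so the classical union bound does not apply directly and you need a non-commutative union bound or a careful sequential-decoding analysis (you allude to this, but it is where the technical work in the direct part actually lies); second, the theorem asserts a maximum rather than a supremum, so you should note that in finite dimensions the constraint set $\{\varphi_{A'} : \tr[G\varphi_{A'}] \le \beta\}$ is compact and $I(A;B)_\varphi$ is continuous in $\varphi_{A'}$, hence the optimum is attained.
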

We define the entanglement-assisted capacity per unit cost $\textbf{\textit{C}}_\text{EA}(\mathcal{N})$ in the same manner and obtain an expression for it in the same way as in the unassisted case (i.e., as done in Theorem~\ref{thm:generalCCost}). 
\begin{thm}
  \label{thm:generalEACCost}
  The entanglement-assisted capacity per unit cost for a quantum channel $\mathcal{N}_{A' \to B}$ is given by
  \begin{equation}
    \label{eq:generalEACCost}
    \textbf{\textit{C}}_\mathrm{EA}(\mathcal{N}) = \sup_{\beta >0} \frac{C_\mathrm{EA}(\mathcal{N},\beta)}{\beta} = \sup_{\varphi_{AA'}} \frac{I(A;B)_\varphi}{\tr[G \varphi_{A'}]}.
  \end{equation}
\end{thm}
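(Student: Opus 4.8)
The plan is to mirror the proof of \cref{thm:generalCCost} essentially line by line, since the only structural changes are to replace the Holevo quantity $\chi(\mathcal{N},\beta)$ by the entanglement-assisted capacity cost function $C_{\mathrm{EA}}(\mathcal{N},\beta)$, the input ensemble $\{p_X,\rho_A(x)\}$ by a pure bipartite state $\varphi_{AA'}$, and the mutual information $I(X;B)_\rho$ by $I(A;B)_\varphi$. Everything hinges on the fact, recorded in the cited theorem, that $C_{\mathrm{EA}}(\mathcal{N},\beta)=\max_{\tr[G\varphi_{A'}]\le\beta} I(A;B)_\varphi$ is already single-letter with no regularization, which is exactly what keeps the converse clean.

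For the first equality I would establish achievability and the converse just as before. The direction $\textbf{\textit{C}}_{\mathrm{EA}}(\mathcal{N})\ge\sup_{\beta>0}C_{\mathrm{EA}}(\mathcal{N},\beta)/\beta$ is a purely combinatorial statement relating blocklength $n$, cost $\nu$, and per-use cost $\beta$: any rate $R$ achievable per channel use under average cost $\beta$ yields $R/\beta$ achievable per unit cost. This argument is insensitive to the presence of preshared entanglement, so the computation with $\nu_0=\max\{(n_0+1)\beta,\,2R/\delta\}$ carries over verbatim. For the converse $\textbf{\textit{C}}_{\mathrm{EA}}(\mathcal{N})\le\sup_{\beta>0}C_{\mathrm{EA}}(\mathcal{N},\beta)/\beta$, I would run the same data-processing-plus-continuity estimate to obtain, for any $(n,M,\nu,\varepsilon)$ EA code, a bound of the form $\log_2 M \le C_{\mathrm{EA}}(\mathcal{N},\nu)+f(M,\varepsilon)$ with $f(M,\varepsilon)\to 0$ as $\varepsilon\to 0$; dividing by $\nu$, bounding $C_{\mathrm{EA}}(\mathcal{N},\nu)/\nu$ by $\sup_{\beta>0}C_{\mathrm{EA}}(\mathcal{N},\beta)/\beta$, and taking $\liminf_{\nu\to\infty}$ then gives the claim.

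For the second equality I would reuse the chain of inequalities from \cref{thm:generalCCost}:
\begin{equation}
  \sup_{\beta>0}\frac{C_{\mathrm{EA}}(\mathcal{N},\beta)}{\beta} = \sup_{\beta>0}\frac{1}{\beta}\max_{\tr[G\varphi_{A'}]\le\beta}I(A;B)_\varphi \le \sup_{\varphi_{AA'}}\frac{I(A;B)_\varphi}{\tr[G\varphi_{A'}]},
\end{equation}
where the inequality replaces the prefactor $1/\beta$ by $1/\tr[G\varphi_{A'}]\ge 1/\beta$ and is trivial (both sides $+\infty$) whenever some $\varphi_{AA'}$ has $\tr[G\varphi_{A'}]=0$ but $I(A;B)_\varphi>0$. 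The reverse inequality is the observation that each individual $\varphi_{AA'}$ is achievable: cost-constrained entanglement-assisted coding attains rate $I(A;B)_\varphi$ per channel use at average cost $\tr[G\varphi_{A'}]$ per use, hence rate $I(A;B)_\varphi/\tr[G\varphi_{A'}]$ per unit cost, so the supremum over $\varphi_{AA'}$ is itself an achievable rate per unit cost and thus a lower bound on the left-hand side.

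The main obstacle I anticipate is the converse continuity bound, where one must set up the data-processing argument carefully in the presence of the arbitrary preshared entangled state of unbounded dimension. The decisive fact that removes the usual regularization difficulty is the additivity already built into $C_{\mathrm{EA}}(\mathcal{N},\beta)$, so that no superadditivity corrections appear across blocks; with that in hand, the term $f(M,\varepsilon)$ is controlled exactly as in the proof of \cref{thm:generalCCost} via the same entropy continuity bound, and the remainder of the argument is routine.
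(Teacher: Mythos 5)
Your proposal is correct and matches the paper's treatment exactly: the paper gives no separate argument for \cref{thm:generalEACCost}, stating only that the expression is obtained ``in the same way as in the unassisted case (i.e., as done in Theorem~\ref{thm:generalCCost}),'' which is precisely the substitution scheme you spell out. Your observations that the cost-accounting argument in the achievability direction is insensitive to preshared entanglement, and that the single-letter additivity built into $C_{\mathrm{EA}}(\mathcal{N},\beta)$ is what keeps the converse free of regularization, are exactly the points that make this transfer work.
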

Now suppose that we have a zero-cost pure state $\psi^0$. 
Similar to the unassisted case, we obtain the expression for the entanglement-assisted capacity per unit cost given in Theorem~\ref{thm:EA-zero-cost-state}.
Note that, like the mutual information, the quantity to be optimized is only a function of the input state $\varphi_{A'}$ and not of the specific purification. Also note that, unlike the unassisted case, $\textbf{\textit{C}}_\mathrm{EA}$ is ostensibly as difficult to calculate as $C_\mathrm{EA}$.

\begin{thm}
\label{thm:EA-zero-cost-state}
  If there is a state $\psi^0$ with zero cost, then the entanglement-assisted capacity per unit cost of a channel $\mathcal{N}_{A' \to B}$ is given by
  \begin{equation}
    \label{eq:zeroEACCost}
    \textbf{\textit{C}}_\mathrm{EA}(\mathcal{N}) = \sup_{\varphi_{AA'}} \textbf{\textit{C}}_{\mathrm{EA},\psi^0}(\mathcal{N}, \varphi),
  \end{equation}
  where
  \begin{multline}
    \textbf{\textit{C}}_{\mathrm{EA},\psi^0}(\mathcal{N}, \varphi)\\
    =
   \begin{cases}
     \frac{D(\varphi_{AB} \Vert \varphi_A \ot \mathcal{N}(\psi^0_{A'}))}{\tr[G \varphi_{A'}]} & D(\varphi_{AB} \Vert \varphi_A \ot \mathcal{N}(\psi^0_{A'})) > 0 \\
     0 & \text{otherwise}
   \end{cases}.
  \end{multline}
\end{thm}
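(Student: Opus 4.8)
The plan is to mirror the proof of \cref{thm:zero-cost-rel-ent-CC}, replacing the classical input register $X$ by a coherent ``flag'' system $F$ while carrying the reference $A$ along throughout, and invoking \cref{thm:generalEACCost} to pass freely between $\textbf{\textit{C}}_\mathrm{EA}(\mathcal{N})=\sup_{\varphi}I(A;B)_\varphi/\tr[G\varphi_{A'}]$ and $\sup_{\beta>0}C_\mathrm{EA}(\mathcal{N},\beta)/\beta$. As in the unassisted case I would first reduce to the situation in which $\psi^0$ is the unique zero-cost state: if a second zero-cost state $\phi^0$ satisfies $\mathcal{N}(\phi^0)\neq\mathcal{N}(\psi^0)$ then a zero-cost binary message can be sent and the capacity per unit cost is infinite, while if $\mathcal{N}(\phi^0)=\mathcal{N}(\psi^0)$ then $\phi^0$ is interchangeable with $\psi^0$. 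In particular, $\tr[G\varphi_{A'}]=0$ then forces $\varphi_{A'}=\psi^0$, so $\varphi_{AA'}$ is a product state contributing $0$ on both sides.

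For the converse I would use the variational identity $I(A;B)_\varphi=\inf_{\sigma_B}D(\varphi_{AB}\Vert\varphi_A\ot\sigma_B)$ and simply take $\sigma_B=\mathcal{N}(\psi^0)$, giving $I(A;B)_\varphi\le D(\varphi_{AB}\Vert\varphi_A\ot\mathcal{N}(\psi^0))$ for every $\varphi_{AA'}$. Dividing by $\tr[G\varphi_{A'}]$ and taking the supremum over $\varphi$ then yields $\textbf{\textit{C}}_\mathrm{EA}(\mathcal{N})\le\sup_\varphi\textbf{\textit{C}}_{\mathrm{EA},\psi^0}(\mathcal{N},\varphi)$; the degenerate case $D(\varphi_{AB}\Vert\varphi_A\ot\mathcal{N}(\psi^0))=0$ is consistent because it forces $I(A;B)_\varphi=0$ as well. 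This direction is routine.

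The substance is the direct part. First, concavity of $\beta\mapsto C_\mathrm{EA}(\mathcal{N},\beta)$ (from time-sharing) together with $C_\mathrm{EA}(\mathcal{N},0)=0$ shows that $C_\mathrm{EA}(\mathcal{N},\beta)/\beta$ is non-increasing, so $\textbf{\textit{C}}_\mathrm{EA}(\mathcal{N})=\lim_{\beta\searrow0}C_\mathrm{EA}(\mathcal{N},\beta)/\beta$. Next, fix a target $\varphi_{AA'}$ with $c\equiv\tr[G\varphi_{A'}]>0$ and a flag $F$, and feed the channel the pure bipartite input
\[
\ket{\Phi^t}_{FAA'}=\sqrt{1-t}\,\ket{0}_F\ket{\xi}_A\ket{\psi^0}_{A'}+\sqrt{t}\,\ket{1}_F\ket{\varphi}_{AA'},
\]
whose reduced input $\rho_{A'}^t=(1-t)\psi^0+t\,\varphi_{A'}$ has cost $tc$; choosing $t=\beta/c$ gives $C_\mathrm{EA}(\mathcal{N},\beta)\ge I(FA;B)_{\Phi^t}$, so it remains to lower bound $\tfrac{1}{t} I(FA;B)_{\Phi^t}$. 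Dephasing $F$ is a local channel on the $FA$ side, so by data processing it only decreases $I(FA;B)$, and it produces the flag-diagonal state $(1-t)\state{0}_F\ot\state{\xi}_A\ot\mathcal{N}(\psi^0)+t\,\state{1}_F\ot\varphi_{AB}$. For this classically-flagged state, the chain rule $I(FA;B)=I(F;B)+I(A;B|F)$ together with the identity $D(\varphi_{AB}\Vert\varphi_A\ot\sigma_B)=I(A;B)_\varphi+D(\varphi_B\Vert\sigma_B)$ gives
\[
I(FA;B)=(1-t)\,D(\mathcal{N}(\psi^0)\Vert\rho_B^t)+t\,D(\varphi_{AB}\Vert\varphi_A\ot\rho_B^t),
\]
where $\rho_B^t=\mathcal{N}(\rho_{A'}^t)$. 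Dropping the non-negative first term yields $\tfrac{1}{t} I(FA;B)_{\Phi^t}\ge D(\varphi_{AB}\Vert\varphi_A\ot\rho_B^t)$, and since $\rho_B^t\to\mathcal{N}(\psi^0)$ as $t\searrow0$, lower semicontinuity of the relative entropy gives $\liminf_{t\searrow0}\tfrac{1}{t} I(FA;B)_{\Phi^t}\ge D(\varphi_{AB}\Vert\varphi_A\ot\mathcal{N}(\psi^0))$. Hence $\textbf{\textit{C}}_\mathrm{EA}(\mathcal{N})\ge D(\varphi_{AB}\Vert\varphi_A\ot\mathcal{N}(\psi^0))/c$, and taking the supremum over $\varphi$ finishes the direct part.

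I expect the main obstacle to be this last chain: the entanglement-assisted capacity formula demands a \emph{pure} bipartite input, which forces the coherent flag, and the real point is to extract a lower bound featuring the full relative entropy $D(\varphi_{AB}\Vert\varphi_A\ot\mathcal{N}(\psi^0))$ rather than the strictly smaller $I(A;B)_\varphi$. The dephasing-plus-chain-rule step is exactly what recovers this: a naive bound that keeps only the conditional term $I(A;B|F)$ and discards $I(F;B)$ would give only $I(A;B)_\varphi$ and be too weak. The remaining points --- verifying the relative-entropy decomposition identity and justifying the lower-semicontinuity passage --- are standard in finite dimensions.
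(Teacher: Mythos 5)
Your converse and your direct part are, modulo presentation, the paper's own proof: the paper likewise establishes $\textbf{\textit{C}}_\mathrm{EA}(\mathcal{N})=\lim_{\beta\searrow 0}C_\mathrm{EA}(\mathcal{N},\beta)/\beta$ by concavity, feeds in a flagged mixture of $\varphi_A\ot\psi^0_{A'}$ and $\varphi_{AA'}$ (a classical--quantum state $\rho^\beta_{XAA'}$, with purification plus data processing playing exactly the role of your coherent flag followed by dephasing), decomposes the mutual information into your two relative-entropy terms, drops the non-negative one, and finishes with lower semicontinuity; the converse via $I(A;B)_\varphi=\inf_{\sigma_B}D(\varphi_{AB}\Vert\varphi_A\ot\sigma_B)\le D(\varphi_{AB}\Vert\varphi_A\ot\mathcal{N}(\psi^0_{A'}))$ is identical.

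The genuine problem is your reduction to a unique zero-cost state, which is wrong as written. You import the unassisted dichotomy---compare $\mathcal{N}(\phi^0)$ with $\mathcal{N}(\psi^0)$ for each zero-cost $\phi^0$---and then assert that $\tr[G\varphi_{A'}]=0$ forces $\varphi_{A'}=\psi^0$, hence that zero-cost inputs are product. If $\dim\ker G\ge 2$ this is false: zero cost only forces $\supp(\varphi_{A'})\subseteq\ker G$, and such $\varphi_{AA'}$ can be entangled. Moreover, equality of channel outputs on individual states does not make two zero-cost states interchangeable in the entanglement-assisted setting, because entanglement assistance probes the action of $\id_A\ot\mathcal{N}$ on coherences across $\ker G$. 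It is precisely for this reason that the paper's reduction dichotomizes on the bipartite quantity instead: either some zero-cost $\varphi^0_{AA'}$ satisfies $D(\varphi^0_{AB}\Vert\varphi^0_A\ot\mathcal{N}(\psi^0_{A'}))>0$, in which case an entanglement-assisted binary message can be sent at zero cost and both sides of \ref{eq:zeroEACCost} are infinite, or this quantity vanishes for every zero-cost state, which is exactly the condition needed to drop zero-cost states from the suprema in your converse. Your version can be repaired: if every zero-cost pure state has the same output $\sigma_0$, then by linearity $\mathcal{N}$ annihilates every traceless operator supported on $\ker G$, so it acts there as $X\mapsto\tr[X]\,\sigma_0$, and hence every zero-cost bipartite input is mapped to the product state $\varphi_A\ot\sigma_0$, giving the needed vanishing relative entropy---but this linearity argument (or the paper's formulation of the dichotomy) is required, and your ``forces $\varphi_{A'}=\psi^0$'' step does not supply it.
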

\begin{proof}
  The proof proceeds much like in the unassisted case. Suppose that $\varphi^0 \neq \psi^0$ has zero cost. Now, if $D(\varphi_{AB}^0 \Vert \varphi^0_A \ot \mathcal{N}(\psi^0_{A'})) > 0$, we clearly have infinite $\textbf{\textit{C}}_\mathrm{EA}(\mathcal{N})$ since we can send a distinguishable binary message (using the entangled state $\varphi_{AA'}^0$) with zero cost. Hence it is sufficient to assume that $D(\varphi_{AB}^0 \Vert \varphi^0_A \ot \mathcal{N}(\psi^0_{A'})) = 0$.

  We now prove achievability. By concavity
  of $C_\mathrm{EA}(\mathcal{N},\beta)$ with respect to $\beta$
  and the existence of a zero-cost state,
  \begin{equation}
    \textbf{\textit{C}}_\mathrm{EA}(\mathcal{N}) = \lim_{\beta \searrow 0} \frac{C_\mathrm{EA}(\mathcal{N},\beta)}{\beta}.
  \end{equation}
  Consider some $\varphi_{AA'}$. Since we can trivially achieve zero rate, suppose that $\varphi_{A'}$ has positive cost. Then, define the following state:
  \begin{multline}
    \rho^\beta_{XAA'}  \equiv  \left(1- \frac{\beta}{\tr[G \varphi_{A'}]}\right) \state{0}_X \ot \varphi_A \ot \psi^0_{A'}\\
     +\frac{\beta}{\tr[G\varphi_{A'}]} \state{1}_X \ot \varphi_{AA'} ,
  \end{multline}
  where $\beta \in (0,\tr[G\varphi_{A'}])$.
  By \cref{thm:generalEACCost} and the data-processing inequality for mutual information, we obtain the following entanglement-assisted rate per unit cost for the mixed state $\rho^\beta_{XAB} \equiv (\id_{XA} \ot \mathcal{N}_{A'\to B})(\rho^\beta_{XAA'})$:
  \begin{align}
    \frac{I(XA;B)_{\rho^\beta}}{\beta}. 
  \end{align}
  Now, we can write the mutual information of any classical-quantum state 
  \begin{equation}
    \rho_{XAB} = \sum_{x}^{} p_X(x) \state{x}_X \ot \rho_{AB}^x
  \end{equation}
  as the following convex sum of relative entropies:
  \begin{align}
    & I(XA;B)_\rho \nonumber\\
    & = \tr\left[ \rho_{XAB} \left( \log_2 \rho_{XAB} - \log_2 \rho_{XA} \ot \rho_B \right) \right] \nonumber\\
    & = \sum_{x}^{} p_X(x) \tr_{AB}\left[ \rho_{AB}^x \left( \log_2 \rho_{AB}^x - \log_2 \rho_A^x \ot \rho_B \right) \right] \nonumber\\
    & = \sum_{x}^{} p_X(x) D(\rho_{AB}^x \Vert \rho_A^x \ot \rho_B) \label{eq:EACQ}.
  \end{align}
  Thus, by the non-negativity of quantum relative entropy when evaluated on quantum states,
  \begin{align}
    \frac{I(XA;B)_{\rho^\beta}}{\beta} & \ge \frac{D\!\left( \varphi_{AB} \Vert \varphi_A \ot \rho_B^\beta\right)}{\tr[ G \varphi_{A'}]}.
  \end{align}
  Hence, again by the lower semicontinuity of the relative entropy, 
  \begin{align}
    \textbf{\textit{C}}_\mathrm{EA}(\mathcal{N}) & = \lim_{\beta \searrow 0} \frac{C_\mathrm{EA}(\mathcal{N},\beta)}{\beta} \\
    & \ge \lim_{\beta \searrow 0} \frac{I(XA;B)_{\rho^\beta}}{\beta} \\
    & \ge \liminf_{\beta \searrow 0} \frac{D\!\left( \varphi_{AB} \Vert \varphi_A \ot \rho_B^\beta\right)}{\tr[ G \varphi_{A'}]} \\
    & \ge \frac{D( \varphi_{AB} \Vert \varphi_A \ot \mathcal{N}(\psi^0_{A'}))}{\tr[ G \varphi_{A'}]} .
  \end{align}

  For the converse, we have for any pure input state $\varphi_{AA'}$,
  \begin{align}
    I(A;B)_\varphi & = \inf_{\sigma_B} D(\varphi_{AB} \Vert \varphi_A \ot \sigma_B)\\
    & \le D( \varphi_{AB} \Vert \varphi_A \ot \mathcal{N}(\psi^0_{A'})).
  \end{align}
  Note again that if the relative entropy is infinite, then the bound is trivial. Hence,
  \begin{align}
    \frac{C_\mathrm{EA}(\mathcal{N},\beta)}{\beta} & \le \sup_{\substack{ \varphi_{AA'} \\ \tr[G \varphi_{A'}] \le \beta} } \frac{D( \varphi_{AB} \Vert \varphi_A \ot \mathcal{N}(\psi^0_{A'}) )}{\beta}. \label{eq:EAconverse2}
  \end{align}
  Now, we assumed that for any zero-cost state $\varphi^0$, $D( \varphi_{AB} \Vert \varphi^0_A \ot \mathcal{N}(\psi^0_{A'}) ) =0$. Thus we can take the supremum over non-zero cost states. If there are not any, then \cref{eq:EAconverse2} implies that the upper bound is 0, which would conclude the converse. Otherwise, we can argue
  \begin{align}
    \frac{C_\mathrm{EA}(\mathcal{N},\beta)}{\beta} & \le \sup_{\substack{ \varphi_{AA'} \\ \frac{1}{\beta} \tr[G \varphi_{A'}] \le 1 } } \frac{D( \varphi_{AB} \Vert \varphi_A \ot \mathcal{N}(\psi^0_{A'}) )}{\tr[G \varphi_{A'}]} \frac{\tr[G \varphi_{A'}]}{\beta}\\
    & \le \sup_{\varphi_{AA'} }\frac{D( \varphi_{AB} \Vert \varphi_A \ot \mathcal{N}(\psi^0_{A'}) )}{\tr[G \varphi_{A'}]} .
  \end{align}
  This concludes the proof.
\end{proof}

\subsection{Pulse-Position-Modulation Scheme for Entanglement-Assisted Classical Communication}

We propose a PPM scheme that achieves the rate given in \cref{eq:zeroEACCost}, thereby providing an alternative proof of the direct part of Theorem~\ref{thm:EA-zero-cost-state}. This will be much like the scheme in the unassisted case except with the greater discriminatory power that entanglement assistance provides.
\\~\\
\textit{Encoding}: Let $\varphi_{A'}$ be a positive-cost state and fix $M , N\in \mathbb{N}$. The sender and receiver share $M N$ copies of a pure state $\varphi_{AA'}$, where $A'$ is at the sender and $A$ is at the receiver, where we have $N$ copies for each message in $[1:M]$. Hence the overall shared state is
\begin{equation}
   \bigotimes_{i=1}^{M}\left( \varphi_{A_i A_i'}\right)^{\ot N}.
\end{equation}
For a message $m \in [1:M]$, the sender transmits a $\varphi_{A'}$-pulse amidst a zero-cost state baseline by using the following sequence of states:
\begin{equation}
   \bigotimes_{i=1}^{m-1}\left( \psi^0_{A_i'} \right)^{\ot N} \ot \left(\varphi_{A'_{m}}\right)^{\ot N} \ot  \bigotimes_{j=m+1}^{M} \left(\psi^0_{A_j'} \right)^{\otimes N}.
\end{equation}
That is, the sender transmits the zero-cost state, but at every $m$th block of length $N$, he sends his shares of the corresponding copies of $\varphi_{AA'}$. Note that the cost of each codeword is $N\tr[G \varphi_{A'}]$.
\\~\\
\textit{Decoding}: Let $\varepsilon \in (0,1)$. Now, since $\varphi_{AA'}$ purifies $\varphi_{A'}$, whenever the sender transmits $\psi^0$, the receiver obtains a product state $\varphi_A \ot \mathcal{N}(\psi^0)$. Hence, the receiver now has the state
\begin{multline}
  \label{eq:EAreceivePPM}
   \bigotimes_{i=1}^{m-1} \left(\varphi_{A_i} \ot\mathcal{N}( \psi^0_{A_i'} )\right)^{\ot N} \ot (\id_A \ot \mathcal{N}) (\varphi_{A_mA_m'})^{\ot N} \\
   \ot \bigotimes_{j=m+1}^{M} \left( \varphi_{A_j} \ot\mathcal{N}(\psi^0_{A_j'})\right)^{\ot N}.
\end{multline}
Then, the receiver uses quantum hypothesis testing along with his shares of the entangled states to deduce the position of the pulse. He performs $M$ independent binary hypothesis tests with $N$ copies where the null hypothesis is $\varphi_{AB} \equiv (\id_A \ot \mathcal{N}_{A' \to B})(\varphi_{AA'})$ and the alternative hypothesis is $\varphi_A \ot \mathcal{N}(\psi^0_{A'})$. If the receiver obtains a test result of the form \cref{eq:EAreceivePPM} for some $\hat m$, then $\hat m$ is declared. Otherwise an error is declared.
\\~\\
\textit{Error Analysis}: The error analysis follows in exactly the same way as the unassisted case. We conclude that we can obtain vanishing error in transmission provided that, for some $\delta > 0$,
\begin{equation}
  \frac{\log_2 M}{N \tr[G \varphi_{A'}]} < \frac{D(\varphi_{AB} \Vert \varphi_A \ot \mathcal{N}(\psi^0_{A'}))}{\tr[G \varphi_{A'}]} - \frac{\delta}{\tr[G \varphi_{A'}]}.
\end{equation}
Hence we achieve the entanglement-assisted rate per unit cost $\frac{D(\varphi_{AB} \Vert \varphi_A \ot \mathcal{N}(\psi^0_{A'}))}{\tr[G \varphi_{A'}]}$. 

Similar to the position-based coding scheme for entanglement-assisted classical communication~\cite{anshu2017one}, this scheme does not consume all the entanglement needed to implement the encoding. This follows from the gentle-measurement lemma \cite{itit1999winter,thesis1999winter}: the entangled states that were not transmitted but measured by the decoder will only be negligibly disturbed, given that the decoding measurement succeeds with high probability. Now,
the natural measure of rate of entanglement consumption in this setting is
 entanglement consumed per unit cost,  and in  this scheme it can be expressed in terms of the entanglement entropy of $\varphi_{AA'}$ as follows:
\begin{equation}
  \frac{N S(A)_\varphi}{N \tr[G \varphi_{A'}]} = \frac{S(A)_\varphi}{ \tr[ G \varphi_{A'}]}.
\end{equation}

\section{Private Communication}

\label{sec:priv-comm}

We now consider private communication over a quantum channel. This was first studied in \cite{devetak2005private,1050633} when there is no cost constraint and recently in \cite{wilde2016energy} when there is a cost constraint.  Given a noisy channel $\mathcal{N}_{A \to B}$, let $\mathcal{U}_{A \to BE}$ denote an isometric channel extending it and let $\mathcal{N}^c_{A \to E} = \tr_B \circ \ \mathcal{U}_{A \to BE}$ denote the induced complementary channel. 
A channel $\mathcal{N}_{A \to B}$ is degradable if there exists a degrading channel $\mathcal{D}_{B \to E}$ such that $\mathcal{N}^c_{A \to E} = \mathcal{D}_{B \to E} \circ \mathcal{N}_{A \to B}$ \cite{cmp2005dev}.

The formulation here is based on~\cite{el2013secrecy}, but note that here we use a definition of a private code with the privacy based on trace distance~\cite{KRBM07,wilde2013quantum,wilde2016energy}. Namely, we define an $(n, M, \nu, \varepsilon, \zeta)$ private code as having blocklength $n\in \mathbb{N}$, number of messages $M\in \mathbb{N}$, total cost at most $\nu\in \mathbb{R}_{> 0}$, and probability of error of the receiver's decoding at most $\varepsilon\in [0,1]$. The quantity $\zeta\in[0,1]$ bounds the privacy error for each message: for each message, we demand that the eavesdropper's state is approximately independent of the message. Specifically, for all $m \in [1:M]$, we require that
\begin{equation}
  \frac{1}{2} \norm{ \left( \mathcal{N}^c_{A \to E} \right)^{\ot n} (\rho_{A^n}(m)) - \sigma_{E^n}}_1 \leq \zeta,
\end{equation}
where $\rho_{A^n}(m)$ are the codewords and  $\sigma_{E^n}$ is some fixed state independent of $m$.

We can now establish some definitions.
\begin{dfn}
  [\cite{wilde2016energy}]
  Given $\beta >0$, $R_p$ is an achievable private communication rate with average cost not exceeding $\beta$ if for all $\varepsilon, \zeta \in (0,1)$ and
  $\delta>0$, $\exists \, n_0$ such that $\forall n \ge n_0$, there is an $(n, M, n\beta, \varepsilon, \zeta)$ code for which
  \begin{equation}
    \frac{\log_2 M}{n} > R_p - \delta.
  \end{equation}
  The supremum of all achievable rates with average cost not exceeding $\beta$ as a function of $\beta$ is the private capacity cost function $P(\mathcal{N},\beta)$.
\end{dfn}
We recall the formula for $P(\mathcal{N},\beta)$ when $\mathcal{N}$ is a degradable channel:
\begin{thm}
  [\cite{wilde2016energy}]
  The private capacity cost function for a degradable channel $\mathcal{N}_{A \to B}$ is given by 
  \begin{equation}
    P(\mathcal{N},\beta) = \sup_{\substack{ \left\{ p_X, \psi_A(x) \right\} \\ \tr[G \bar \psi_A] \le \beta }} I(X;B)_\rho - I(X;E)_\rho,
  \end{equation}
  where each state $\psi_A(x)$ is pure,
  \begin{equation}
    \rho_{XBE} = \sum_{x}^{} \state{x}_X \ot \mathcal{U}_{A \to BE} (\psi_A(x)),
  \end{equation}
  and $\bar \psi_A = \sum_x p_X(x) \psi_A(x)$ is the average input state.
\end{thm}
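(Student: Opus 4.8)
The plan is to establish the formula through a coding theorem (achievability) and a matching converse, adapting the standard private-capacity analysis for degradable channels (cf.~\cite{devetak2005private,cmp2005dev,el2013secrecy}) to account for the cost constraint. Before doing either, I would record a simplification. Since each $\psi_A(x)$ is pure, its isometric extension $\mathcal{U}_{A\to BE}$ produces a pure state on $BE$, so $S(B)_{\psi_A(x)} = S(E)_{\psi_A(x)}$ for every $x$, whence $S(B\mid X)_\rho = S(E\mid X)_\rho$. The objective therefore collapses to $I(X;B)_\rho - I(X;E)_\rho = S(B)_\rho - S(E)_\rho = S(\mathcal{N}(\bar\psi_A)) - S(\mathcal{N}^c(\bar\psi_A))$, i.e., the coherent information of the average input state $\bar\psi_A$. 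Hence the supremum depends on the ensemble only through $\bar\psi_A$, and the right-hand side equals the cost-constrained coherent information $\sup_{\rho_A:\,\tr[G\rho_A]\le\beta} I_c(\rho_A,\mathcal{N})$; this reframing is what makes the degradable structure usable.

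For achievability I would run Devetak's private coding scheme with cost-constrained random coding. Codewords are drawn i.i.d.\ from $\{p_X,\psi_A(x)\}$ but conditioned to lie in the cost-typical set, so that each codeword's cost concentrates below $n\beta$. A decoding measurement (square-root or sequential) then recovers the message with vanishing error whenever the rate is below $I(X;B)_\rho$, while an extra layer of roughly $2^{nI(X;E)_\rho}$ randomizing sub-messages, analyzed with a soft-covering argument, drives the trace-distance privacy error to zero. The resulting private rate is $I(X;B)_\rho - I(X;E)_\rho$, and optimizing over cost-constrained ensembles gives the lower bound.

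For the converse, starting from an $(n,M,n\beta,\varepsilon,\zeta)$ code I would form the classical--quantum state on $M B^n E^n$ and bound $\log_2 M$ from above. Fano's inequality handles the decoding contribution, and the trace-distance privacy guarantee is turned into an entropic statement through a continuity estimate (an AFW/Winter-type bound), yielding $\log_2 M \le I(M;B^n)_\rho - I(M;E^n)_\rho + o(n)$. The heart of the argument is then single-letterization: for a degradable channel the coherent information is additive, so $I(M;B^n)_\rho - I(M;E^n)_\rho \le \sum_i I_c(\rho_{A_i},\mathcal{N})$, and combining concavity of $\rho\mapsto I_c(\rho,\mathcal{N})$ with the average-cost bound $\tfrac1n\sum_i \tr[G\rho_{A_i}]\le\beta$ lets me pass to a single letter via Jensen/time-sharing, bounding the whole expression by $n\sup_{\tr[G\rho]\le\beta} I_c(\rho,\mathcal{N})$.

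The hard part will be the converse single-letterization, which is exactly where degradability is indispensable: the inequality $I_c(\rho_{A^n},\mathcal{N}^{\otimes n})\le\sum_i I_c(\rho_{A_i},\mathcal{N})$ is false for general channels and must be proved by applying data processing to the degrading relation $\mathcal{N}^c=\mathcal{D}\circ\mathcal{N}$. A secondary difficulty is that the privacy criterion here is stated in trace distance rather than as a mutual-information bound, so the continuity conversion in the converse must be quantitatively strong enough that all error terms remain $o(n)$ after dividing by $n$.
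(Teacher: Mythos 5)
This theorem is recalled in the paper as a known result from \cite{wilde2016energy}; the paper itself gives no proof of it, so there is no internal proof to compare against. Your proposal is a correct reconstruction of the argument in that reference: the observation that $I(X;B)_\rho - I(X;E)_\rho$ collapses to the coherent information $I_c(\bar\psi_A,\mathcal{N}) = S(\mathcal{N}(\bar\psi_A)) - S(\mathcal{N}^c(\bar\psi_A))$ for pure-state ensembles, Devetak-style achievability with cost-typical codewords and a covering layer of size roughly $2^{nI(X;E)_\rho}$, and a converse via Fano plus an AFW-type continuity estimate followed by degradability-based single-letterization are exactly the ingredients used there. One step you compress deserves to be made explicit: in the converse the codeword states $\rho_{A^n}(m)$ are in general mixed, so before you can invoke additivity you must first bound $I(M;B^n)-I(M;E^n)\le I_c(\bar\rho_{A^n},\mathcal{N}^{\otimes n})$ by refining the mixed ensemble into a pure one and showing that the refinement can only increase the private information, i.e.\ $I(Y;B^n|M)\ge I(Y;E^n|M)$, via data processing through the degrading map. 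This is an application of degradability logically separate from (though in the same spirit as) the subadditivity $I_c(\bar\rho_{A^n},\mathcal{N}^{\otimes n})\le\sum_i I_c(\bar\rho_{A_i},\mathcal{N})$ that you do cite. Finally, note that the average codeword state inherits the cost bound $\tr[G_n\bar\rho_{A^n}]\le n\beta$, which is what allows the concavity/Jensen step to close the argument at single-letter cost $\beta$. With these points spelled out, your outline is sound and follows the same route as the cited source.
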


Now we give the definition for the private capacity per unit cost.
\begin{dfn}
  $\textbf{\textit{R}}_p$ is an achievable private communication rate per unit cost if for all $\varepsilon, \zeta \in (0,1)$ and
  $\delta>0$, there $\exists \, \nu_0 > 0$ such that $\forall \nu \ge \nu_0$, there is an $(n, M, \nu, \varepsilon, \zeta)$ code for which
  \begin{equation}
    \log_2 M > \nu(\textbf{\textit{R}}_p -\delta).
  \end{equation}
  The private capacity per unit cost is equal to the supremum of all achievable private communication rates per unit cost, denoted by $\textbf{\textit{P}}(\mathcal{N})$.
\end{dfn}

We can obtain an expression for the private capacity per unit cost via the private capacity cost function.  A proof of this follows from the achievability and converse of the cost-constrained private capacity per channel use \cite{wilde2016energy}, just as in the proof of \cref{thm:generalCCost}. 
\begin{thm}
  \label{thm:generalPCost}
  The private capacity per unit cost of a degradable channel $\mathcal{N}_{A \to B}$ is given by
  \begin{equation}
    \textbf{\textit{P}}(\mathcal{N}) = \sup_{\beta > 0} \frac{P(\mathcal{N},\beta)}{\beta} = \sup_{\{p_X, \psi_A(x)\}} \frac{I(X;B)_\rho - I(X;E)_\rho}{\tr[ G \bar \psi_A]},
  \end{equation}
  where $\bar \psi_A$ is the average input state.
\end{thm}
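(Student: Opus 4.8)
The plan is to imitate the structure of the proof of \cref{thm:generalCCost} almost verbatim, since the definition of the private capacity per unit cost differs from the unassisted classical case only in the additional privacy constraint $\zeta$, which is carried along passively through the argument. The key observation is that the entire proof of \cref{thm:generalCCost} rests on two ingredients: (i) an achievability statement, namely that any rate $R_p$ achievable per channel use at average cost $\beta$ yields $R_p/\beta$ as an achievable rate per unit cost; and (ii) a converse, namely a single-letter-type upper bound on $\log_2 M$ for any code. Both ingredients are available here: ingredient (i) follows from the coding theorem for the cost-constrained private capacity in \cite{wilde2016energy}, and the formula $P(\mathcal{N},\beta)$ for degradable channels is already recalled above.

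First I would establish the first equality $\textbf{\textit{P}}(\mathcal{N}) = \sup_{\beta>0} P(\mathcal{N},\beta)/\beta$. For achievability I would take $R_p$ to be achievable per channel use at cost $\beta$, fix $\varepsilon,\zeta,\delta$, and run the identical ``padding'' argument as in \cref{thm:generalCCost}: for $\nu = n\beta$ the code works directly, while for $n\beta < \nu < (n+1)\beta$ the same $(n,M,n\beta,\varepsilon,\zeta)$ code is reinterpreted as an $(n,M,\nu,\varepsilon,\zeta)$ code and one bounds $\tfrac{\log_2 M}{\nu}$ from below using $\tfrac{n\beta}{\nu} > \tfrac{n}{n+1}$, with the choice $\nu_0 = \max\{(n_0+1)\beta, 2R_p/\delta\}$. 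The privacy parameter $\zeta$ simply rides along since shrinking the blocklength region does not affect the per-message privacy guarantee. For the converse I would need a bound of the form $\log_2 M \le P(\mathcal{N},\nu) + g(M,\varepsilon,\zeta)$ with $g/\nu \to 0$; this is the step requiring genuine input, and I would invoke the converse for the cost-constrained private capacity established in \cite{wilde2016energy} (a combination of data processing, the degradability structure, and a continuity/Fano-type bound whose correction term grows sublinearly in $\nu$). Given such a bound, dividing by $\nu$, optimizing $\beta$, and taking $\liminf_{\nu\to\infty}$ yields $\textbf{\textit{P}}(\mathcal{N}) \le \sup_{\beta>0} P(\mathcal{N},\beta)/\beta$, exactly paralleling the displayed chain in the proof of \cref{thm:generalCCost}.

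For the second equality I would reproduce the short chain of inequalities: starting from $\sup_{\beta>0}\tfrac{1}{\beta} P(\mathcal{N},\beta)$, write out $P(\mathcal{N},\beta)$ as the constrained supremum of $I(X;B)_\rho - I(X;E)_\rho$, bound $\tfrac{1}{\beta} \le \tfrac{1}{\tr[G\bar\psi_A]}$ on the feasible set $\{\tr[G\bar\psi_A]\le\beta\}$, and then collapse the double supremum over $\beta$ and ensembles into the single unconstrained supremum $\sup_{\{p_X,\psi_A(x)\}} \tfrac{I(X;B)_\rho - I(X;E)_\rho}{\tr[G\bar\psi_A]}$. The reverse inequality is the achievability direction: for any fixed ensemble the cost-constrained private coding of \cite{wilde2016energy} achieves the rate $I(X;B)_\rho - I(X;E)_\rho$ per channel use at average cost $\tr[G\bar\psi_A]$, hence the ratio is an achievable rate per unit cost.

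The main obstacle is the converse step, specifically verifying that the correction term $g(M,\varepsilon,\zeta)$ arising from the privacy constraint is sublinear in $\nu$ (equivalently in $\log_2 M$), so that it vanishes after division by $\nu$. In the unassisted case this followed from a clean entropy continuity bound; in the private setting one must control both the decoding error $\varepsilon$ and the privacy error $\zeta$ simultaneously, which is exactly what the cost-constrained private converse of \cite{wilde2016energy} supplies, so I would lean on that result rather than rederive it. Everything else is a transcription of the \cref{thm:generalCCost} argument.
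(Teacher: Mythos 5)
Your proposal is correct and follows essentially the same route as the paper: the paper's own justification of \cref{thm:generalPCost} is precisely that it ``follows from the achievability and converse of the cost-constrained private capacity per channel use \cite{wilde2016energy}, just as in the proof of \cref{thm:generalCCost},'' and your transcription of that argument (padding argument for achievability with the privacy parameter $\zeta$ carried along, the converse via the cost-constrained private converse of \cite{wilde2016energy} with a sublinear correction term, and the collapse of the double supremum for the second equality) is exactly what the authors intend. You simply spell out details the paper leaves implicit.
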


Now again suppose that we have a zero-cost state $\psi^0$. We then obtain the following simpler expression for the private capacity per unit cost.
\begin{thm}
  \label{thm:zeroPrivate}
  If there is a state $\psi^0$ with zero cost, then the private capacity per unit cost of a degradable channel $\mathcal{N}_{A \to B}$ is given by
  \begin{equation}
    \textbf{\textit{P}}(\mathcal{N}) = \sup_{\psi} \textbf{\textit{P}}_{\psi^0}(\mathcal{N},\psi),
  \end{equation}
  where $\psi$ is pure,
  \begin{align}
    \textbf{\textit{P}}_{\psi^0}(\mathcal{N},\psi) & \equiv 
    \begin{cases}
    \frac{N_\mathcal{N}(\psi,\psi^0)}{\bra{\psi} G \ket{\psi}} & N_\mathcal{N}(\psi,\psi^0) > 0\\
      0 & \text{otherwise}
    \end{cases}
    ,\\
    N_\mathcal{N}(\psi,\psi^0) & \equiv D(\mathcal{N}(\psi) \Vert \mathcal{N}(\psi^0)) - D(\mathcal{N}^c(\psi) \Vert \mathcal{N}^c(\psi^0)),
  \end{align}
  and $\mathcal{N}^c_{A \to E}$ denotes the complementary channel of $\mathcal{N}$ corresponding to an isometric channel $\mathcal{U}_{A \to BE}$ extending $\mathcal{N}$. 
\end{thm}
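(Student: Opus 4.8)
The plan is to mirror the proofs of \cref{thm:zero-cost-rel-ent-CC} and \cref{thm:EA-zero-cost-state}, reducing everything to the cost-constrained private capacity formula of \cref{thm:generalPCost} and then exploiting degradability to control a difference of relative entropies. First I would dispose of the edge cases. If there is a second zero-cost state $\phi^0 \neq \psi^0$ with $N_\mathcal{N}(\phi^0,\psi^0) > 0$, then the binary ensemble $\{\phi^0,\psi^0\}$ already yields a positive private rate at zero cost, so $\textbf{\textit{P}}(\mathcal{N}) = +\infty$, matching the right-hand side (where $\textbf{\textit{P}}_{\psi^0}(\mathcal{N},\phi^0)$ has a vanishing denominator). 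Hence I may assume without loss of generality that every zero-cost state $\phi^0$ satisfies $N_\mathcal{N}(\phi^0,\psi^0)=0$, i.e., $\psi^0$ is effectively the unique zero-cost state, so that $P(\mathcal{N},0)=0$. Since time-sharing makes $P(\mathcal{N},\beta)$ concave in $\beta$, the ratio $P(\mathcal{N},\beta)/\beta$ is then monotone non-increasing and $\textbf{\textit{P}}(\mathcal{N}) = \lim_{\beta \searrow 0} P(\mathcal{N},\beta)/\beta$.

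The central algebraic step is a decomposition of the private information. Applying the classical--quantum relative-entropy identity \cref{eq:CQHolevo} to both the output $B$ and the environment $E$, and then shifting the second argument of each relative entropy from $\mathcal{N}(\bar\psi_A)$ (resp.\ $\mathcal{N}^c(\bar\psi_A)$) to $\mathcal{N}(\psi^0)$ (resp.\ $\mathcal{N}^c(\psi^0)$), I obtain for any pure-state ensemble $\{p_X,\psi^x\}$
\begin{equation}
I(X;B)_\rho - I(X;E)_\rho = \sum_x p_X(x)\, N_\mathcal{N}(\psi^x,\psi^0) - N_\mathcal{N}(\bar\psi_A,\psi^0).
\end{equation}
The crucial consequence of degradability is that, writing $\mathcal{N}^c = \mathcal{D} \circ \mathcal{N}$ and applying the data-processing inequality, $N_\mathcal{N}(\rho,\psi^0) \ge 0$ for every input state $\rho$; in particular the subtracted term $N_\mathcal{N}(\bar\psi_A,\psi^0)$ is non-negative.

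For the direct part I would take the binary ensemble $\rho^\beta_{XA} = (1-\lambda)\,\state{0}\ot\psi^0 + \lambda\,\state{1}\ot\psi$ with $\lambda = \beta/\bra{\psi}G\ket{\psi}$, exactly as in \cref{thm:zero-cost-rel-ent-CC}, so that $\tr[G\bar\psi_A]=\beta$ and $N_\mathcal{N}(\psi^0,\psi^0)=0$. The decomposition then gives
\begin{equation}
\frac{I(X;B)_{\rho^\beta} - I(X;E)_{\rho^\beta}}{\beta} = \frac{N_\mathcal{N}(\psi,\psi^0)}{\bra{\psi}G\ket{\psi}} - \frac{N_\mathcal{N}(\bar\psi_A,\psi^0)}{\beta},
\end{equation}
and I would control the subtracted term by the sandwich $0 \le N_\mathcal{N}(\bar\psi_A,\psi^0) \le D(\mathcal{N}(\bar\psi_A)\Vert\mathcal{N}(\psi^0))$, where the left inequality is degradability and the right one drops the non-negative $\mathcal{N}^c$ contribution. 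Since $\bar\psi_A = (1-\lambda)\psi^0 + \lambda\psi$, a first-order estimate of the relative entropy near $\psi^0$ (the statement that $\frac{1}{\beta} D(\beta\tau + (1-\beta)\sigma\Vert\sigma)\to 0$ as $\beta\searrow 0$ whenever $\supp(\tau)\subseteq\supp(\sigma)$) shows $D(\mathcal{N}(\bar\psi_A)\Vert\mathcal{N}(\psi^0))/\beta \to 0$, so the subtracted term vanishes in the limit and I recover $\textbf{\textit{P}}(\mathcal{N}) \ge N_\mathcal{N}(\psi,\psi^0)/\bra{\psi}G\ket{\psi}$ for each $\psi$ with finite $D(\mathcal{N}(\psi)\Vert\mathcal{N}(\psi^0))$ (the infinite case yielding an immediate $+\infty$ lower bound). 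This derivative estimate is the main obstacle: unlike the classical and entanglement-assisted cases, the difference of relative entropies blocks the simple ``drop a non-negative term and invoke lower semicontinuity'' argument, because lower semicontinuity bounds the $\liminf$ in the wrong direction; I therefore need the sharper vanishing-rate estimate together with degradability to pin down both the $\mathcal{N}$ and $\mathcal{N}^c$ terms.

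For the converse I would again start from $\textbf{\textit{P}}(\mathcal{N}) = \sup_{\beta>0} P(\mathcal{N},\beta)/\beta$ and use the decomposition to drop the non-negative subtracted term, $I(X;B)_\rho - I(X;E)_\rho \le \sum_x p_X(x) N_\mathcal{N}(\psi^x,\psi^0)$, which is valid by degradability. Restricting to positive-cost letters (the remaining zero-cost letters contribute nothing by the uniqueness reduction), I would factor each summand as $\frac{N_\mathcal{N}(\psi^x,\psi^0)}{\bra{\psi^x}G\ket{\psi^x}}\,\bra{\psi^x}G\ket{\psi^x}$, bound the ratio by $\sup_\psi \textbf{\textit{P}}_{\psi^0}(\mathcal{N},\psi)$, and use $\sum_x p_X(x)\bra{\psi^x}G\ket{\psi^x} = \tr[G\bar\psi_A] \le \beta$ to conclude $P(\mathcal{N},\beta)/\beta \le \sup_\psi \textbf{\textit{P}}_{\psi^0}(\mathcal{N},\psi)$ for every $\beta>0$, following the same chain as \cref{eq:bunch-1}--\cref{eq:bunch-last}; any letter with infinite $N_\mathcal{N}$ renders the bound trivial. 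Combining the two parts establishes the claimed formula.
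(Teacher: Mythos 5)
Your proposal is correct in its main line, and its converse coincides exactly with the paper's: the same shift-of-reference decomposition $I(X;B)_\rho - I(X;E)_\rho = \sum_x p_X(x)\,N_\mathcal{N}(\psi^x,\psi^0) - N_\mathcal{N}(\bar\psi_A,\psi^0)$, the same use of degradability plus data processing to conclude $N_\mathcal{N}(\cdot,\cdot)\ge 0$ and drop the subtracted term, and the same cost-factoring chain as in \cref{eq:bunch-1}--\cref{eq:bunch-last}. Where you genuinely diverge is the direct part. The paper bounds $I(X;B)_{\rho^\beta}-I(X;E)_{\rho^\beta} \ge \lambda\bigl[D(\mathcal{N}(\psi)\Vert\rho_B^\beta) - D(\mathcal{N}^c(\psi)\Vert\rho_E^\beta)\bigr]$ (with $\lambda = \beta/\bra{\psi}G\ket{\psi}$ and $\rho_B^\beta,\rho_E^\beta$ the output and environment marginals of the ensemble) and then, precisely because lower semicontinuity of relative entropy controls only the $\liminf$ of the first term --- the obstacle you correctly identify --- it invokes an external result: lower semicontinuity of the private information as a function of the input ensemble (Corollary 3 of~\cite{shirokov2016lower}). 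You instead keep the exact decomposition, which isolates the error term $N_\mathcal{N}(\bar\psi_A,\psi^0)/\beta$, sandwich it via degradability between $0$ and $D(\mathcal{N}(\bar\psi_A)\Vert\mathcal{N}(\psi^0))/\beta$, and kill it with the first-order estimate $\lim_{\beta\searrow 0}\frac{1}{\beta}D(\beta\rho+(1-\beta)\sigma\Vert\sigma)=0$ under $\supp(\rho)\subseteq\supp(\sigma)$. This is more elementary and self-contained: the only nontrivial ingredient is the derivative lemma, which holds in finite dimensions by explicit differentiation. The trade-off is that the semicontinuity route used by the paper is the one that survives the passage to infinite dimensions (needed in \cref{sec:gaussian}), while your Taylor-estimate route is tied to the finite-dimensional, finite-entropy setting.

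One caveat. Your dismissal of the case $D(\mathcal{N}(\psi)\Vert\mathcal{N}(\psi^0))=\infty$ as ``yielding an immediate $+\infty$ lower bound'' does not follow from your own machinery: in that case $D(\mathcal{N}(\bar\psi_A)\Vert\mathcal{N}(\psi^0))=\infty$ as well (since $\supp(\mathcal{N}(\psi))\subseteq\supp(\mathcal{N}(\bar\psi_A))$), so your exact decomposition degenerates to $\infty-\infty$ and the sandwich gives nothing. The fix is to revert, for this case only, to the per-letter form: by degradability each term $D(\rho_B^x\Vert\rho_B)-D(\rho_E^x\Vert\rho_E)$ is non-negative, so $\bigl[I(X;B)_{\rho^\beta}-I(X;E)_{\rho^\beta}\bigr]/\beta \ge \bigl[D(\mathcal{N}(\psi)\Vert\rho_B^\beta)-D(\mathcal{N}^c(\psi)\Vert\rho_E^\beta)\bigr]/\bra{\psi}G\ket{\psi}$, where both relative entropies are finite for $\beta>0$; lower semicontinuity of relative entropy sends the first term to $+\infty$ as $\beta \searrow 0$, while the operator inequality $\rho_E^\beta \ge (1-\lambda)\mathcal{N}^c(\psi^0)$ bounds the second by $D(\mathcal{N}^c(\psi)\Vert\mathcal{N}^c(\psi^0)) + \log_2\frac{1}{1-\lambda}$, which stays bounded whenever $N_\mathcal{N}(\psi,\psi^0)=+\infty$ is well defined (i.e., the environment relative entropy is finite). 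This is a local patch rather than a flaw in your approach: for every $\psi$ with finite $D(\mathcal{N}(\psi)\Vert\mathcal{N}(\psi^0))$, your argument is complete as written.
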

\begin{proof}
  Suppose that the state $\varphi^0$ has zero cost. If $N_\mathcal{N}(\varphi^0,\psi^0) \neq 0$, we have a zero-cost binary alphabet over which we can form ensembles for which $I(X;B) - I(X;E) > 0$. Hence, by \cref{thm:generalPCost}, we can attain infinite private capacity per unit cost. Thus, it suffices to assume $N_\mathcal{N}(\varphi^0, \psi^0) = 0$. 

  Now, once again by concavity and the existence of a zero-cost state,
  \begin{equation}
    \textbf{\textit{P}}(\mathcal{N}) = \lim_{\beta \searrow 0} \frac{P(\mathcal{N},\beta)}{\beta}.
  \end{equation}
  Let $\beta \in (0,\bra{\psi} G \ket{\psi})$, and let $\psi$ be a pure state. We can assume that $\psi$ has positive cost since it is trivial to attain zero rate. Consider the following classical-quantum state:
  \begin{multline}
     \rho_{XBE}^\beta =  \left(1 - \frac{\beta}{\bra{\psi} G \ket{\psi}}\right) \state{0}_X \ot \mathcal{U}_{A \to BE}(\psi^0_A)  \\
     +
    \frac{\beta}{\bra{\psi} G \ket{\psi}} \state{1}_X \ot \mathcal{U}_{A \to BE}(\psi_A) .
  \end{multline}
By a similar argument as in the unassisted case applied to each relative entropy in $N_\mathcal{N}$, in the limit $\beta \searrow 0$, this ensemble achieves the desired quantity:
  \begin{equation}
    \textbf{\textit{P}}(\mathcal{N}) \ge \lim_{\beta \searrow 0} \frac{I(X;B)_{\rho^\beta} - I(X;E)_{\rho^\beta}}{\beta} \ge \frac{N_\mathcal{N}(\psi,\psi^0)}{\bra{\psi}G\ket{\psi}}.
  \end{equation}
  In arriving at the above result, we need to make use of the lower semi-continuity of the private information as a function of the input ensemble. This is proven for bounded cost ensembles in certain settings in Corollary 3 of~\cite{shirokov2016lower} and in particular applies to our  case here. 

  For the converse, we note that for any ensemble $\left\{ p_X, \rho^x \right\}$,
  \begin{align}
    & I(X;B)_\rho - I(X;E)_\rho \nonumber \\
    & = \sum_{x}^{}p_X(x) N_\mathcal{N}(\rho^x, \bar \rho) \\
    & = \sum_{x}^{}p_X(x) N_\mathcal{N}(\rho^x, \psi^0) - N_\mathcal{N}(\bar \rho, \psi^0) \\
    & \le \sum_{x}^{}p_X(x) N_\mathcal{N}(\rho^x, \psi^0).
  \end{align}
The inequality follows since by degradability of $\mathcal{N}$ and monotonicity of relative entropy \cite{Lindblad1975}, $N_\mathcal{N}(\rho, \sigma) \ge 0$ for all states $\rho$ and $\sigma$. If any of the $N_\mathcal{N}$ quantities are infinite, the bound is trivial. 

  We can then argue for all $\beta > 0$,
  \begin{align}
    \frac{P(\mathcal{N},\beta)}{\beta} 
    & = \frac{1}{\beta} \sup_{\substack{ \left\{ p_X, \psi^x \right\} \\ \tr[G \bar \psi] \le \beta}} I(X;B) - I(X;E)\\
    & \le \frac{1}{\beta} \sup_{\substack{ \left\{ p_X, \psi^x \right\} \\ \tr[G \bar \psi] \le \beta}} \sum_{x}^{}p_X(x) N_\mathcal{N}(\psi^x, \psi^0) \label{eq:PConv1}.
  \end{align}
  Now, just as in the entanglement-assisted case, we can restrict the supremum to be taken over positive-cost states $\psi^x$:
  \begin{align}
    \frac{P(\mathcal{N},\beta)}{\beta }& \le \frac{1}{\beta} \sup_{\substack{ \left\{ p_X, \psi^x \right\} \\ \frac{1}{\beta} \tr[G \bar \psi] \le 1}} \sum_{x}^{}p_X(x) N_\mathcal{N}(\psi^x, \psi^0) \\
    & = \sup_{\substack{ \left\{ p_X, \psi^x \right\} \\ \frac{1}{\beta} \tr[G \bar \psi] \le 1}} \sum_{x}^{}p_X(x) \frac{N_\mathcal{N}(\psi^x, \psi^0)}{\bra{\psi^x} G \ket{\psi^x}} \frac{\bra{\psi^x} G \ket{\psi^x}}{\beta}  \\
    & \le \sup_{\psi} \frac{N_\mathcal{N}(\psi, \psi^0)}{\bra{\psi} G \ket{\psi}}.
  \end{align}
  This concludes the proof.
\end{proof}

\subsection{Pulse-Position-Modulation Scheme for Private Communication}

We now give an alternative proof of the achievability part of Theorem~\ref{thm:zeroPrivate} via a PPM scheme that achieves the private capacity per unit cost for a degradable channel $\mathcal{N}_{A \to B}$. 
\\~\\
 \textit{Codebook}: As discussed above, without loss of generality, we can  restrict the discussion to positive-cost pure states $\psi$ such that $N_\mathcal{N}(\psi, \psi^0) >0$. Let $\psi$ be such a state. Then, fix $M, L, N \in \mathbb{N} $. We have $ML$ codewords labeled by $(m,l)$ where $m \in [1:M]$ and $l \in [1:L]$. For codeword $(m,l)$, the corresponding input quantum state is
\begin{multline}
   \left[ \left( (\psi^0)^{ \ot N} \right)^{\ot L} \right]^{\ot m-1} \\
   \ot \left[ \left( (\psi^0)^{ \ot N} \right)^{\ot l-1} \ot \psi^{\ot N} \ot \left( (\psi^0)^{ \ot N} \right)^{\ot L-l} \right] \\
   \ot \left[ \left( (\psi^0)^{ \ot N} \right)^{\ot L} \right]^{\ot M-m}.
\end{multline}
This can be understood as a $\psi$-pulse amidst a baseline of $\psi^0$ states, which is itself a pulse amidst a $(\psi^0)^{\ot L}$ baseline. Note that the cost of this codeword is $N \bra{\psi} G \ket{\psi}$.
\\~\\
\textit{Encoding}: The sender transmits the message $m$ to the receiver and uses $l$ to obscure the message on the eavesdropper's side. Given message $m \in [1:M]$, he uniformly chooses at random $l \in [1:L]$ and transmits $N$ times the quantum state corresponding to $(m, l)$.
\\~\\
\textit{Decoding}: The receiver performs $ML$ binary quantum hypothesis tests, using $N$ copies for each test, in order to determine the pulse position. Again this can be done with vanishing error provided that, for some $\delta >0$,
\begin{equation}
  \frac{\log_2 (ML)}{N}  < D(\mathcal{N}(\psi) \Vert \mathcal{N}(\psi^0))  - \delta.
\end{equation}
\\
\textit{Privacy}: Given the randomization over $l$, the state that the eavesdropper obtains is 
\begin{align}
  & \left[ \left( \mathcal{N}^c(\psi^0)^{\ot N} \right)^{\ot L} \right]^{\ot m-1} \ot \xi_L \ot \left[ \left( \mathcal{N}^c(\psi^0)^{\ot N} \right)^{\ot L} \right]^{\ot M-m}, \label{eq:privateEveState}
\end{align}
where
\begin{multline}
   \xi_L \equiv \\
   \frac{1}{L} \sum_{l=1}^L \left( \mathcal{N}^c(\psi^0)^{\ot N} \right)^{\ot l-1} \ot \mathcal{N}^c(\psi)^{\ot N} \ot \left( \mathcal{N}^c(\psi^0)^{\ot N} \right)^{\ot L-l}.
\end{multline}
Now, by a corollary to the convex-split lemma~\cite{anshu2017one}, the state in \cref{eq:privateEveState} is approximately $\mathcal{N}^c(\psi^0)^{\ot NLM}$ if $L$ is chosen large enough. More precisely, given $\delta', \varepsilon > 0$,
\begin{align}
   & \frac{1}{2} \Big\Vert \mathcal{N}^c(\psi^0)^{\ot NL(m-1)} \ot  \xi_L \ot \mathcal{N}^c(\psi^0)^{\ot NL(M-m)} \nonumber\\
   & \quad - \mathcal{N}^c(\psi^0)^{\ot NLM} \Big\Vert_1 \nonumber\\
   & = \frac{1}{2} \norm{ \xi_L - \left( \mathcal{N}^c(\psi^0)^{\ot N}  \right)^{\ot L} }_1 \nonumber \\
   & \le 2 \varepsilon +\delta'   \label{eq:convexSplit}
\end{align}
if 
\begin{equation}
  \label{eq:randomSeedBound}
  L >  2^{D_{\max}^\varepsilon(\mathcal{N}^c(\psi)^{\ot N} \Vert \mathcal{N}^c(\psi^0)^{\ot N}) } \delta'^{-2},
\end{equation}
where $D_{\max}^\varepsilon$ denotes the smooth max-relative entropy \cite{D09}\footnote{The original definition differs slightly from the definition in~\cite{anshu2017one}, which is the one we use here.}:
\begin{equation}
  \label{eq:defnSmoothMax}
  D_{\max}^\varepsilon( \rho \Vert \sigma) \equiv \inf_{\tilde \rho \in B^\varepsilon(\rho)} D_{\max}( \tilde \rho \Vert \sigma).
\end{equation}
Here, $B^\varepsilon(\rho)$ denotes the $\varepsilon$-ball around $\rho$:
\begin{equation}
  B^\varepsilon(\rho) \equiv \left\{ \tilde \rho \ge 0 : \sqrt{1 - F^2(\tilde \rho, \rho)} \le \varepsilon , \tr[\tilde \rho] = 1 \right\},
\end{equation}
where $F$ is the quantum fidelity~\cite{U76}, and
\begin{equation}
  D_{\max}(\rho \Vert \sigma) \equiv \log \inf\left\{ \lambda \ge 0 : \rho \le \lambda  \sigma \right\}.
\end{equation}
Hence, using the quantum asymptotic equipartition property~\cite{tomamichel2015quantum} for smooth max-relative entropy, for small enough $\varepsilon$, all $\delta'' >0$, and sufficiently large $N$, \cref{eq:randomSeedBound} is satisfied if 
\begin{equation}
  L > 2^{N \left(D(\mathcal{N}^c(\psi) \Vert \mathcal{N}^c(\psi^0)) + \delta''\right)} \delta'^{-2}.
\end{equation}
Taking the logarithm on both sides and dividing by $N$, we obtain
\begin{equation}
  \frac{\log_2 L}{N} > D(\mathcal{N}^c(\psi) \Vert \mathcal{N}^c(\psi^0)) + \delta'' - \frac{2 \log_2 \delta'}{N}.
\end{equation}
For large enough $N$, the condition becomes 
\begin{equation}
  \frac{\log_2 L}{N} > D(\mathcal{N}^c(\psi) \Vert \mathcal{N}^c(\psi^0)) + \delta'''
\end{equation}
for some $\delta''' >0$. 

We conclude that we can attain arbitrarily low decoding and privacy error if
\begin{equation}
  \frac{\log_2 (ML)}{N}  < D(\mathcal{N}(\psi) \Vert \mathcal{N}(\psi^0))  - \delta
\end{equation}
and
\begin{equation}
  \frac{\log_2 L}{N} > D(\mathcal{N}^c(\psi) \Vert \mathcal{N}^c(\psi^0)) + \delta'''.
\end{equation}
Combining the two inequalities, we obtain
\begin{equation}
  \frac{\log_2 M}{N \bra{\psi} G \ket{\psi}} < \frac{N_\mathcal{N}(\psi, \psi^0)}{\bra{\psi} G \ket{\psi} } - \frac{\delta + \delta'''}{\bra{\psi} G \ket{\psi}}.
\end{equation}
Hence, we can achieve a private rate per unit cost of $\frac{N_\mathcal{N}(\psi, \psi^0)}{\bra{\psi} G \ket{\psi}}$.

Note that the above protocol does not use the fact that the channel is degradable. Hence, this gives an achievability result for non-degradable quantum channels as well. Furthermore, we can achieve this with a mixed state $\rho$ instead of a pure state $\psi$, which might be necessary for non-degradable channels. We conclude for general quantum channels $\mathcal{N}$,
\begin{equation}
  \textbf{\textit{P}}(\mathcal{N}) \ge \sup_{\rho \in \mathcal{S}(\mathcal{H}_A)} \textbf{\textit{P}}_{\psi^0}(\mathcal{N}, \rho).
\end{equation}

\section{Quantum Communication}

\label{sec:quant-comm}

We now formulate the quantum capacity per unit cost. Quantum capacity was first studied in \cite{PhysRevA.54.2629,PhysRevA.55.1613,BKN98,BNS98,capacity2002shor,devetak2005private} when there is no cost constraint and recently in \cite{wilde2016energy} when there is a cost constraint. Let $\mathcal{N}_{A \to B}$ be a quantum channel, and let $n,Q\in\mathbb{N}$, $\nu\in\mathbb{R}_{> 0}$, and $\varepsilon\in[0,1]$. An $(n,Q,\nu,\varepsilon)$ quantum code has blocklength $n$, dimension $Q$ for the total input space, total cost at most $\nu$, and quantum decoding error at most $\varepsilon$. In more detail, an $(n,Q,\nu, \varepsilon)$ code for quantum communication  consists of an encoding channel $\mathcal{E}^{n}:\mathcal{S}%
(\mathcal{H}_{S})\rightarrow\mathcal{S}(\mathcal{H}_{A}^{\otimes n})$ and a
decoding channel $\mathcal{D}^{n}:\mathcal{S}(\mathcal{H}_{B}^{\otimes
n})\rightarrow\mathcal{S}(\mathcal{H}_{S})$, where $\dim(\mathcal{H}_{S})=Q$.
The cost constraint imposes the following bound 
on all states resulting from the output of the encoding
channel $\mathcal{E}^{n}$:
\begin{equation}
\operatorname{Tr}\left\{  G_{n}\mathcal{E}^{n}(\rho_{S})\right\}
\leq \nu, \label{eq:q-code-unif-energy-const}%
\end{equation}
where $\rho_{S}\in\mathcal{S}(\mathcal{H}_{S})$
and $G_n$ is defined in \cref{eq:cost-observable}.  
Finally, we have the error bounded by $\varepsilon$, in the sense that for all pure states $\phi_{RS}\in\mathcal{S}(\mathcal{H}_{R}%
\otimes\mathcal{H}_{S})$, where $\mathcal{H}_{R}$ is isomorphic
to$~\mathcal{H}_{S}$, the following trace distance bound holds:
\begin{equation}
  \frac{1}{2} \norm{ \phi_{RS} - (\operatorname{id}_{R}\otimes\lbrack\mathcal{D}^{n}\circ
\mathcal{N}^{\otimes n}\circ\mathcal{E}^{n}])(\phi_{RS})}_1 \le \varepsilon.
\label{eq:q-code-fidelity}%
\end{equation}

\begin{dfn}
  Given $\beta >0$, $R_q$ is an achievable quantum communication rate with average cost not exceeding $\beta$ if for all $\varepsilon \in(0,1)$ and $\delta >0 $, $\exists \, n_0$ such that for $n \ge n_0$, $\exists$ an $(n, Q, n \beta, \varepsilon)$ code for which
  \begin{equation}
    \frac{\log_2 Q}{n} > R_q - \delta.
  \end{equation}
 The quantum capacity cost function $Q(\mathcal{N},\beta)$ is equal to the supremum of all achievable quantum communication rates with average cost not exceeding $\beta$.
\end{dfn}
Building on \cite{cmp2005dev}, the expression for $Q(\mathcal{N},\beta)$ when $\mathcal{N}$ is degradable was shown in \cite{wilde2016energy} to be the following.
\begin{thm}
  [\cite{wilde2016energy}]
  The quantum capacity cost function for a degradable channel $\mathcal{N}_{A \to B}$ is given by
  \begin{equation}
    Q(\mathcal{N},\beta) = \sup_{\substack{ \varphi_{RA} \\ \tr[G \varphi_{A}] \le \beta}} I(R \rangle B)_\varphi,
  \end{equation}
  where $I(R \rangle B) \equiv S(B) - S(RB)$ is the coherent information \cite{PhysRevA.54.2629} and
  \begin{equation}
    \varphi_{RB} \equiv (\id_R \ot \mathcal{N}_{A \to B})(\varphi_{RA}).
  \end{equation}
\end{thm}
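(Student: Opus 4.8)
The plan is to treat this as the cost-constrained analogue of the Lloyd--Shor--Devetak (LSD) theorem, specialized to degradable channels where single-letterization is available. I would prove the two inequalities separately: achievability via a random coding argument over the cost-typical subspace of a chosen input state, and the converse via the standard quantum-capacity converse combined with the additivity and concavity properties of the coherent information that hold for degradable channels.

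For achievability, fix a pure state $\varphi_{RA}$ with $\tr[G\varphi_A]\le\beta$ and consider $n$ i.i.d.\ copies. The usual decoupling (or LSD) construction produces, for any rate below $I(R\rangle B)_\varphi$, a code whose codewords are supported on the typical subspace of $\varphi_A^{\otimes n}$ and whose decoding error vanishes as $n\to\infty$. The only new ingredient is the cost constraint: since $G_n$ is additive and the input concentrates on the cost-typical subspace, the empirical cost $\tr[G_n(\cdot)]/n$ is close to $\tr[G\varphi_A]\le\beta$ with overwhelming probability. I would project the code onto the intersection of the typical and cost-typical subspaces --- a projection that only negligibly disturbs the relevant states by the gentle-measurement lemma --- so that every codeword satisfies $\tr[G_n\rho_{A^n}]\le n(\beta+\delta)$, and then invoke continuity of $\beta\mapsto Q(\mathcal{N},\beta)$ in the cost to remove the slack $\delta$.

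For the converse, given an $(n,Q,n\beta,\varepsilon)$ code I would send one share of a maximally entangled state $\Phi_{RS}$ of Schmidt rank $Q$ through the encoder $\mathcal{E}^n$, the channel $\mathcal{N}^{\otimes n}$, and the decoder. Writing $\omega_{RA^n}=(\id_R\otimes\mathcal{E}^n)(\Phi_{RS})$, the constraint \cref{eq:q-code-unif-energy-const} gives $\tr[G_n\omega_{A^n}]\le n\beta$. Since decoding succeeds up to error $\varepsilon$, the Alicki--Fannes--Winter inequality yields $\log_2 Q \le I(R\rangle B^n)_\omega + n\,g(\varepsilon)$ for a function $g$ with $g(\varepsilon)\to 0$. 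The crux is then to single-letterize $I(R\rangle B^n)_\omega$. Here I would use that, for degradable channels, the coherent information is subadditive and is a concave function of the input state: combining the subadditive decomposition $I(R\rangle B^n)_\omega \le \sum_{i=1}^n I(R\rangle B_i)_{\omega_i}$ with concavity (via Jensen over a uniformly random index $i$) and the additive cost bound $\tfrac{1}{n}\sum_i\tr[G\omega_{A_i}]\le\beta$ shows $\tfrac{1}{n} I(R\rangle B^n)_\omega \le Q(\mathcal{N},\beta)$. Dividing by $n$, taking $n\to\infty$ and then $\varepsilon\to 0$ completes the converse.

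The main obstacle is precisely this converse single-letterization. For a generic channel the coherent information is neither additive nor concave, and the $n$-letter cost-constrained expression need not collapse to the single-letter one. The argument therefore rests essentially on degradability, which (following Devetak--Shor and Yard--Hayden--Devetak) is what simultaneously grants subadditivity and input-concavity of the coherent information; the remaining care is in coupling these properties to the \emph{average} cost constraint through Jensen's inequality and the concavity of $Q(\mathcal{N},\cdot)$ in $\beta$.
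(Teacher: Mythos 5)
This theorem is not proved in the paper at all---it is recalled as a known result from \cite{wilde2016energy} (building on \cite{cmp2005dev})---so the only meaningful comparison is with that reference, and your proposal is a correct reconstruction of its argument: the converse there likewise transmits half of a maximally entangled state through the code, applies entropy continuity, and single-letterizes using exactly the two degradability ingredients you name (Devetak--Shor subadditivity of coherent information across channel uses and Yard--Hayden--Devetak concavity in the input state), coupled to the additive average cost constraint via Jensen's inequality, while achievability is obtained by coherent-information coding on (cost-)typical subspaces with gentle-measurement control of the projection. In short, your approach is essentially the same as the cited source's, and I see no substantive gap.
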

We then define the quantum capacity per unit cost.
\begin{dfn}
  $\textbf{\textit{R}}_q$ is an achievable quantum communication rate per unit cost if for any $\delta, \varepsilon > 0$, $\exists \, \nu_0 > 0$ such that for $\nu \ge  \nu_0$ there is an $(n, Q, \nu, \varepsilon)$ code for which
  \begin{equation}
    \log_2 Q > \nu(\textbf{\textit{R}}_q - \delta).
  \end{equation}
  The quantum capacity per unit cost is then defined to be the supremum of all achievable quantum communication rates per unit cost and is denoted as $\textbf{\textit{Q}}(\mathcal{N})$.
\end{dfn}
Using the achievability and converse proofs for $Q(\mathcal{N},\beta)$, and reasoning similar to that in the proof of Theorem~\ref{thm:generalCCost}, we obtain the following expressions for $\textbf{\textit{Q}}(\mathcal{N})$.
\begin{thm}
  \label{thm:generalQCost}
  The quantum capacity per unit cost for a degradable channel $\mathcal{N}_{A \to B}$ is given by
  \begin{equation}
    \textbf{\textit{Q}}(\mathcal{N}) = \sup_{\beta >0} \frac{Q(\mathcal{N},\beta)}{\beta} = \sup_{\varphi_{RA}} \frac{I(R \rangle B)_\varphi}{\tr[G \varphi_{A}]}.
  \end{equation}
\end{thm}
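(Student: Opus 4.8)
The plan is to follow verbatim the template established in the proof of \cref{thm:generalCCost}, replacing the number of messages $M$ by the dimension $Q$, the Holevo information $\chi(\mathcal{N},\beta)$ by the cost-constrained quantum capacity $Q(\mathcal{N},\beta)$, and the mutual information $I(X;B)$ by the coherent information $I(R\rangle B)$. As the general principle noted earlier in the paper indicates, it suffices to invoke the coding theorem (achievability and weak converse) for the cost-constrained quantum capacity of a degradable channel established in~\cite{wilde2016energy}, together with the additivity of the coherent information for degradable channels. I would first prove $\textbf{\textit{Q}}(\mathcal{N}) = \sup_{\beta>0} Q(\mathcal{N},\beta)/\beta$. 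For achievability, let $\beta>0$ and let $R$ be an achievable quantum communication rate with average cost not exceeding $\beta$; the claim is that $R/\beta$ is achievable per unit cost. The argument is identical to the achievability part of \cref{thm:generalCCost}: given $\delta>0$ and $(n,Q,n\beta,\varepsilon)$ codes with $\tfrac{\log_2 Q}{n} > R - \tfrac{\beta\delta}{2}$ for all $n \ge n_0$, one sets $\nu_0 = \max\{(n_0+1)\beta, 2R/\delta\}$ and, for each $\nu \ge \nu_0$ with $n\beta < \nu \le (n+1)\beta$, reuses the blocklength-$n$ code as an $(n,Q,\nu,\varepsilon)$ code satisfying $\tfrac{\log_2 Q}{\nu} > \tfrac{R}{\beta}-\delta$, via the same chain of inequalities.

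For the converse I would take an arbitrary $(n,Q,\nu,\varepsilon)$ code and apply the weak converse for quantum communication: from the trace-distance decoding guarantee \cref{eq:q-code-fidelity} and an Alicki--Fannes--Winter-type continuity estimate for the coherent information, one obtains $\log_2 Q \le I(R\rangle B^n)_\omega + g(Q,\varepsilon)$, where $\omega$ is the output of $\mathcal{N}^{\ot n}$ on the code's input state (which meets the cost constraint by \cref{eq:q-code-unif-energy-const}) and $g(Q,\varepsilon)/\nu \to 0$. Using additivity of the coherent information for degradable channels, $I(R\rangle B^n)_\omega \le Q(\mathcal{N}^{\ot n},\nu) = n\,Q(\mathcal{N}, \nu/n)$, so that with $\beta = \nu/n$,
\begin{equation}
  \frac{\log_2 Q}{\nu} \le \frac{Q(\mathcal{N},\nu/n)}{\nu/n} + \frac{g(Q,\varepsilon)}{\nu} \le \sup_{\beta>0}\frac{Q(\mathcal{N},\beta)}{\beta} + \frac{g(Q,\varepsilon)}{\nu}.
\end{equation}
Taking $\nu\to\infty$ and then $\varepsilon\to 0$ yields $\textbf{\textit{Q}}(\mathcal{N}) \le \sup_{\beta>0} Q(\mathcal{N},\beta)/\beta$, exactly as in \cref{thm:generalCCost}.

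The second equality is then purely analytic and mirrors the closing argument of \cref{thm:generalCCost}. For the ``$\le$'' direction I write $\frac{Q(\mathcal{N},\beta)}{\beta} = \frac{1}{\beta}\sup_{\tr[G\varphi_A]\le\beta} I(R\rangle B)_\varphi \le \sup_{\tr[G\varphi_A]\le\beta}\frac{I(R\rangle B)_\varphi}{\tr[G\varphi_A]}$, the inequality being trivial when some feasible $\varphi$ has $\tr[G\varphi_A]=0$ and positive coherent information, and then take the supremum over $\beta>0$. For the ``$\ge$'' direction, any $\varphi_{RA}$ gives, via cost-constrained quantum coding for the degradable channel~\cite{wilde2016energy}, an achievable rate $I(R\rangle B)_\varphi$ per channel use at average cost $\tr[G\varphi_A]$, hence an achievable rate per unit cost of $I(R\rangle B)_\varphi/\tr[G\varphi_A]$, which by the first equality is at most $\textbf{\textit{Q}}(\mathcal{N}) = \sup_{\beta>0}Q(\mathcal{N},\beta)/\beta$.

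The main obstacle is the converse, and specifically the weak-converse continuity step: unlike the Holevo case, the coherent information is not the mutual information of a classical--quantum state, so one must pass from the entanglement-fidelity (equivalently trace-distance) decoding criterion to a bound on $I(R\rangle B^n)_\omega$ through a dimension-dependent continuity estimate, and crucially invoke the additivity of the coherent information for degradable channels in order to single-letterize. I expect everything else to follow routinely from the established template and the cited cost-constrained quantum capacity theorem.
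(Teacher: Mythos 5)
Your proposal is correct and follows essentially the same route as the paper: the paper itself gives no detailed proof of \cref{thm:generalQCost}, stating only that it follows from the achievability and converse proofs for the cost-constrained quantum capacity $Q(\mathcal{N},\beta)$ of degradable channels in~\cite{wilde2016energy} together with reasoning identical to that in the proof of \cref{thm:generalCCost}, which is precisely what you carry out (the rate-conversion achievability argument, the weak converse via continuity plus additivity of coherent information for degradable channels, and the analytic interchange of suprema for the second equality). Your closing remark correctly identifies the only genuinely quantum ingredient---single-letterization via degradability---which is exactly what the citation to~\cite{wilde2016energy} supplies.
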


However, note that since $Q(\mathcal{N},\beta) = P(\mathcal{N},\beta)$~\cite{S08,wilde2016energy} for degradable channels, in this case the quantum capacity per unit cost is equal to the private capacity per unit cost. In particular, if a zero-cost state exists, then the quantum capacity per unit cost of a degradable channel is given by the expression in \cref{thm:zeroPrivate}.

\subsection{Pulse-Position-Modulation Scheme for Quantum Communication}

We propose a PPM scheme for achieving the quantum capacity per unit cost for a
degradable channel $\mathcal{N}_{A\rightarrow B}$. We do this by operating the PPM scheme for private
communication in a coherent fashion analogous to that
of~\cite{devetak2005private}. Since the approach is so similar (yet tailored to a PPM coding scheme), we merely sketch the proof for
simplicity and point to \cite{devetak2005private} for more details
(see also \cite{DW04} in this context). The task
we consider is entanglement generation, in which the goal is to establish a
maximally entangled state between the sender and receiver. To generalize this to arbitrary quantum states, we again point to~\cite{devetak2005private}.
Let $\mathcal{U}%
_{A\rightarrow BE}$ be an isometric channel extending $\mathcal{N}%
_{A\rightarrow B}$ and let $U_{A\rightarrow BE}$ denote  the corresponding isometry. Let $\psi^{0}$ be a zero-cost pure state and $\psi$ a positive-cost pure state for which $N_{\mathcal{N}}(\psi,\psi^{0})>0$. 

We first consider the case in which $\psi$ is orthogonal to $\psi^0$. 
~\newline\textit{Codebook}: 
As in the private PPM
scheme, we fix $M,L,N\in\mathbb{N}$. For each ordered pair $(m,l)$, where
$m\in\lbrack1:M]$ and $l\in\lbrack1:L]$, consider the following pure quantum state:%
\begin{multline}
  \psi^{m,l} \equiv \left[  \left(  (\psi^{0})^{\ot N}\right)  ^{\ot L}\right]
^{\ot m-1} \\
   \ot\left[  \left(  (\psi^{0})^{\ot N}\right)  ^{\ot l-1}\ot\psi^{\ot
N}\ot\left(  (\psi^{0})^{\ot N}\right)  ^{\ot L-l}\right]  \\
   \ot\left[  \left(
  (\psi^{0})^{\ot N}\right)  ^{\ot L}\right]  ^{\ot M-m} .
  \label{eq:quantumPulse}
\end{multline}
Note that since $\psi$ is orthogonal to $\psi^0$, $\psi^{m,l}$ are orthogonal for different $m,l$. Observe also that since $G \ket{\psi^0} = 0$,
\begin{multline}
   G_K \ket{ \psi^{m,l}}_{A^K} \\
   = \sum_{j= n(m,l)}^{n(m,l) + N-1} \left( I^{\ot j -1} \ot G \ot I^{\ot K - j}\right) \ket{ \psi^{m,l}}_{A^K},
  \label{eq:costQuantumPulse}
\end{multline}
where $K \equiv NML$ and $n(m,l) \equiv NL(m-1) +N(l-1)+1$ is the position of the first $\psi$ state in $\psi^{m,l}$.
\newline~\newline\textit{Encoding}: Let $\ket{\Phi}_{R\hat{A}}$ denote the maximally entangled 
state to be established with the receiver, where the dimension of $\hat A$, the
system the sender is to transmit, is at most $M$. We can decompose the state
with respect to some orthonormal bases $\left\{  \ket{m}_{R}\right\}  $ and
$\{\ket{m}_{\hat{A}}\}$:
\begin{equation}
\ket{\Phi}_{R\hat{A}}=\frac{1}{\sqrt{M}}\sum_{m}\ket{m}_{R}\ket{m}_{\hat{A}}.
\end{equation}
Depending on the value of $m$, the sender coherently prepares a uniform
superposition of $\ket{\psi^{m,l}}$ over the $l$ variable. That is, the mapping is given by
\begin{equation}
\ket{m}_{\hat{A}}\mapsto\frac{1}{\sqrt{L}}\sum_{l=1}^{L}\ket{\psi^{m,l}}_{A^K}.
\end{equation}
Since $\psi^{m,l}$ are orthogonal for different $m,l$,
the above mapping is an isometry. The overall state after the
encoding is%
\begin{equation}
\ket{\Psi}_{RA^{K}}=\frac{1}{\sqrt{ML}}\sum_{m,l}\ket{m}_{R}%
\ket{\psi^{m,l}}_{A^{K}}.
\end{equation}
Note that the reduced state on $A^{K}$ is
\begin{equation}
  \Psi_{A^{K}}=\frac{1}{ML}\sum_{m,l,l'}\ket{\psi^{m,l}}\bra{\psi^{m,l'}}_{A^K}
\end{equation}
and therefore has cost $N\bra{\psi}G\ket{\psi}$ by \cref{eq:costQuantumPulse}. \newline~\newline%
\textit{Decoding}: After $K$ uses of the isometric extension, the overall
state is
\begin{equation}
\ket{\Psi}_{RB^{K}E^{K}}=\frac{1}{\sqrt{ML}}\sum_{m,l}\ket{m}_{R}%
\ket{\psi^{m,l}}_{B^{K}E^{K}},
\end{equation}
where $\ket{\psi^{m,l}}_{B^{K}E^{K}}\equiv U_{A\rightarrow BE}^{\ot
K}\ket{\psi^{m,l}}_{A^{K}}$. Let $\{\Lambda_{B^{K}}^{m,l}\}_{m,l}$ denote the
POVM used as a decoder in the private communication protocol. A coherent version of this
measurement is given by the isometry $V_{B^{K}\rightarrow B^{K}\hat{B}_{0}%
\hat{B}_{1}}=\sum_{m,l}\sqrt{\Lambda_{B^{K}}^{m,l}}\otimes|m\rangle_{\hat
{B}_{0}}\otimes|l\rangle_{\hat{B}_{1}}$, and after performing it, the
resulting state is approximately equal to the following one:%
\begin{equation}
\ket{\Psi}_{RB^{K}E^{K}\hat{B}_{0}\hat{B}_{1}}=\frac{1}{\sqrt{ML}}%
\sum_{m,l}\ket{m}_{R}\ket{\psi^{m,l}}_{B^{K}E^{K}}|m\rangle_{\hat{B}_{0}%
}|l\rangle_{\hat{B}_{1}}.%
\end{equation}
At this point, we know from the privacy condition for the private code, that for each $m$, the
following approximation holds: $\frac{1}{L}\sum_{l}\psi_{E^{K}}^{m,l}%
\approx\mathcal{N}^{c}(\psi^{0})^{\otimes K}$. Thus, given $m$, the eavesdropper's system is approximately independent of $m$.
By Uhlmann's theorem~\cite{U76}, for each $m$, there exists an isometry $W^m_{B^{K}\hat{B}%
_{1}\rightarrow\hat{B}_{2}}$ such that
\begin{equation}
W^m_{B^{K}\hat{B}_{1}\rightarrow\hat
{B}_{2}}\left[\frac{1}{\sqrt{L}}\sum_{l}\ket{\psi^{m,l}}_{B^{K}E^{K}}|l\rangle
_{\hat{B}_{1}}\right]\approx|\varsigma\rangle_{E^{K}\hat{B}_{2}},
\end{equation}
where
$|\varsigma\rangle_{E^{K}\hat{B}_{2}}$ is some state independent of $m$. So
this means that the receiver can perform the controlled isometry $\sum
_{m}|m\rangle\langle m|_{\hat{B}_{0}}\otimes W^m_{B^{K}\hat{B}_{1}%
\rightarrow\hat{B}_{2}}$, and the resulting state is approximately close to
the following state:%
\begin{equation}
\frac{1}{\sqrt{M}}\sum_{m}\ket{m}_{R}|m\rangle_{\hat{B}_{0}}\otimes
|\varsigma\rangle_{E^{K}\hat{B}_{2}}.%
\end{equation}
By the properties of the PPM scheme for private communication, the quantum rate per unit cost of this scheme is equal to
$ N_{\mathcal{N}}(\psi,\psi^0)/\bra{\psi} G \ket{\psi}$.

Now, if $\psi$ is not orthogonal to $\psi^0$, we implement the above protocol but replacing $\psi^{\ot N}$ with its normalized rejection from $(\psi^0)^{\ot N}$. That is, we take the component of $\ket{\psi}^{\ot N}$ orthogonal to $\ket{\psi^0}^{\ot N}$ and normalize it. Calling this $\psi^\perp_{A^N}$, we have
\begin{equation}
  \ket{\psi^\perp} = \frac{1}{\sqrt{1 - \abs{\braket{\psi^0 \vert \psi}}^{2N}}} \left(\ket{\psi}^{\ot N} - \braket{\psi^0 \vert \psi}^N \ket{\psi^0}^{\ot N}\right),
\end{equation}
and so
\begin{equation}
  \frac{1}{2} \norm{ \psi^\perp_{A^N} - \left( \psi^{\ot N} \right)_{A^N}}_1 =  \abs{\langle \psi^0 \vert \psi\rangle }^{N} \equiv \delta_N. 
\end{equation}
By monotonicity, the trace distance is at most $\delta_N$ after $N$ uses of $\mathcal{N}$ or $\mathcal{N}^c$. Hence, since $\psi$ is positive-cost and so $\abs{\braket{\psi^0 \vert \psi}} <1$, by using $\psi^{\perp}$ we expect to obtain the desired rate in the limit of large $N$. Indeed, since $\psi^\perp$ is orthogonal to $(\psi^0)^{\ot N}$, we can implement the above protocol and achieve a quantum rate per unit cost
\begin{multline}
   \Big[D_H^\varepsilon \left( \mathcal{N}^{\ot N} (\psi^\perp) \Vert \mathcal{N}(\psi^0)^{\ot N} \right) \\
   - D_{\max}^{\varepsilon'} \left( (\mathcal{N}^c)^{\ot N} (\psi^\perp) \Vert \mathcal{N}^c(\psi^0)^{\ot N} \right)\Big] / \left(\bra{\psi^\perp}G_N \ket{\psi^\perp} \right)
\end{multline}
for any $\varepsilon, \varepsilon' >0$, where $D_H^\varepsilon$ is the hypothesis testing relative entropy~\cite{wang2012one,buscemi2010quantum}:
\begin{equation}
  D_H^\varepsilon(\rho \Vert \sigma) \equiv -\log_2 \inf_{\substack{0 \le \Lambda \le I \\ \tr[\Lambda \rho] \ge 1-\varepsilon}} \tr[ \Lambda \sigma].
\end{equation}
Note that this is simply the negative logarithm of the quantity defined in \cref{eq:minType2}. For sufficiently large $N$, we have that $\delta_N < \varepsilon$. Then, by Lemma 7 in~\cite{datta2016second},
\begin{align}
  & D_H^\varepsilon \left( \mathcal{N}^{\ot N} (\psi^\perp) \Vert \mathcal{N}(\psi^0)^{\ot N} \right) \nonumber \\
  & \ge D_H^{\varepsilon - \delta_N} (\mathcal{N}(\psi)^{\ot N} \Vert \mathcal{N}(\psi^0)^{\ot N}) \\
  & \ge D_H^{\delta(\varepsilon)}(\mathcal{N}(\psi)^{\ot N} \Vert \mathcal{N}(\psi^0)^{\ot N}),
\end{align}
where $\delta(\varepsilon) = \varepsilon - \delta_{N(\varepsilon)}$ and  $N(\varepsilon)$ is the smallest integer such that $\delta_{N(\varepsilon)} < \varepsilon$. Furthermore, by the definition of smooth max-relative entropy in \cref{eq:defnSmoothMax} and the triangle inequality, for sufficiently large $N$ such that $\delta_N < \varepsilon'$, we have
\begin{align}
  & D_{\max}^{\varepsilon'} ( (\mathcal{N}^c)^{\ot N} (\psi^\perp) \Vert \mathcal{N}^c(\psi^0)^{\ot N} ) \nonumber \\
  & \le D_{\max}^{\varepsilon' - \delta_N} ( \mathcal{N}^c(\psi)^{\ot N} \Vert \mathcal{N}^c(\psi^0)^{\ot N} ) \\
  & \le D_{\max}^{\delta(\varepsilon')} ( \mathcal{N}^c(\psi)^{\ot N} \Vert \mathcal{N}^c(\psi^0)^{\ot N} ).
\end{align}
Lastly,
\begin{align}
  \bra{\psi^\perp} G_N \ket{\psi^\perp} & = \frac{1}{1- \abs{ \braket{\psi^0 \vert \psi }}^{2N}} \bra{\psi}^{\ot N} G_N \ket{\psi}^{\ot N} \\
  & = \frac{N\bra{\psi} G \ket{\psi}}{1- \abs{ \braket{\psi^0 \vert \psi }}^{2N}}.
\end{align}
We conclude that we can achieve
\begin{multline}
  \Big( 1  -  \abs{ \braket{\psi^0 \vert \psi }}^{2N} \Big)
  \Big[  D_H^{\delta(\varepsilon)}(\mathcal{N}(\psi)^{\ot N} \Vert \mathcal{N}(\psi^0)^{\ot N}) \\
   - D_{\max}^{\delta(\varepsilon')}(\mathcal{N}^c(\psi)^{\ot N} \Vert \mathcal{N}^c(\psi^0)^{\ot N})\Big] / \left(N\bra{\psi} G \ket{\psi}\right).
\end{multline}
By Quantum Stein's Lemma (\cref{eq:quantumStein}) and the quantum asymptotic equipartition property for smooth max-relative entropy, in the limit of large $N$ we can therefore achieve a quantum rate per unit cost of $\frac{N_\mathcal{N}(\psi, \psi^0)}{\bra{\psi} G \ket{\psi}}$.

Note that just like in the private case, this scheme gives achievability for non-degradable channels as well.

\section{Capacities per Unit Cost of Quantum Gaussian Channels}

\label{sec:gaussian}

By the methods developed and used in \cite{holevo2003entanglement,H04,holevo2013quantum,wilde2016energy}, we can generalize the above results to infinite dimensions, in particular for quantum Gaussian channels. We will use the formulas derived above to compute various capacities per unit cost for specific Gaussian channels, where the cost observable is the photon number operator $\hat{n} = \sum_{n=0}^{\infty} n \state{n}$, with $\ket{n}$ being a photon number state. Since the channels we consider already have known capacity cost functions, it is easiest for us to compute the capacity per unit cost via the following formula:
\begin{equation}
 \textbf{\textit{C}}(\mathcal{N})= \lim_{\bar n \to 0} \frac{C(\mathcal{N},\bar n)}{\bar n}. 
 \label{eq:diff-cap-cost}
\end{equation}
Note that we could also compute the capacities per unit cost using the optimized relative entropy formulas that we obtained in the previous sections.

The quantum Gaussian channels we consider are the following~\cite{holevo2002sending,holevo2001evaluating,giovannetti2014ultimate}. The thermal channel $\mathcal{E}_\eta^{n_\text{th}}$ with transmissivity $\eta\in(0,1)$ and thermal photon number $n_\text{th}\in\mathbb{R}_{>0}$ is a Gaussian channel which mixes the input signal with a thermal Gaussian state. This is summarized by the following Heisenberg evolution:
\begin{equation}
  a_\text{in} \mapsto \sqrt{\eta} a_\text{in} + \sqrt{1-\eta} a_\text{th},
\end{equation}
where $a_\text{in}, \, a_\text{th}$ are the annihilation operators for the input and thermal modes, respectively. We also consider the additive classical noise channel $\mathcal{N}_N$ with variance $N\in\mathbb{R}_{>0}$, which describes classical noise that displaces the signal in phase space according to a Gaussian distribution. The Heisenberg evolution is
\begin{equation}
  a_\text{in} \mapsto a_\text{in} + \xi,
\end{equation}
where $\xi$ is a complex normal random variable with mean zero and variance $N$. Next, the amplifier channel $\mathcal{A}_\kappa^{n_\text{th}}$ with gain parameter $\kappa > 1$ and thermal photon number $n_\text{th}\in\mathbb{R}_{>0}$ describes the effect of a two-mode squeezing Hamiltonian that acts on the input mode and a thermal mode. This effectively amplifies the input signal but at the cost of adding noise. The resulting Heisenberg evolution is given by
\begin{equation}
  a_\text{in} \mapsto \sqrt{\kappa} a_\text{in} + \sqrt{\kappa-1} a_\text{th}^\dagger.
\end{equation}
Lastly, we also look at the weak conjugate of the amplifier channel, the contravariant amplifier $\tilde{\mathcal{A}}^{n_\text{th}}_\kappa$ with gain parameter $\kappa > 1$, and thermal photon number $n_\text{th} \in \mathbb{R}_{>0}$. 

\subsection{Classical Communication over Gaussian Channels}
We compute the unassisted classical capacities per unit cost for these four channels. For the thermal channel, the classical capacity cost function is given by \cite{GHG15,giovannetti2014ultimate}
\begin{equation}
  C(\mathcal{E}_\eta^{n_\text{th}}, \bar n) = g(\eta \bar n + (1-\eta) n_\text{th}) - g((1-\eta) n_\text{th}),
\end{equation}
where 
\begin{equation}
  g(x) \equiv (x+1) \log_2(x+1) - x \log_2 x.
\end{equation}
Hence, by applying \cref{eq:diff-cap-cost}, we find the following:
\begin{align}
    \textbf{\textit{C}}(\mathcal{E}_\eta^{n_\text{th}}) &= \eta \log_2 \!\left( 1+ \frac{1}{n_\text{th}(1-\eta)}\right) \label{eq:capacity_per_unit_cost_thermal_channel}.
\end{align}

Now, consider the additive-noise channel, which has the following classical capacity cost function \cite{GHG15,giovannetti2014ultimate}
\begin{equation}
  C(\mathcal{N}_N, \bar n) = g(\bar n + N)- g(N).
\end{equation}
Hence,
\begin{equation}
  \textbf{\textit{C}}(\mathcal{N}_N) = \log_2\!\left(  1+ \frac{1}{N}\right).
\end{equation}
Next, we consider the amplifier channel, which has a classical capacity cost function \cite{GHG15,giovannetti2014ultimate}
\begin{equation}
  C(\mathcal{A}_\kappa^{n_\text{th}},\bar n) = g(\kappa \bar n + (\kappa-1)(n_\text{th}+1)) - g( (\kappa-1) (n_\text{th}+1)).
\end{equation}
We find
\begin{equation}
  \textbf{\textit{C}}(\mathcal{A}_\kappa^{n_\text{th}}) = \kappa \log_2 \!\left( 1+ \frac{1}{(\kappa-1)(n_\text{th}+1)} \right).
\end{equation}
Finally, for the contravariant amplifier channel, which has the capacity cost function~\cite{giovannetti2014ultimate}
\begin{equation}
  C(\tilde{\mathcal{A}}^{n_\text{th}}_\kappa, \bar n) = g(\kappa n_\text{th} + (\kappa-1) (\bar n+1)) - g(\kappa (n_\text{th}+1)-1),
\end{equation}
we have
\begin{equation}
  \textbf{\textit{C}}(\tilde{\mathcal{A}}^{n_\text{th}}_\kappa) = (\kappa-1) \log_2\!\left( 1+\frac{1}{\kappa(n_\text{th}+1)-1} \right).
\end{equation}

Note that given \cref{eq:monoZero} and the achievability result in \cref{thm:generalCCost}, when a zero-cost state exists, we can achieve the capacity per unit cost with any code that achieves the cost-constrained capacity in the limit of zero cost. For example, for the pure-loss bosonic channel, single-photon-detection achieves the classical capacity in the limit $\bar n \to 0$~\cite{takeoka2014capacity}. In general, we can achieve the capacity per unit cost with any code that achieves $C(\beta_{\max})$ where $\beta_{\max} = \arg \sup_{\beta > 0} C(\beta)/\beta$. 

\subsection{Entanglement-Assisted Communication over Gaussian Channels}
We now compute the entanglement-assisted capacity per unit cost for the first three channels. 
For the thermal channel, the entanglement-assisted capacity cost function is given by~\cite{holevo2001evaluating,holevo2012quantum}
\begin{align}
  & C_\mathrm{EA}(\mathcal{E}^{n_\text{th}}_\eta, \bar n)  \nonumber \\
  & = g(\bar n) + g(\eta \bar n  + (1-\eta)n_\text{th}) \nonumber\\
  & -g\Big( \frac{1}{2}\Big( \sqrt{((1+\eta) \bar n + (1-\eta)n_\text{th} + 1)^2 - 4 \eta \bar n(\bar n+1)} \nonumber\\
  & \qquad \qquad - (1-\eta)(\bar n - n_\text{th}) -1 \Big) \Big) \nonumber\\
  & - g\Big( \frac{1}{2}\Big( \sqrt{((1+\eta) \bar n + (1-\eta)n_\text{th} + 1)^2  - 4 \eta \bar n(\bar n+1)} \nonumber\\
  & \qquad\qquad + (1-\eta)(\bar n - n_\text{th}) -1 \Big) \Big) .
\end{align}
We compute the limit as per \cref{eq:diff-cap-cost} and find that the entanglement-assisted capacity per unit cost of the thermal channel diverges\footnote{For many of these calculations, see the Mathematica file included in arXiv posting.}. For the additive noise channel, the entanglement-assisted capacity cost function is~\cite{holevo2001evaluating,holevo2012quantum}
\begin{multline}
   C_\mathrm{EA}(\mathcal{N}_N, \bar n)  \\
   = g(\bar n) + g(\bar n + N) -g\left( \frac{1}{2} \left( \sqrt{ (N +1)^2 + 4N \bar n} - N  -1 \right) \right)  \\
     - g\left( \frac{1}{2} \left( \sqrt{(N+1)^2 + 4N \bar n} + N -1 \right) \right) ,
\end{multline}
which also gives an infinite capacity per unit cost. Lastly, for the amplifier channel,
\begin{align}
   & C_\mathrm{EA}(\mathcal{A}_\kappa^{n_\text{th}},\bar n) \nonumber \\
   & = g(\bar n) + g( \kappa \bar n + (\kappa -1) (n_\text{th}+1))  \nonumber\\
   & - g\Big( \frac{1}{2} \Big( \sqrt{( (\kappa+1) \bar n + (\kappa -1) (n_\text{th}+1) + 1)^2 - 4 \kappa \bar n (\bar n+1)} \nonumber \\
   & \qquad\qquad - (\kappa-1) (\bar n + n_\text{th}+1) -1 \Big) \Big)  \nonumber \\
   &- g\Big( \frac{1}{2} \Big( \sqrt{( (\kappa+1) \bar n + (\kappa -1) (n_\text{th}+1) + 1)^2 - 4 \kappa \bar n (\bar n+1)} \nonumber \\
   & \qquad\qquad + (\kappa-1) (\bar n + n_\text{th}+1) -1 \Big) \Big).
\end{align}
Computing the ratio again gives an infinite value as $\bar n \to 0$. 

We show the divergence of $\textbf{\textit{C}}_\mathrm{EA}$ by plotting the bits per photon against the number of photons for these three Gaussian channels in \cref{fig:EAdivergence}.
\begin{figure}
  \centering
  \includegraphics[width= 0.45\textwidth]{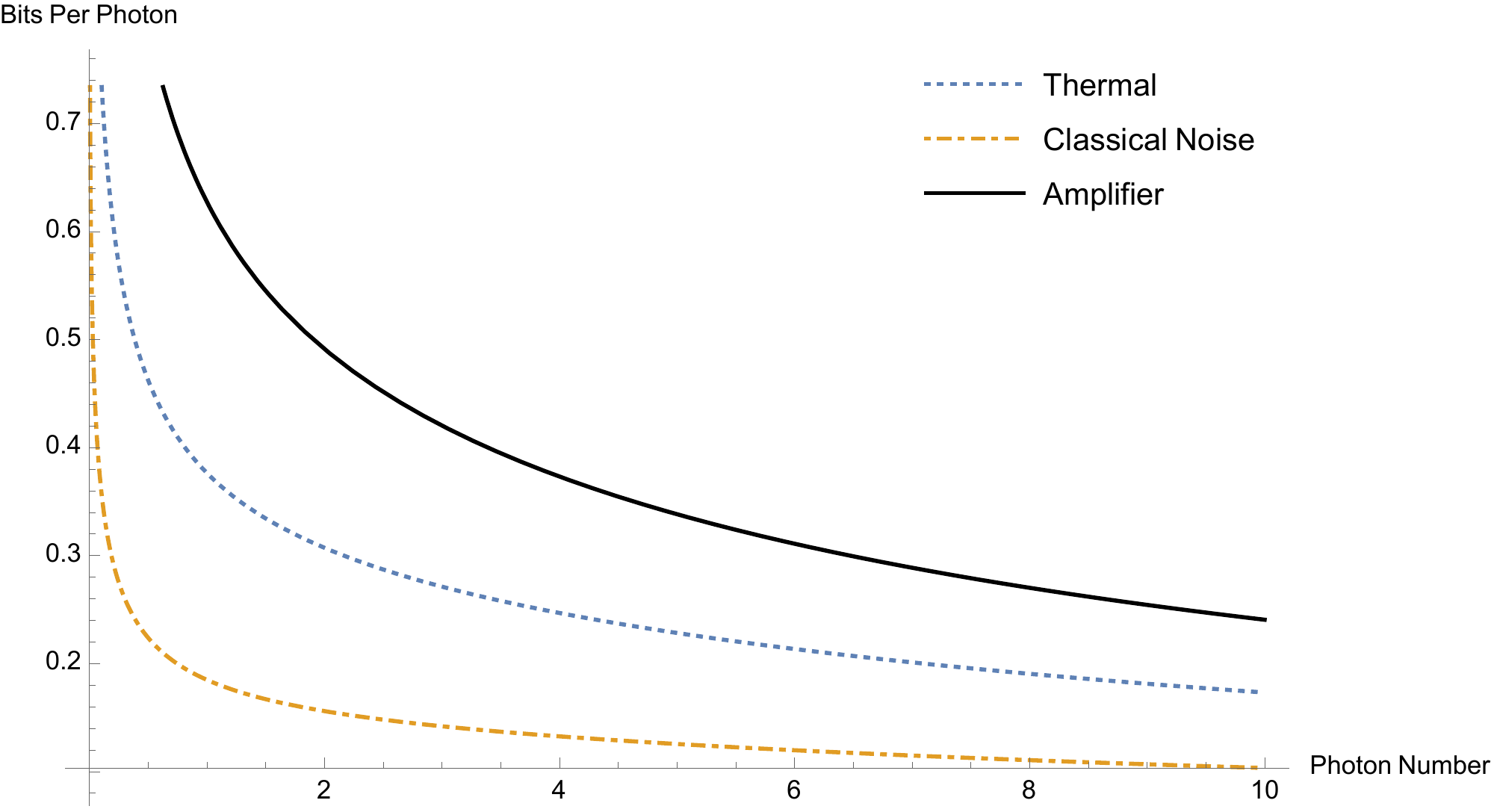}
  \caption{The figure illustrates the divergence of $C_\mathrm{EA}(\mathcal{N},\bar n)/\bar n$ as $\bar n \to 0$. For the thermal channel we set $n_\text{th} = 10,\, \eta = 0.7$, for the classical noise channel $N = 10$, and for the amplifier channel $n_\text{th}=10, \, \kappa = 1.3$.}
  \label{fig:EAdivergence}
\end{figure}
It is likely that the divergences for these Gaussian channels appear since we are allowing for an unbounded amount of entanglement assistance per unit cost. However, it is not true that this potentially infinite entanglement assistance always leads to divergences. Trivial examples include channels where $C_\mathrm{EA}(\mathcal{N}, \beta) = 0$ for all $\beta$ and when the cost observable is positive definite. We can also find examples using the fact that entanglement assistance sometimes does not help. Take for instance the state preparation qubit channel
\begin{equation}
  \mathcal{N}_s : \rho \mapsto \bra{0} \rho \ket{0} \rho^0 + \bra{1} \rho \ket{1} \rho^1
\end{equation}
with cost observable $G = \state{1}$, where $\rho^0, \rho^1 \in \mathcal{S}(\mathcal{H}_{A'})$ such that $0<D(\rho^1 \Vert \rho^0) <\infty$. By Proposition 4 in~\cite{shirokov2012conditions}, $C_\mathrm{EA}(\mathcal{N}_s , \beta) = C(\mathcal{N}_s, \beta)$ for all $\beta$. Now, we have a zero-cost state $\state{0}$, so we can use \cref{thm:zero-cost-rel-ent-CC}:
\begin{align}
  \textbf{\textit{C}}(\mathcal{N}_s) & = \sup_{\psi \neq \state{0}} \frac{D(\mathcal{N}_s(\psi) \Vert \mathcal{N}_s(\state{0}))}{\bra{\psi} G \ket{\psi}}\\
  & = \sup_{\psi \neq \state{0}} \frac{D\left( \abs{c_0}^2 \rho^0 + \abs{c_1}^2 \rho^1 \Vert \rho^0  \right)}{\abs{c_1}^2}\\
  & \le \sup_{\psi \neq \state{0}} \frac{1}{\abs{c_1}^2} \left( \abs{c_0}^2 D(\rho^0 \Vert \rho^0) + \abs{c_1}^2 D(\rho^1 \Vert \rho^0)\right) \\
  & =  D(\rho^1 \Vert \rho^0) \\
  & < \infty,
\end{align}
where $\ket{\psi} \equiv c_0 \ket{0} + c_1 \ket{1}$ and we can divide by $\abs{c_1}^2$ since $\psi \neq \state{0}$. The inequality follows from the convexity of the relative entropy in the first argument. We conclude $\textbf{\textit{C}}_\mathrm{EA}(\mathcal{N}_s) = \textbf{\textit{C}}(\mathcal{N}_s) < \infty$. However, since $D(\rho^1 \Vert \rho^0) >0$, this channel clearly has a non-zero $C(\mathcal{N}_s, \beta)$ and so non-zero $C_\mathrm{EA}(\mathcal{N}_s, \beta)$. It is an interesting open question to find an explicit nontrivial channel for which there is a gain from entanglement assistance but $\textbf{\textit{C}}_\mathrm{EA}$ is finite.

\subsection{Private and Quantum Communication over Gaussian Channels}

\begin{figure}
  \centering
  \includegraphics[width =0.45\textwidth]{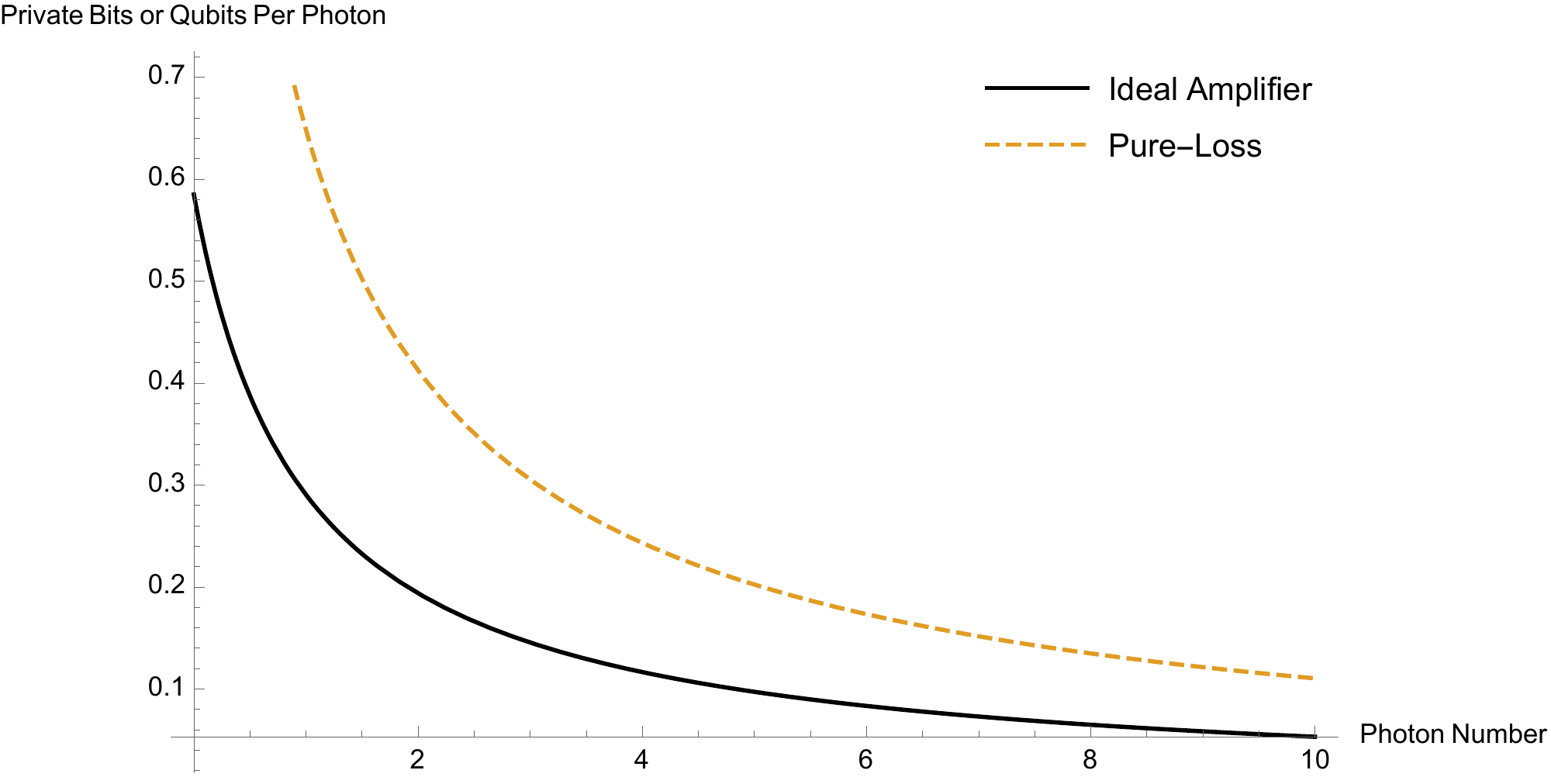}
  \caption{Plot of $P(\mathcal{N},\bar n)/\bar n = Q(\mathcal{N},\bar n)/\bar n$ versus $\bar n$. For the ideal amplifier we set $\kappa = 3$ and for the pure-loss channel $\eta = 0.7$.}
  \label{fig:gaussianPQ}
\end{figure}

We next compute the private and quantum capacities per unit cost for degradable Gaussian channels. In particular, we consider ideal amplifiers $\mathcal{A}^0_\kappa$ and pure-loss channels $\mathcal{E}^0_\eta$ for $\eta \geq 1/2$. For the former, we have the capacity cost function \cite{wilde2016energy}
\begin{equation}
  P(\mathcal{A}^0_\kappa,\bar n) = Q(\mathcal{A}^0_\kappa,\bar n) = g(\kappa (\bar n+1) -1) - g( (\kappa-1) (\bar n+1)).
\end{equation}
Hence,
\begin{equation}
  \textbf{\textit{P}}(\mathcal{A}^0_\kappa) = \textbf{\textit{Q}}(\mathcal{A}^0_\kappa) = \log_2 \!\left( \frac{\kappa}{\kappa-1} \right),
  \label{eq:cap-cost-amp-q-limited}
\end{equation}
which demonstrates that the quantum and private capacity per unit cost are equal to the unconstrained quantum and private capacity of the ideal amplifier channel \cite{WPG07,wilde2016energy}.
Finally, for the pure-loss channel, we have that \cite{wilde2016energy}
\begin{equation}
  P(\mathcal{E}^0_\eta, \bar n) = Q(\mathcal{E}^0_\eta, \bar n) = g(\eta \bar n) - g((1-\eta) \bar n).
\end{equation}
We divide by $\bar n$ and find the following divergent term for $\eta > 1/2$ as $\bar n \to 0$:
\begin{equation}
  (2\eta-1) \log_2 (1/\bar n).
\end{equation}
Hence, the pure-loss channel for $\eta >1/2$ has an infinite private and quantum capacity per unit cost, while for $\eta \le 1/2$ the capacities per unit cost are zero since the channel is antidegradable in this regime. In \cref{fig:gaussianPQ} we plot the private bits or qubits per photon against the number of photons for these two channels.

Another interesting communication setting that we can consider is that of private and quantum communication assisted by a side classical communication channel, their corresponding cost-constrained capacities denoted by $P_2(\mathcal{N}, \beta)$ and $  Q_2(\mathcal{N},\beta)$, for which a general theory has  been developed recently in \cite{DSW18}. Just as for the other communication settings considered in this paper, we could trivially take an achievability result for the cost-constrained capacity to obtain one for the capacity per unit cost. For something less trivial, we can consider the  example of the pure-loss bosonic channel, for which it is known~\cite{takeoka2015fundamental} that $\lim_{\bar n \to \infty} P_2(\mathcal{E}^0_\eta, \bar n), Q_2(\mathcal{E}^0_\eta, \bar n)$ is finite. This is a pessimistic result which can be interpreted to imply that the rate of quantum key distribution over a fiber-optic cable is finite even if one uses arbitrarily high input energy. However, the situation is different for capacity per unit cost. 
Previously we calculated that the unassisted private and quantum capacities per unit cost of this channel are infinite, and since these trivially lower bound the two-way assisted capacities, we conclude that $\textbf{\textit{P}}_2(\mathcal{E}_\eta^0), \textbf{\textit{Q}}_2(\mathcal{E}_\eta^0)$ are infinite.

The situation for these capacities per unit cost for the quantum-limited amplifier channel $\mathcal{A}^0_\kappa$ of gain $\kappa>1$ is more interesting. 
In this case, we can use the following upper bound from \cite{goodEW16} (see also \cite[Remark~3]{DSW18})
\begin{multline}
   P_2(\mathcal{A}^0_\kappa, \bar n) 
   \le E_\text{sq}(\mathcal{A}^0_\kappa, \bar n) \\
   \le  g( (1+\kappa) \bar n / 2 + (\kappa-1)/2) - 
   g( (\kappa - 1) (\bar n +1) / 2).
\end{multline}
where $E_\text{sq}(\mathcal{A}^0_\kappa, \bar n)$ denotes the energy-constrained squashed entanglement of the channel~\cite{DSW18}. This was used to show that the private capacity in the limit of infinite photon number is finite. For this channel this is true for the private capacity per unit cost as well:
\begin{align}
  & \textbf{\textit{P}}_2(\mathcal{A}^0_\kappa) \nonumber \\
  & = \lim_{\bar n \to 0} \frac{P_2(\mathcal{A}^0_\kappa, \bar n)}{\bar n} \\
  & \le \lim_{\bar n \to 0} \frac{ g( (1+\kappa) \bar n / 2 + (\kappa-1)/2) - 
   g( (\kappa - 1) (\bar n +1) / 2)}{\bar n} \\
  & = \log_2\!\left(\frac{\kappa+1}{\kappa-1}\right).
\end{align}
Note that without any improved upper bounds on $P_2(\mathcal{A}^0_\kappa, \bar n)$, there thus remains a gap between the lower bound from \eqref{eq:cap-cost-amp-q-limited} and the upper bound given above. Also, since private capacity bounds quantum capacity from above, the same conclusions follow for the quantum capacity.

\section{Blocklength-Constrained Capacity Per Unit Cost}
\label{sec:blockConstraint}

In \cref{sec:gaussian}, we find many infinite capacities per unit cost. Although we are able to ascribe these infinities to unphysical assumptions such as unbounded entanglement assistance, we can alternatively ascribe almost any infinite capacity per unit cost to the \textit{absence of a time constraint}. The key is to notice that although \cref{dfn:capacityPerCost} is very similar to \cref{dfn:constrainedCapacity}, there is a significant asymmetry: while the latter has a constraint on the cost divided by blocklength, the former does not have an analogous constraint. Indeed, infinite capacities is not a new phenomenon: it was encountered in the study of continuous variable channels, which was part of the motivation of studying cost constraints in the first place. Hence, \emph{just as cost constraints tamed infinite capacities for continuous variable channels, we likewise expect that blocklength constraints would tame infinite capacities per unit cost}.

Indeed, we can show that the former implies the latter. We first define the blocklength-constrained capacity per unit cost. For concreteness, we  state everything in the unassisted classical communication setting.
\begin{dfn}
  \label{dfn:constrainedCapacityPerCost}
  A non-negative number $\textbf{\textit{R}}$ is an achievable rate per unit cost with blocklength constraint $\alpha \in \R_{> 0}$ if for every $\varepsilon \in (0,1), \delta > 0$, $\exists \, \nu_0 > 0$ such that if  $\nu \ge \nu_0$, $\exists$ an $(n, M, \nu, \varepsilon)$ code such that $\log_2 M > \nu (\textbf{\textit{R}}- \delta)$ and $n \le \nu \alpha$. The capacity per unit cost with blocklength constraint $\alpha$ is the supremum of all achievable rates per unit cost with constraint $\alpha$, denoted as $\textbf{\textit{C}}(\mathcal{N},\alpha)$.
\end{dfn}
Conveniently, we can show that $\textbf{\textit{C}}(\mathcal{N},\alpha)$ has a characterization similar to that of $\textbf{\textit{C}}(\mathcal{N})$. The proof of this theorem uses results from the proof of \cref{thm:generalCCost} and that of the corollary in~\cite{verdu1990channel}.
\begin{thm}
  \label{thm:constrainedCapacityPerCost}
  For blocklength constraint $\alpha > 0$,
  the blocklength-constrained capacity per unit cost of a quantum channel $\mathcal{N}$ is given by
  \begin{equation}
    \textbf{\textit{C}}(\mathcal{N}, \alpha) = \sup_{\beta \ge \frac{1}{\alpha}} \frac{C(\mathcal{N},\beta)}{\beta}. 
  \end{equation}
\end{thm}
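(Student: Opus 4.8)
The plan is to adapt the two halves of the proof of \cref{thm:generalCCost}, this time carrying the blocklength $n$ through every estimate and exploiting the identification $\beta = \nu/n$ between the cost-per-channel-use and the ratio of total cost to blocklength. The central observation is that the blocklength constraint $n \le \nu\alpha$ appearing in \cref{dfn:constrainedCapacityPerCost} is precisely equivalent to $\beta = \nu/n \ge 1/\alpha$. Restricting attention to codes that honor this constraint therefore restricts the attainable cost-per-channel-use to the window $[1/\alpha,\infty)$, and it is exactly this restriction that turns the unconstrained supremum $\sup_{\beta > 0}$ of \cref{thm:generalCCost} into $\sup_{\beta \ge 1/\alpha}$. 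So the entire content of the theorem is to verify that the blocklength constraint and the cost-window constraint match on both the achievability and the converse side.

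For achievability I would fix $\beta \ge 1/\alpha$ and show that $C(\mathcal{N},\beta)/\beta$ is an achievable rate per unit cost with blocklength constraint $\alpha$; taking the supremum over such $\beta$ then yields $\textbf{\textit{C}}(\mathcal{N},\alpha) \ge \sup_{\beta \ge 1/\alpha} C(\mathcal{N},\beta)/\beta$. This follows almost verbatim from the direct part of \cref{thm:generalCCost} combined with the blocklength-tracking corollary of~\cite{verdu1990channel}: starting from a rate $R$ achievable with average cost not exceeding $\beta$, the codes constructed there have blocklength $n = \lfloor \nu/\beta \rfloor$, so that $n \le \nu/\beta \le \nu\alpha$ holds automatically because $\beta \ge 1/\alpha$. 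The same interpolation across the regime $n\beta \le \nu < (n+1)\beta$ used in \cref{thm:generalCCost} delivers $\log_2 M > \nu(R/\beta - \delta)$, and since the constructed code already satisfies the blocklength bound, $R/\beta$ is achievable under the constraint.

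For the converse I would take an arbitrary $(n,M,\nu,\varepsilon)$ code obeying $n \le \nu\alpha$ and set $\beta \equiv \nu/n$, so that $\beta \ge 1/\alpha$. The same data-processing and entropy-continuity bound used in \cref{thm:generalCCost}, together with additivity of the Holevo information, gives $\log_2 M \le n\,C(\mathcal{N},\beta) + f(M,\varepsilon)$, whence
\begin{equation}
  \frac{\log_2 M}{\nu} \le \frac{C(\mathcal{N},\beta)}{\beta} + \frac{f(M,\varepsilon)}{\nu} \le \sup_{\beta' \ge 1/\alpha} \frac{C(\mathcal{N},\beta')}{\beta'} + \frac{f(M,\varepsilon)}{\nu}.
\end{equation}
The crucial point is that $\beta$ can never dip below $1/\alpha$ under the blocklength constraint, so the supremum on the right is legitimately taken over $\beta' \ge 1/\alpha$ rather than over all positive $\beta'$. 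Letting $\nu \to \infty$ exactly as in \cref{thm:generalCCost}, using $\lim_{\varepsilon \to 0} f(M,\varepsilon)=0$, forces any achievable rate to satisfy $\textbf{\textit{R}} \le \sup_{\beta \ge 1/\alpha} C(\mathcal{N},\beta)/\beta$.

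The difficulty here is organizational rather than analytic: no new inequality is needed beyond those in \cref{thm:generalCCost}, and the only genuinely fresh step is checking that the constraint $n \le \nu\alpha$ lines up with the cost window $\beta \ge 1/\alpha$ consistently in both directions. The one place demanding care is the achievability interpolation at the non-integer values $\nu/\beta$: one must confirm that the interpolating code, which retains the blocklength $n = \lfloor \nu/\beta \rfloor$ while $\nu$ increases toward $(n+1)\beta$, still respects $n \le \nu\alpha$ throughout rather than only along the subsequence $\nu = n\beta$. Since $n$ stays fixed while $\nu$ grows over that interval, the bound $n \le \nu/\beta \le \nu\alpha$ is in fact preserved, so this apparent obstacle dissolves upon inspection and the two estimates meet.
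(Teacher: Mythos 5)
Your achievability half coincides with the paper's: both fix $\beta \ge 1/\alpha$, invoke the direct part of \cref{thm:generalCCost} (with the blocklength tracking of the corollary in~\cite{verdu1990channel}), and observe that the codes constructed there satisfy $n \le \nu/\beta \le \nu\alpha$, so the blocklength constraint comes for free. Your converse, however, takes a genuinely different route. You run a direct information-theoretic converse: for any code obeying $n \le \nu\alpha$, set $\beta = \nu/n \ge 1/\alpha$, bound $\log_2 M \le n\, C(\mathcal{N},\beta) + f(M,\varepsilon)$ via data processing, continuity, and additivity, and conclude that the supremum may legitimately be restricted to $\beta \ge 1/\alpha$. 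The paper instead argues operationally: it first proves that any rate $\textbf{\textit{R}}$ achievable per unit cost under blocklength constraint $\alpha$ yields an achievable rate $\textbf{\textit{R}}/\alpha$ per channel use at cost $1/\alpha$ (by choosing $\nu = n/\alpha$), and then derives a contradiction with the very definition of the capacity cost function, since $\textbf{\textit{R}} > \sup_{\beta \ge 1/\alpha} C(\mathcal{N},\beta)/\beta$ would force $\textbf{\textit{R}}/\alpha > C(\mathcal{N},1/\alpha)$. The trade-off is this: your converse yields an explicit finite-$\nu$ outer bound and avoids manipulating codes, but it silently imports the additivity hypothesis of \cref{thm:generalCCost} --- without additivity, the step $\log_2 M \le n\,C(\mathcal{N},\nu/n) + f(M,\varepsilon)$ needs the regularized capacity formula plus superadditivity of $k \mapsto \chi(\mathcal{N}^{\ot k}, k\beta)$ (Fekete's lemma). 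The paper's reduction-plus-contradiction never invokes any formula for $C(\mathcal{N},\beta)$, only its operational definition, which is why \cref{thm:constrainedCapacityPerCost} can be stated for an arbitrary quantum channel; if you want that generality, replace the appeal to additivity accordingly, while under the section's standing additivity assumption your argument is correct as written.
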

\begin{proof}
  We first show achievability. Let $\beta >0$. By definition, we can achieve a rate $C(\mathcal{N},\beta)$ with cost $\beta$. 
  Let $\varepsilon \in (0,1), \delta > 0$. 
  Then, using the direct part in the proof of \cref{thm:generalCCost}, 
  $\exists \, \nu_0$ such that for $\nu \ge \nu_0$ we can find an $(n,M,\nu,\varepsilon)$ code where
  \begin{equation}
    \frac{\log_2 M}{\nu} > \frac{C(\mathcal{N},\beta)}{\beta} - \delta  
  \end{equation}
  and
  \begin{equation}
    n \le \frac{\nu}{\beta}.
  \end{equation}
  Hence, for $\beta \ge 1/\alpha$, we achieve a rate per unit cost $C(\mathcal{N},\beta)/\beta$ with blocklength constraint $\frac{1}{\beta} \le \alpha$.

  We now show the converse. We claim that if we can achieve a rate per unit cost $\textbf{\textit{R}}$ with blocklength constraint $\alpha$, then we can achieve a rate $\textbf{\textit{R}}/\alpha$ at cost $1/\alpha$. 

  To see this, let $\varepsilon \in (0,1), \delta >0$. Then, by \cref{dfn:constrainedCapacityPerCost}, there exists $\nu_0$ such that for all $\nu \ge \nu_0$, there is an $(n,M,\nu, \varepsilon)$ code where $n \le \nu \alpha$ and
\begin{equation}
  \frac{\log_2 M}{\nu} > \textbf{\textit{R}} - \alpha \delta.
\end{equation}
Thus, this is an $(n,M, \nu,\varepsilon)$ code that satisfies
\begin{equation}
  \frac{\log_2 M}{n} \ge \frac{\log_2 M}{\nu \alpha} > \frac{\textbf{\textit{R}}}{\alpha} - \delta.
\end{equation}
Let $n_0 \equiv \nu_0 \alpha$ and $n \ge n_0$. Then, let $\nu = n /\alpha$. This implies $\nu \ge n_0 /\alpha = \nu_0$, so the above code with this $\nu$ achieves a rate per channel use $\textbf{\textit{R}}/\alpha$ and also has an average cost $1/\alpha$. This establishes the claim.

We now proceed by contradiction. Suppose we can achieve a rate per unit cost $\textbf{\textit{R}} > \sup_{\beta \ge \frac{1}{\alpha}} \frac{C(\mathcal{N},\beta)}{\beta}$ with constraint $\alpha$. Then, by the claim we can achieve the following rate with cost $1/\alpha$:
\begin{align}
  \frac{\textbf{\textit{R}}}{\alpha} > \sup_{\beta \ge \frac{1}{\alpha}} \frac{C(\mathcal{N},\beta)}{\beta \alpha} \ge C(\mathcal{N},1/\alpha).
\end{align}
Since this contradicts the definition of the capacity cost function, the converse follows.
\end{proof}
In particular, when there is a zero-cost state, by concavity $C(\mathcal{N},\beta)/\beta$ is monotone non-increasing on $(0, + \infty)$ and so
\begin{equation}
  \label{eq:blocklengthConstrainedZeroCost}
  \textbf{\textit{C}}(\mathcal{N},\alpha) = \alpha C(\mathcal{N},1/\alpha).
\end{equation}
This therefore removes the infinities, assuming that the capacity cost function does not diverge at finite cost. Also, note that \cref{eq:blocklengthConstrainedZeroCost} gives an operational interpretation of the ratio of the cost-constrained capacity to the cost: When there is a zero-cost state, for any $\beta >0$, $C(\mathcal{N}, \beta) /\beta$ is the capacity per unit cost with blocklength constraint $1/\beta$. Furthermore, by inverting \cref{eq:blocklengthConstrainedZeroCost}, we obtain an operational interpretation of the ratio of the blocklength-constrained capacity per unit cost to the blocklength constraint. 

Interestingly, even in the general case where there might not be a zero-cost state, we can express the cost-constrained capacity in terms of the blocklength-constrained capacity per unit cost. Indeed, by the same proof of \cref{thm:constrainedCapacityPerCost} except simply switching the achievability and the converse, we can show that
\begin{equation}
  C(\mathcal{N},\beta) = \sup_{\alpha \ge \frac{1}{\beta}} \frac{\textbf{\textit{C}}(\mathcal{N},\alpha)}{\alpha}.
\end{equation}
This expression and \cref{thm:constrainedCapacityPerCost} therefore establish a duality between the two quantities, which we can refer to as the \textit{blocklength-cost duality}. We can find many instances of this duality by exchanging ``blocklength'' with ``cost'' and vice versa.

\subsection{The Infinite Capacity per Unit Photon of the Pure-loss Bosonic Channel}
In the limit $n_\text{th} \to 0$, the thermal channel reduces to the pure-loss bosonic channel $\mathcal{L}_\eta$~\cite{giovannetti2004classical}, which has classical capacity
\begin{equation}
  C(\mathcal{L}_\eta, \bar n) = g(\eta \bar n).
\end{equation}
With a simple calculation we find that $\textbf{\textit{C}}(\mathcal{L}_\eta)$ is infinite, an observation made in~\cite{guha2011quantum} and earlier in the context of infinite bandwidth~\cite{lebedev1966information,caves1994quantum,yuen1975optimum}\footnote{For a connection between infinite bandwidth capacity and capacity per unit cost, see~\cite{verdu1990channel}.}. This is a uniquely quantum phenomenon as there is no direct analogue of the pure-loss channel in classical information theory, and this therefore raised the debate of whether quantum mechanics somehow unlocks the ability to achieve arbitrarily high rates of communication. Our answer to this is negative, and as mentioned above, we propose a solution to this and other infinite capacities per unit cost by introducing a blocklength constraint. In particular, just as not having a photon number constraint led to diverging capacities of bosonic channels, not having a blocklength constraint can lead to diverging capacities per unit cost. The unphysical assumption in the former is that infinite input power is available, while that of the latter is that infinite \emph{time} is available. 

This observation motivates considering a composite cost observable given by the sum of time and energy:
\begin{align}
  G \equiv I + \hat n.
\end{align}
This way, we can remove both unphysical assumptions at once by effectively constraining both time \textit{and} energy. And we indeed find that with respect to this cost observable, the capacity per unit cost is finite. To see this, let $\mathcal{L}_\eta$ be for instance the pure-loss bosonic channel. Then,
\begin{align}
  \textbf{\textit{C}} (\mathcal{L}_\eta) & = \sup_{\beta > 0} \frac{C(\mathcal{L}_\eta, \beta)}{\beta} \nonumber \\
  & = \sup_{\beta > 1} \frac{C_{G= \hat n}(\mathcal{L}_\eta, \beta-1)}{\beta} \nonumber \\
  & = \sup_{\beta > 1} \frac{g(\eta (\beta-1))}{\beta}, \label{eq:lossLimited}
\end{align}
where we take the supremum over $\beta >1$ since otherwise the numerator is zero, and 
\begin{equation}
C_{G = I+ \hat n}(\mathcal{L}_\eta, \beta) = C_{G = \hat n}(\mathcal{L}_\eta, \beta-1)
\end{equation}
since any protocol with average cost $\beta$ with respect to cost $I + \hat n$ is also a protocol with average cost $\beta-1$ with respect to $\hat n$. Now, 
\begin{align}
  \lim_{\beta \to 1} \frac{g(\eta (\beta-1))}{\beta} & = 0,\\
  \lim_{\beta \to \infty} \frac{g(\eta (\beta-1))}{\beta} & = \lim_{\beta \to \infty} \frac{\eta \log\left(1+ \frac{1}{\eta(\beta-1)}\right)}{1} = 0.
\end{align}
Furthermore,~\cref{eq:lossLimited} is continuous for $\beta \in (1, \infty)$. Hence, it is bounded and thus $\textbf{\textit{C}}(\mathcal{L}_\eta)$ with respect to the cost observable $I + \hat n$ is finite as claimed. Thus, if we have constraints on both time and energy, the capacity is finite. This is superficially reminiscent of the time-energy uncertainty principle in quantum mechanics: just as the individual precisions of time and energy can be unbounded but their product cannot, the capacities with respect to time and energy can be unbounded but the capacity with respect to their sum cannot.

Another way to make sense of the infinite capacity per unit cost of the pure-loss channel is to look at $\textbf{\textit{C}}(\mathcal{E}^{n_\text{th}}_\eta)$ for small $n_\text{th}$:
\begin{align}
  \textbf{\textit{C}}(\mathcal{E}_\eta^{n_\text{th}})&= -\eta \log_2(n_\text{th}(1-\eta)) + \mathcal{O}(n_\text{th}). \label{eq:capacity_per_unit_cost_thermal_channel_small_noise_series}
\end{align}
This is finite for any $n_\text{th} > 0$, but diverges as $-\log_2(n_\text{th})$ as $n_\text{th} \to 0$. Since $n_\text{th}$ physically corresponds to temperature, this suggests that the infinity comes about since zero temperature is unphysical. It also implies that the capacity per unit cost can be increased arbitrarily by going to lower temperatures. We find that a toy model of a classical binary channel reproduces this qualitative difference between the zero and non-zero temperature cases and the logarithmic scaling of the capacity per unit cost in the noise parameter of the channel:
\begin{align}
\begin{tikzpicture}[baseline=(current bounding box.center)]
    \node (A) at (0,0) {$0$};
    \node (B) at (0,3) {$1$};
    \node (C) at (3,0) {$0$};
    \node (D) at (3,3) {$1$};
    \path[->]
    (A) edge node[below] {$1-\delta$} (C)
    (B) edge node[above,sloped, near start] {$\epsilon$} (C)
    (A) edge node[above,sloped, near start] {$\delta$} (D)
    (B) edge node[above] {$1-\epsilon$} (D);
\end{tikzpicture}
    \quad
\approx
\quad
\begin{tikzpicture}[baseline=(current bounding box.center)]
    \node (A) at (0,0) {$|0\rangle \langle 0|$};
    \node (B) at (0,3) {$|\alpha\rangle \langle \alpha|$};
    \node (C) at (3,0) {$0$};
    \node (D) at (3,3) {$1$};
    \path[->]
    (A) edge node[below] {$1-\delta(n_\text{th})$} (C)
    (B) edge node[above,sloped, near start] {$e^{-|\alpha|^2}$} (C)
    (A) edge node[above,sloped, near start] {$\delta(n_\text{th})$} (D)
    (B) edge node[above] {$1-e^{-|\alpha|^2}$} (D);
\end{tikzpicture}
.
\label{eq:toy_model}
\end{align}
On the left in (\ref{eq:toy_model}) is the binary channel $\mathcal{B}_{\epsilon,\delta}$ with crossover probabilities $\epsilon$ and $\delta$, and with input `0' having cost $0$, input `1' having cost $1$. On the right in (\ref{eq:toy_model}) is a binary channel induced by a PPM scheme for the thermal channel: the sender sends either the vacuum state $|0 \rangle \langle 0|$ at cost $0$ or the coherent state $|\alpha \rangle \langle \alpha|$ at cost $|\alpha|^2$, and the detector is a photon counter that measures either a click (`1') or no click (`0'), with crossover probabilities shown on the diagram.
A short computation shows that the capacity per unit cost for the binary channel $\mathcal{B}_{\epsilon,\delta}$ is:
\begin{equation}
    \textbf{\textit{C}}(\mathcal{B}_{\epsilon,\delta}) = -(1-\epsilon)\log_2(\delta)-\epsilon\log_2(1-\delta)-h(\epsilon), \label{eq:capacity_per_unit_cost_binary_channel}
\end{equation}
where $h(x) \equiv -x \log_2(x) - (1-x) \log_2(1-x)$ is the binary entropy function. Thus the capacity per unit cost of the classical binary channel $\mathcal{B}_{\epsilon,\delta}$ diverges as $-\log_2(\delta)$ as $\delta \rightarrow 0$, analogous to the behavior of (\ref{eq:capacity_per_unit_cost_thermal_channel_small_noise_series}) as $n_\text{th} \rightarrow 0$.

\section{Discussion}

In this paper we generalized the notion of capacity per unit cost to quantum channels for the tasks of classical communication, entanglement-assisted classical communication, private communication, and quantum communication.
There are some simple extensions of our results that hold but for which we do not provide details. 
First, for private capacity per unit cost, we could also consider, as in \cite{1050633}, the more general case of a quantum wiretap channel $\mathcal{N}_{A \to BE}$ where $\mathcal{N}$ is not necessarily an isometry. The results we found in Section~\ref{sec:priv-comm} apply directly to the more general case of a degraded quantum wiretap channel.

Another is to consider non-additive quantum channels, for which we will find regularized expressions of the formulas given in \cref{thm:generalCCost}, \cref{thm:zero-cost-rel-ent-CC}, \cref{thm:generalPCost}, and \cref{thm:generalQCost}. For example, for the regularized classical capacity per unit cost we obtain
\begin{equation}
  \textbf{\textit{C}}(\mathcal{N}) = \lim_{n \to \infty} \sup_{\beta >0} \frac{C(\mathcal{N}^{\ot n}, n \beta)}{n \beta}.
\end{equation}
This expression motivates an open question we raise: can additivity depend on cost constraint? For example, can the cost-constrained Holevo information $\chi(\mathcal{N}, \beta)$ be additive for some values of $\beta$ but not others? This is trivially true if we pick some non-additive channel and compare $\beta= 0$ with $\beta = \infty$, but it would be fascinating to find examples where there is some non-trivial dependence.

We think there are a number of interesting directions for future work. One direction  is to consider non-degradable quantum channels with a zero-cost state. As discussed earlier, a lower bound for the private capacity per unit cost in this case is given by our private PPM scheme and for classical channels in~\cite{el2013secrecy}, but the converse remains open. It would also be interesting to give an expression for the quantum capacity per unit cost in terms of an optimized relative entropy. Another possible direction is to prove a strong converse for the capacity per unit cost. However, as was shown in~\cite{bardhan2014strong}, the strong converse holds for Gaussian channels with an approximate peak cost constraint but not an average cost constraint. Hence, for Gaussian channels we cannot directly prove a strong converse for the classical capacity per unit cost in the same way as in \cref{thm:generalCCost}. Lastly, we can consider the regime of finite blocklength or finite-sized measurement blocks and see how much information can be sent per cost in this context.

In general, we can find a myriad of new directions by extensive use of the blocklength-cost duality. For instance, instead of one-shot capacities we could consider one-photon capacities, or more generally, ``unit-cost'' capacities. Instead of finite blocklength analyses of information processing tasks, we could perform finite cost analyses. The reverse substitution is also interesting. As we saw above, rather than cost-constrained capacities, we can consider blocklength-constrained capacities per unit cost. For probabilistic protocols or settings with feedback, we could imagine having an expected blocklength constraint instead of an expected cost constraint. Indeed, blocklength-cost duality allows us to take almost any question in information theory and ask its dual question by simply exchanging ``blocklength'' and ``cost.''

\bigskip

\noindent \textbf{Acknowledgements.} We are grateful to Emmanuel Abbe, Saikat Guha, Patrick Hayden, Alexander Holevo, Hideo Mabuchi, Sergio Verdu, and Tsachy Weissman for helpful conversations about this work. DD acknowledges support from the Stanford Graduate Fellowship program. DSP acknowledges support from the Army Research Office under grant number W911NF-16-1-0086. MMW acknowledges support from the US National Science Foundation and the Office of Naval Research. 

\bibliography{Ref}

\begin{thebibliography}{10}
\providecommand{\url}[1]{#1}
\csname url@samestyle\endcsname
\providecommand{\newblock}{\relax}
\providecommand{\bibinfo}[2]{#2}
\providecommand{\BIBentrySTDinterwordspacing}{\spaceskip=0pt\relax}
\providecommand{\BIBentryALTinterwordstretchfactor}{4}
\providecommand{\BIBentryALTinterwordspacing}{\spaceskip=\fontdimen2\font plus
\BIBentryALTinterwordstretchfactor\fontdimen3\font minus
  \fontdimen4\font\relax}
\providecommand{\BIBforeignlanguage}[2]{{%
\expandafter\ifx\csname l@#1\endcsname\relax
\typeout{** WARNING: IEEEtran.bst: No hyphenation pattern has been}%
\typeout{** loaded for the language `#1'. Using the pattern for}%
\typeout{** the default language instead.}%
\else
\language=\csname l@#1\endcsname
\fi
#2}}
\providecommand{\BIBdecl}{\relax}
\BIBdecl

\bibitem{book1991cover}
T.~M. Cover and J.~A. Thomas, \emph{Elements of Information Theory},
  2nd~ed.\hskip 1em plus 0.5em minus 0.4em\relax Wiley-Interscience, 2006.

\bibitem{pierce1981capacity}
J.~Pierce, E.~Posner, and E.~Rodemich, ``The capacity of the photon counting
  channel,'' \emph{IEEE Transactions on Information Theory}, vol.~27, no.~1,
  pp. 61--77, January 1981.

\bibitem{yin2017satellite}
J.~Yin, Y.~Cao, Y.-H. Li, S.-K. Liao, L.~Zhang, J.-G. Ren, W.-Q. Cai, W.-Y.
  Liu, B.~Li, H.~Dai \emph{et~al.}, ``Satellite-based entanglement distribution
  over 1200 kilometers,'' \emph{Science}, vol. 356, no. 6343, pp. 1140--1144,
  2017.

\bibitem{liao2017satellite}
S.-K. Liao, W.-Q. Cai, W.-Y. Liu, L.~Zhang, Y.~Li, J.-G. Ren, J.~Yin, Q.~Shen,
  Y.~Cao, Z.-P. Li \emph{et~al.}, ``Satellite-to-ground quantum key
  distribution,'' \emph{Nature}, vol. 549, no. 7670, p.~43, 2017.

\bibitem{reza1961introduction}
F.~M. Reza, \emph{An introduction to information theory}.\hskip 1em plus 0.5em
  minus 0.4em\relax Courier Corporation, 1961.

\bibitem{shannon1948mathematical}
C.~E. Shannon, ``A mathematical theory of communication,'' \emph{Bell System
  Technical Journal}, vol.~27, pp. 379--423, and 623--656, 1948.

\bibitem{pierce1978optical}
J.~Pierce, ``Optical channels: Practical limits with photon counting,''
  \emph{IEEE Transactions on Communications}, vol.~26, no.~12, pp. 1819--1821,
  December 1978.

\bibitem{golay1949note}
M.~J.~E. Golay, ``Note on the theoretical efficiency of information reception
  with {PPM},'' \emph{Proceedings of the Institute of Radio Engineers},
  vol.~37, no.~9, pp. 1031--1031, 1949.

\bibitem{gallager87energylimited}
R.~G. Gallager, ``Energy limited channels: Coding, multiaccess, and spread
  spectrum,'' Tech. Rep., 1987.

\bibitem{GHT11}
\BIBentryALTinterwordspacing
S.~Guha, J.~L. Habif, and M.~Takeoka, ``Approaching {Helstrom} limits to
  optical pulse-position demodulation using single photon detection and optical
  feedback,'' \emph{Journal of Modern Optics}, vol.~58, no. 3-4, pp. 257--265,
  December 2011, arXiv:1001.2447. [Online]. Available:
  \url{http://dx.doi.org/10.1080/09500340.2010.533204}
\BIBentrySTDinterwordspacing

\bibitem{guha2011structured}
S.~Guha, ``Structured optical receivers to attain superadditive capacity and
  the {Holevo} limit,'' \emph{Physical Review Letters}, vol. 106, no.~24, p.
  240502, June 2011, arXiv:1101.1550.

\bibitem{guha2011quantum}
S.~Guha, Z.~Dutton, and J.~H. Shapiro, ``On quantum limit of optical
  communications: concatenated codes and joint-detection receivers,'' in
  \emph{Proceedings of the 2011 IEEE International Symposium on Information
  Theory}.\hskip 1em plus 0.5em minus 0.4em\relax IEEE, 2011, pp. 274--278,
  arXiv:1102.1963.

\bibitem{DBEM11}
S.~Dolinar, K.~M. Birnbaum, B.~I. Erkmen, and B.~Moision, ``On approaching the
  ultimate limits of photon-efficient and bandwidth-efficient optical
  communication,'' in \emph{2011 International Conference on Space Optical
  Systems and Applications (ICSOS)}, May 2011, pp. 269--278, arXiv:1104.2643.

\bibitem{verdu1990channel}
S.~Verdu, ``On channel capacity per unit cost,'' \emph{IEEE Transactions on
  Information Theory}, vol.~36, no.~5, pp. 1019--1030, September 1990.

\bibitem{csiszar2007limit}
I.~Csisz{\'a}r, F.~Hiai, and D.~Petz, ``Limit relation for quantum entropy and
  channel capacity per unit cost,'' \emph{Journal of Mathematical Physics},
  vol.~48, no.~9, p. 092102, September 2007, arXiv:0704.0046.

\bibitem{holevo2003entanglement}
A.~S. Holevo, ``Entanglement-assisted capacity of constrained channels,'' in
  \emph{First International Symposium on Quantum Informatics}.\hskip 1em plus
  0.5em minus 0.4em\relax International Society for Optics and Photonics, 2003,
  pp. 62--69, arXiv:quant-ph/0211170.

\bibitem{H04}
\BIBentryALTinterwordspacing
------, ``Entanglement-assisted capacities of constrained quantum channels,''
  \emph{Theory of Probability \& Its Applications}, vol.~48, no.~2, pp.
  243--255, July 2004, arXiv:quant-ph/0211170. [Online]. Available:
  \url{http://dx.doi.org/10.1137/S0040585X97980415}
\BIBentrySTDinterwordspacing

\bibitem{S02}
P.~W. Shor, ``Additivity of the classical capacity of entanglement-breaking
  quantum channels,'' \emph{Journal of Mathematical Physics}, vol.~43, no.~9,
  pp. 4334--4340, 2002, arXiv:quant-ph/0201149.

\bibitem{Shirokov2006}
\BIBentryALTinterwordspacing
M.~E. Shirokov, ``The {Holevo} capacity of infinite dimensional channels and
  the additivity problem,'' \emph{Communications in Mathematical Physics}, vol.
  262, no.~1, pp. 137--159, February 2006, arXiv:quant-ph/0408009. [Online].
  Available: \url{http://dx.doi.org/10.1007/s00220-005-1457-8}
\BIBentrySTDinterwordspacing

\bibitem{holevo2008entanglement}
A.~S. Holevo, ``Entanglement-breaking channels in infinite dimensions,''
  \emph{Problems of Information Transmission}, vol.~44, no.~3, pp. 171--184,
  2008, arXiv:0802.0235.

\bibitem{U62}
H.~Umegaki, ``Conditional expectations in an operator algebra {IV} (entropy and
  information),'' \emph{Kodai Mathematical Seminar Reports}, vol.~14, no.~2,
  pp. 59--85, 1962.

\bibitem{J17}
M.~Jarzyna, ``Classical capacity per unit cost for quantum channels,'' April
  2017, arXiv:1704.06696.

\bibitem{stinespring1955positive}
W.~F. Stinespring, ``Positive functions on {C}*-algebras,'' \emph{Proceedings
  of the American Mathematical Society}, vol.~6, no.~2, pp. 211--216, 1955.

\bibitem{paulsen2002completely}
V.~Paulsen, \emph{Completely bounded maps and operator algebras}.\hskip 1em
  plus 0.5em minus 0.4em\relax Cambridge University Press, 2002, vol.~78.

\bibitem{holevo2013quantum}
A.~S. Holevo, \emph{Quantum systems, channels, information: a mathematical
  introduction}.\hskip 1em plus 0.5em minus 0.4em\relax Walter de Gruyter,
  2012, vol.~16.

\bibitem{giovannetti2004classical}
V.~Giovannetti, S.~Guha, S.~Lloyd, L.~Maccone, J.~H. Shapiro, and H.~P. Yuen,
  ``Classical capacity of the lossy bosonic channel: The exact solution,''
  \emph{Physical Review Letters}, vol.~92, no.~2, p. 027902, January 2004,
  arXiv:quant-ph/0308012.

\bibitem{GHG15}
V.~Giovannetti, A.~S. Holevo, and R.~Garcia-Patron,
  ``\BIBforeignlanguage{English}{A solution of {Gaussian} optimizer conjecture
  for quantum channels},'' \emph{\BIBforeignlanguage{English}{Communications in
  Mathematical Physics}}, vol. 334, no.~3, pp. 1553--1571, March 2015,
  arXiv:1312.2251.

\bibitem{giovannetti2014ultimate}
V.~Giovannetti, R.~Garcia-Patron, N.~J. Cerf, and A.~S. Holevo, ``Ultimate
  classical communication rates of quantum optical channels,'' \emph{Nature
  Photonics}, vol.~8, no.~10, pp. 796--800, September 2014, arXiv:1312.6225.

\bibitem{hiai1991proper}
F.~Hiai and D.~Petz, ``The proper formula for relative entropy and its
  asymptotics in quantum probability,'' \emph{Communications in Mathematical
  Physics}, vol. 143, no.~1, pp. 99--114, December 1991.

\bibitem{ogawa2000strong}
T.~Ogawa and H.~Nagaoka, ``Strong converse and {Stein's} lemma in quantum
  hypothesis testing,'' \emph{IEEE Transactions on Information Theory},
  vol.~46, no.~7, pp. 2428--2433, November 2000, arXiv:quant-ph/9906090.

\bibitem{schumacher1997sending}
B.~Schumacher and M.~D. Westmoreland, ``Sending classical information via noisy
  quantum channels,'' \emph{Physical Review A}, vol.~56, no.~1, p. 131, July
  1997.

\bibitem{holevo1998capacity}
A.~S. Holevo, ``The capacity of the quantum channel with general signal
  states,'' \emph{IEEE Transactions on Information Theory}, vol.~44, no.~1, pp.
  269--273, January 1998, arXiv:quant-ph/9611023.

\bibitem{wilde2013quantum}
M.~M. Wilde, \emph{Quantum information theory}, 2nd~ed.\hskip 1em plus 0.5em
  minus 0.4em\relax Cambridge University Press, 2017, arXiv:1106.1445.

\bibitem{winter2016tight}
A.~Winter, ``Tight uniform continuity bounds for quantum entropies: Conditional
  entropy, relative entropy distance and energy constraints,''
  \emph{Communications in Mathematical Physics}, vol.~1, no. 347, pp. 291--313,
  October 2016, arXiv:1507.07775.

\bibitem{yuen1993ultimate}
H.~P. Yuen and M.~Ozawa, ``Ultimate information carrying limit of quantum
  systems,'' \emph{Physical Review Letters}, vol.~70, no.~4, p. 363, January
  1993.

\bibitem{anshu2017one}
A.~Anshu, R.~Jain, and N.~A. Warsi, ``One shot entanglement assisted classical
  and quantum communication over noisy quantum channels: A hypothesis testing
  and convex split approach,'' February 2017, arXiv:1702.01940.

\bibitem{itit1999winter}
A.~Winter, ``Coding theorem and strong converse for quantum channels,''
  \emph{IEEE Transactions on Information Theory}, vol.~45, no.~7, pp.
  2481--2485, November 1999, arXiv:1409.2536.

\bibitem{thesis1999winter}
------, ``Coding theorems of quantum information theory,'' Ph.D. dissertation,
  Universit\"at Bielefeld, July 1999, arXiv:quant-ph/9907077.

\bibitem{devetak2005private}
I.~Devetak, ``The private classical capacity and quantum capacity of a quantum
  channel,'' \emph{IEEE Transactions on Information Theory}, vol.~51, no.~1,
  pp. 44--55, January 2005, arXiv:quant-ph/0304127.

\bibitem{1050633}
N.~Cai, A.~Winter, and R.~W. Yeung, ``Quantum privacy and quantum wiretap
  channels,'' \emph{Problems of Information Transmission}, vol.~40, no.~4, pp.
  318--336, October 2004.

\bibitem{wilde2016energy}
M.~M. Wilde and H.~Qi, ``Energy-constrained private and quantum capacities of
  quantum channels,'' September 2016, arXiv:1609.01997.

\bibitem{cmp2005dev}
I.~Devetak and P.~W. Shor, ``The capacity of a quantum channel for simultaneous
  transmission of classical and quantum information,'' \emph{Communications in
  Mathematical Physics}, vol. 256, no.~2, pp. 287--303, June 2005,
  arXiv:quant-ph/0311131.

\bibitem{el2013secrecy}
M.~El-Halabi, T.~Liu, and C.~N. Georghiades, ``Secrecy capacity per unit
  cost,'' \emph{IEEE Journal on Selected Areas in Communications}, vol.~31,
  no.~9, pp. 1909--1920, September 2013.

\bibitem{KRBM07}
\BIBentryALTinterwordspacing
R.~K\"onig, R.~Renner, A.~Bariska, and U.~Maurer, ``Small accessible quantum
  information does not imply security,'' \emph{Physical Review Letters},
  vol.~98, p. 140502, April 2007, arXiv:quant-ph/0512021. [Online]. Available:
  \url{http://link.aps.org/doi/10.1103/PhysRevLett.98.140502}
\BIBentrySTDinterwordspacing

\bibitem{shirokov2016lower}
M.~E. Shirokov and A.~S. Holevo, ``Lower semicontinuity of the entropic
  disturbance and its applications in quantum information theory,''
  \emph{Izvestiya: Mathematics}, vol.~81, no.~5, pp. 1044--1060, 2017,
  arXiv:1608.02203.

\bibitem{Lindblad1975}
G.~Lindblad, ``Completely positive maps and entropy inequalities,''
  \emph{Communications in Mathematical Physics}, vol.~40, no.~2, pp. 147--151,
  June 1975.

\bibitem{D09}
N.~Datta, ``Min- and max-relative entropies and a new entanglement monotone,''
  \emph{IEEE Transactions on Information Theory}, vol.~55, no.~6, pp.
  2816--2826, June 2009, arXiv:0803.2770.

\bibitem{U76}
A.~Uhlmann, ``The ``transition probability'' in the state space of a
  *-algebra,'' \emph{Reports on Mathematical Physics}, vol.~9, no.~2, pp.
  273--279, 1976.

\bibitem{tomamichel2015quantum}
M.~Tomamichel, \emph{Quantum Information Processing with Finite Resources:
  Mathematical Foundations}, ser. Springer Briefs in Mathematical
  Physics.\hskip 1em plus 0.5em minus 0.4em\relax Springer, 2016, vol.~5,
  arXiv:1504.00233.

\bibitem{PhysRevA.54.2629}
B.~Schumacher and M.~A. Nielsen, ``Quantum data processing and error
  correction,'' \emph{Physical Review A}, vol.~54, no.~4, pp. 2629--2635,
  October 1996, arXiv:quant-ph/9604022.

\bibitem{PhysRevA.55.1613}
S.~Lloyd, ``Capacity of the noisy quantum channel,'' \emph{Physical Review A},
  vol.~55, no.~3, pp. 1613--1622, March 1997, arXiv:quant-ph/9604015.

\bibitem{BKN98}
H.~Barnum, E.~Knill, and M.~Nielsen, ``On quantum fidelities and channel
  capacities,'' \emph{IEEE Transactions on Information Theory}, vol.~46, no.~4,
  pp. 1317--1329, July 2000, arXiv:quant-ph/9809010.

\bibitem{BNS98}
H.~Barnum, M.~A. Nielsen, and B.~Schumacher, ``Information transmission through
  a noisy quantum channel,'' \emph{Physical Review A}, vol.~57, no.~6, pp.
  4153--4175, June 1998, arXiv:quant-ph/9702049.

\bibitem{capacity2002shor}
P.~W. Shor, ``The quantum channel capacity and coherent information,'' in
  \emph{Lecture Notes, MSRI Workshop on Quantum Computation}, 2002.

\bibitem{S08}
G.~Smith, ``Private classical capacity with a symmetric side channel and its
  application to quantum cryptography,'' \emph{Physical Review A}, vol.~78,
  no.~2, p. 022306, August 2008, arXiv:0705.3838.

\bibitem{DW04}
I.~Devetak and A.~Winter, ``Relating quantum privacy and quantum coherence: An
  operational approach,'' \emph{Physical Review Letters}, vol.~93, no.~8, p.
  080501, August 2004, arXiv:quant-ph/0307053.

\bibitem{wang2012one}
L.~Wang and R.~Renner, ``One-shot classical-quantum capacity and hypothesis
  testing,'' \emph{Physical Review Letters}, vol. 108, no.~20, p. 200501, May
  2012, arXiv:1007.5456.

\bibitem{buscemi2010quantum}
F.~Buscemi and N.~Datta, ``The quantum capacity of channels with arbitrarily
  correlated noise,'' \emph{IEEE Transactions on Information theory}, vol.~56,
  no.~3, pp. 1447--1460, March 2010, arXiv:0902.0158.

\bibitem{datta2016second}
N.~Datta, M.~Tomamichel, and M.~M. Wilde, ``On the second-order asymptotics for
  entanglement-assisted communication,'' \emph{Quantum Information Processing},
  vol.~15, no.~6, pp. 2569--2591, June 2016, arXiv:1405.1797.

\bibitem{holevo2002sending}
A.~S. Holevo, ``Sending quantum information with {Gaussian} states,'' in
  \emph{Quantum Communication, Computing, and Measurement}.\hskip 1em plus
  0.5em minus 0.4em\relax Springer, 2002, pp. 75--82, arXiv preprint
  quant-ph/9809022.

\bibitem{holevo2001evaluating}
A.~S. Holevo and R.~F. Werner, ``Evaluating capacities of bosonic {Gaussian}
  channels,'' \emph{Physical Review A}, vol.~63, no.~3, p. 032312, February
  2001, arXiv:quant-ph/9912067.

\bibitem{takeoka2014capacity}
M.~Takeoka and S.~Guha, ``Capacity of optical communication in loss and noise
  with general quantum {Gaussian} receivers,'' \emph{Physical Review A},
  vol.~89, no.~4, p. 042309, April 2014, arXiv:1401.5132.

\bibitem{holevo2012quantum}
A.~S. Holevo and V.~Giovannetti, ``Quantum channels and their entropic
  characteristics,'' \emph{Reports on progress in physics}, vol.~75, no.~4, p.
  046001, April 2012, arXiv:1202.6480.

\bibitem{shirokov2012conditions}
M.~E. Shirokov, ``Conditions for coincidence of the classical capacity and
  entanglement-assisted capacity of a quantum channel,'' \emph{Problems of
  Information Transmission}, vol.~48, no.~2, pp. 85--101, 2012,
  arXiv:1202.3449.

\bibitem{WPG07}
M.~M. Wolf, D.~P\'erez-Garc\'\i{}a, and G.~Giedke, ``Quantum capacities of
  bosonic channels,'' \emph{Physical Review Letters}, vol.~98, no.~13, p.
  130501, March 2007, arXiv:quant-ph/0606132.

\bibitem{DSW18}
N.~Davis, M.~E. Shirokov, and M.~M. Wilde, ``Energy-constrained two-way
  assisted private and quantum capacities of quantum channels,'' \emph{Physical
  Review A}, vol.~97, no.~6, p. 062310, Jun. 2018, arXiv:1801.08102.

\bibitem{takeoka2015fundamental}
M.~Takeoka, S.~Guha, and M.~M. Wilde, ``Fundamental rate-loss tradeoff for
  optical quantum key distribution,'' \emph{Nature Communications}, vol.~5, p.
  5235, 2014, arXiv:1504.06390.

\bibitem{goodEW16}
K.~Goodenough, D.~Elkouss, and S.~Wehner, ``Assessing the performance of
  quantum repeaters for all phase-insensitive {Gaussian} bosonic channels,''
  \emph{New Journal of Physics}, vol.~18, p. 063005, 2016, arXiv:1511.08710v2.

\bibitem{lebedev1966information}
D.~S. Lebedev and L.~B. Levitin, ``Information transmission by electromagnetic
  field,'' \emph{Information and Control}, vol.~9, no.~1, pp. 1--22, 1966.

\bibitem{caves1994quantum}
C.~M. Caves and P.~D. Drummond, ``Quantum limits on bosonic communication
  rates,'' \emph{Reviews of Modern Physics}, vol.~66, no.~2, p. 481, April
  1994.

\bibitem{yuen1975optimum}
H.~Yuen, R.~Kennedy, and M.~Lax, ``Optimum testing of multiple hypotheses in
  quantum detection theory,'' \emph{IEEE Transactions on Information Theory},
  vol.~21, no.~2, pp. 125--134, March 1975.

\bibitem{bardhan2014strong}
B.~R. Bardhan, R.~Garcia-Patron, M.~M. Wilde, and A.~Winter, ``Strong converse
  for the classical capacity of all phase-insensitive bosonic {Gaussian}
  channels,'' \emph{IEEE Transactions on Information Theory}, vol.~61, no.~4,
  pp. 1842--1850, April 2015, arXiv:1401.4161.

\end{thebibliography}
\end{document}